\newcommand{\stkout}[1]{\ifmmode\text{\sout{\ensuremath{#1}}}\else\sout{#1}\fi}
\def \tildepi{\tilde{\pi}}
\newcommand{\R}{\ensuremath{\mathbb{R}}}
\newcommand{\E}{\ensuremath{\mathbb{E}}}
\newcommand{\Pisg}{\Pi_{\mathrm{Single}}}
\newcommand{\Piml}{\Pi_{\text{Multi}}}
\newcommand{\Piol}{\Pi_{\text{Overlap}}}
\def\super{\textsuperscript}
\newcommand{\cir}[1]{\tikz[baseline]{%
    \node[anchor=base, draw, circle, inner sep=0, minimum width=0.4em]{#1};}}
\theoremstyle{definition}
\newtheorem*{theorem*}{Theorem}
\newtheorem{assumption}{Assumption}
\newtheorem{remark}{Remark}
\newtheorem{theorem}{Theorem}
\newtheorem{lemma}{Lemma}
\newtheorem{definition}{Definition}
\newtheorem{proposition}{Proposition}
\newtheorem{Algorithm}{Algorithm}
\def\super{\textsuperscript}
\newcommand\cmnt[2]{\;
{\textcolor{orange}{[{\em #1 --- #2}] \;}
}}
\newcommand\cmntyi[2]{\;
{\textcolor{purple}{[{\em #1 --- #2}] \;}
}}
\newcommand\rev[1]{{\textcolor{purple}{#1}}}
\newcommand\eli[1]{\cmnt{#1}{Eli}}
\newcommand\yi[1]{\cmntyi{#1}{Yi}}
\begin{document}
\title{\bf Safe Policy Learning under Regression Discontinuity Designs with Multiple Cutoffs\thanks{Imai acknowledges partial support from the Sloan Foundation (Economics Program; 2020-13946).  We thank Tatiana Velasco and her co-authors for sharing the dataset analyzed in this paper.  We have benefited from the Alexander and Diviya Magaro Peer Pre-Review Program.}}
\author{Yi Zhang\thanks{Ph.D. Student, Department of Statistics, Harvard University. 1 Oxford Street, Cambridge MA 02138. Email:\href{mailto:yi_zhang@fas.harvard.edu}{ yi$\_$zhang@fas.harvard.edu}
} \quad\quad Eli Ben-Michael\thanks{Assistant Professor, Department of Statistics \& Data Science and Heinz College of Information Systems \& Public Policy, Carnegie Mellon University. 4800 Forbes Avenue, Hamburg Hall, Pittsburgh PA 15213.  Email: \href{mailto:ebenmichael@cmu.edu}{ebenmichael@cmu.edu} URL:
      \href{https://ebenmichael.github.io}{ebenmichael.github.io}} \quad\quad Kosuke Imai\thanks{Professor, Department of Government and Department of
      Statistics, Harvard University.  1737 Cambridge Street,
      Institute for Quantitative Social Science, Cambridge MA 02138.
      Email: \href{mailto:imai@harvard.edu}{imai@harvard.edu} URL:
      \href{https://imai.fas.harvard.edu}{https://imai.fas.harvard.edu}}}
\date{\today}

\maketitle

\etocdepthtag.toc{mtchapter}
\etocsettagdepth{mtchapter}{subsection}
\etocsettagdepth{mtappendix}{none}

\begin{abstract}
  The regression discontinuity (RD) design is widely used for program evaluation with observational data. 
  The primary focus of the existing literature has been estimating the local average treatment effect at the existing treatment cutoff.
  In contrast, we consider policy learning under the RD design.
  Because the treatment assignment mechanism is deterministic, learning better treatment cutoffs requires extrapolation. 
  We develop a robust optimization approach to finding optimal treatment cutoffs that improve upon the existing ones.
  We first decompose the expected utility into point-identifiable and unidentifiable components.
  We then propose an efficient doubly-robust estimator for the identifiable parts.
  To account for the unidentifiable components, we leverage the existence of multiple cutoffs that are common under the RD design.
  Specifically, we assume that the heterogeneity in the conditional expectations of potential outcomes across different groups vary smoothly along the running variable.
  Under this assumption, we minimize the worst case utility loss relative to the status quo policy. 
  The resulting new treatment cutoffs have a safety guarantee that they will not yield a worse overall outcome than the existing cutoffs. 
  Finally, we establish the asymptotic regret bounds for the learned policy using semi-parametric efficiency theory.
  We apply the proposed methodology to empirical and simulated data sets.

\bigskip
\noindent {\bf Keywords:} extrapolation, doubly robust estimation, partial identification, robust optimization

\end{abstract}


\singlespacing




\newpage
\section{Introduction}

The regression discontinuity (RD) design is widely used in causal
inference and program evaluation with observational data \citep[e.g.,][]{thistlethwaite1960regression,lee:more:butl:04,eggers_hainmueller_2009,card2009does,lee2010regression}.
A primary reason for this popularity is that the RD design can provide credible causal inference without assuming that the treatment is unconfounded.
Under the RD design, one can exploit a known \textit{deterministic} treatment assignment mechanism and identify the average treatment effect at the treatment assignment cutoff.
On the whole, the causal inference literature has focused on the development of rigorous estimation and inference methodology for this RD treatment effect \citep{calonico2014robust, calonico2018effect, cheng1997automatic, armstrong2018optimal, hahn2001identification, imbens2008regression, kolesar2018inference,imbens2019optimized}.

While treatment effect estimation at the existing cutoff is useful, policy makers may be interested in improving the outcome by considering alternative treatment assignment cutoffs \citep{heckman2005structural}.
Despite the widespread use of the RD design, there has been limited research on policy learning under this design.
We address this gap in the literature by developing a methodology for learning new and better cutoffs from observed data. 
Because the treatment assignment mechanism is deterministic under the RD design, learning a new policy requires extrapolation.
Building on the robust optimization framework proposed by \cite{ben2021safe}, we develop a safe policy learning methodology under the RD design that leverages the existence of multiple cutoffs.
Specifically, we provide a theoretical safety guarantee that a new, learned policy will not yield a worse overall utility than the existing status quo policy.

In many applications of the RD designs, the treatment assignment cutoff varies across subpopulations based on certain criteria including geographical regions, time periods, and age groups \citep[e.g.,][]{chay2005central,klavsnja2017incumbency}.
We show how to exploit the existence of such multiple cutoffs to make the required extrapolation more credible.
Specifically, we compare the conditional potential outcome functions across different sub-populations and assume that the cross-group difference is a smooth function of the running variable with a bounded slope in the extrapolation region.
Based on this smoothness assumption, we separate the expected utility into point-identified and partially-identified components, and propose a doubly-robust estimator for the point-identifiable component.
We then minimize the worst case utility loss relative to the status quo policy. 
The resulting policy, based on new treatment cutoffs, has a safety guarantee that they will not yield a worse overall utility than the existing cutoffs.
We also establish the asymptotic regret bounds for the learned safe policy using semiparametric efficiency theory.

We apply the proposed methodology to a recent study of the ACCES loan program in Colombia \citep{melguizo2016credit}.
In this application, geographical departments in Colombia used different cutoffs, which are based on test scores, to determine the eligibility for financial aid to low-income students.
We use the proposed methodology to learn a cutoff for each department that improves the overall enrollment in post-secondary institution throughout the country.

\paragraph{Related Work.}
There exist few prior studies that consider policy learning under the RD design.
As noted above, many existing studies focus on the identification, estimation, and inference of the average treatment effects at cutoffs.
Recently, some scholars have developed extrapolation methods to improve the external validity of the RD design \citep[e.g.,][]{mealli2012evaluating,dong2015identifying,angrist2015wanna,cattaneo2020extrapolating,dowd2021donuts,bertanha2020external}.
For example, \cite{dong2015identifying} consider how an infinitesimal change of the cutoff alters the RD treatment effect by estimating its
derivative at the existing cutoff. 
\cite{eckles2020noise} propose a different randomization-based framework that exploits measurement error in the running variable and examine how a change in the cutoff value affects the observed outcome. 
Although the methods proposed in these studies are related to ours, they do not explicitly address the problem of policy learning under the RD design. 

An important exception is \cite{yata2021optimal}, who focuses on choosing between two specific policies. This method, however, derives a finite-sample, exact minimax regret decision rule for choosing
between the two policies, which may potentially results in a randomized decision.
Additionally, their approach requires either knowledge of the distribution of the running variable or a fixed-design evaluation using the empirical measure to define the expected utility. In contrast, our approach seeks the optimal deterministic policy across a broad class of policies under a super-population framework.

Our proposed methodology leverages the existence of multiple cutoffs.
In most applied studies with multi-cutoff RD designs, however, researchers simply pool sub-populations and normalize the running variable such that the standard single-cutoff RD methodology can be applied.  
An important exception is \cite{cattaneo2020extrapolating}, who propose a methodology to extrapolate treatment effects under the RD design with two cutoffs. 
The authors rely on a parallel-trend assumption that the difference in the conditional expectation of potential outcome between the two groups is constant in the running variable.
We generalize and weaken their assumption by
allowing the the cross-group difference to change smoothly in the running variable. 
In addition, while \cite{cattaneo2020extrapolating} focuses on the treatment effect estimation with two cutoffs, we focus on policy learning under  multi-cutoff RD designs.

We also contribute to a growing literature on policy learning with
observational data.
We found no prior work in this literature that studies the RD design.
Indeed, most existing work relies upon inverse probability weighting (IPW) \citep[e.g,][]{swaminathan2015counterfactual,qian2011performance,zhao2012estimating,Kitagawa2018} or doubly-robust augmented IPW (AIPW) \citep[e.g,][]{athey2021policy} to estimate and optimize the overall expected utility.
These weighting-based approaches require a sufficient degree of overlap between the baseline policy and \textit{all} candidate policies within a policy class.
\cite{jin2022policy} consider a scenario of limited overlap and propose a pessimism-based weighted learning approach that only requires the overlap between the baseline and optimal policy.

In contrast, we address the policy learning problem with the complete lack of overlap that arises from the deterministic treatment assignment mechanism under RD designs.
Specifically, we build on the robust optimization framework of \cite{ben2021safe} to develop a safe policy learning methodology tailored to RD designs. While \cite{ben2021safe} focused on simple extrapolation assumptions within a specific application, we incorporate the unique features of RD designs, such as cross-group smoothness and doubly robust construction, making it applicable to a wide range of real-world RD settings.

Related to our approach, some researchers also have used partial identification for policy evaluation and learning under different non-standard settings \citep[e.g.,][]{cui2021semiparametric,han2019optimal,khan2023off}, while others have applied robust optimization to causal effect estimation and sensitivity analysis \citep[e.g.,][]{kallus2018confounding,pu2021estimating,gupta2020maximizing,bertsimas2023distributionally}.




\paragraph{Organization of the paper.}
The paper is organized as follows.
Section~\ref{sec:setup} defines the policy learning problem in RD designs with multiple cutoffs. 
Section~\ref{sec:method:populationPL} presents the proposed methodological framework and our smoothness assumption for partial identification.
Section~\ref{sec:multi:empirical} discusses the empirical policy learning problem, where we propose a doubly-robust estimator.
Section~\ref{sec:theory} establishes asymptotic bounds on the utility loss of our learned policy relative to the baseline policy and the oracle optimal-in-class policy. 
In Section~\ref{sec:app}, we apply the proposed methodology to
learning new cutoffs for the ACCES loan program in Colombia. 
Lastly, Section~\ref{sec:discussion} concludes.

\section{Problem setup and notation}\label{sec:setup}

We begin by describing the policy learning problem under the general multi-cutoff RD design and introducing the notation used throughout the paper.

\subsection{The multi-cutoff RD design}

In our setting, for each unit $i=1,\ldots,n$, we observe a univariate running variable $X_i \in \mathcal{X} \subset \R$, a binary treatment assignment $W_i \in \{0,1\}$, and an outcome variable $Y_i \in \R$.
We assume that the running variable $X$ has a continuous positive density $f_X$ on the support $\mathcal{X}$. 
Each unit $i$ belongs to one of $Q$ different subgroups, denoted by $G_i \in \mathcal{Q}:=\{1,\ldots,Q\}$, where each group $g$ has a corresponding cutoff value $\tilde{c}_g$. 
These existing cutoff values may differ between groups. 
For simplicity, we assume there are no ties, but note that our proposed methodology can still be applied in situations where tied cutoffs may exist.
We also assume that these cutoffs have been sorted in an increasing order (i.e., $\tilde c_1 < \tilde c_2 <\ldots < \tilde{c}_Q$). 
Note that both $X$ and $G$ are pre-treatment variables, which are neither affected by the cutoffs nor the treatment variable.

Throughout the paper, we consider the  \textit{sharp} RD setting, where given a set of known cutoff values $\{\tilde{c}_{g}\}_{g=1}^Q$, the treatment assignment $W_i$ is completely determined by \emph{both} the running variable $X_i$ and the group indicator $G_i$.
Specifically, unit $i$ in group $g$ is assigned to the treatment condition when the value of the running variable $X_i$ exceeds a fixed, known treatment cutoff $\tilde{c}_g$, i.e., $$W_i = \sum_{g=1}^{Q} \mathds{1}(G_i =g)\mathds{1}(X_i \geq \tilde c_g)$$ where $\mathds{1}\left(\cdot\right)$ denoting the indicator function. 
Following the potential outcome framework \citep{neyman1923, rubin1974}, we define $Y_i(1)$ and $Y_i(0)$ as the potential outcomes under the treatment and control conditions, respectively.
The observed outcome then is given by $Y_{i}=W_{i} Y_{i}(1)+\left(1-W_{i}\right) Y_{i}(0)$. 
We further assume that the tuples $\{X_i, W_i,G_i, Y_i(1),Y_i(0)\}$ are drawn i.i.d from an overall target population $\mathcal{P}$.
For notational convenience, we will sometimes drop the unit subscript $i$ from random variables. 

Much of the RD design literature has focused on the evaluation of the conditional average treatment effect (CATE) at the cutoff.
Under the multi-cutoff RD design, we define the group-specific CATE for each group $g$ at the cutoff value $\tilde{c}_g$, 
\begin{equation*}\label{equ:LATE}
\tau_g(\tilde{c}_g) \ := \ \mathbb{E}[Y_{i}(1)-Y_{i}(0) \mid X_{i}=\tilde{c}_g, G_{i}=g].
\end{equation*}
The key identification assumption for $\tau_g(\tilde{c}_g)$ under the sharp RD design is the local continuity of the conditional expectation function of the potential outcome \citep{hahn2001identification}.
\begin{assumption}[Continuity] \label{ass:multi:continuity} 
Let $m(w,x,g) =\mathbb{E}[Y_i(w) \mid X_i = x, G_i=g]$ denote the group-specific conditional potential outcome function. Then, $m(w,x,g)$ is continuous in $x$ at $\tilde{c}_g$ for $g=1,\ldots,Q$ and $w=0,1$.
\end{assumption}
\noindent Under this assumption, we can identify  $\tau_g(\tilde{c}_g)$ via,
$$\tau_g(\tilde{c}_g) \ = \ \lim _{x \downarrow \tilde{c}_g} \mathbb{E}[Y_i \mid X_i=x,G_i=g]-\lim _{x \uparrow \tilde{c}_g} \mathbb{E}[Y_i \mid X_i=x,G_i=g].$$ 

Under Assumption~\ref{ass:multi:continuity} alone, however, the deterministic treatment assignment makes it impossible to identify the average treatment effect beyond the existing cutoff  \citep[e.g.,][]{hahn2001identification,imbens2008regression}.
In the literature, researchers have proposed additional assumptions to extrapolate beyond the existing cutoffs.
For example, under the single-cutoff RD setting, \cite{dowd2021donuts} assume the bounded partial derivative of the conditional expectation function with respect to the running variable.
Under the fuzzy RD setting, \cite{bertanha2020regression} restricts heterogeneity across compliance classes.
Lastly, under the multi-cutoff RD design, \cite{cattaneo2020extrapolating} propose a ``common-trend'' assumption. 
Building on these recent advancements, we show how to learn new and improved cutoffs from observed data under multi-cutoff RD design.

\subsection{Policy learning under the multi-cutoff RD design}
\label{sec:notation}

We introduce our policy learning problem under the multi-cutoff RD design.
Throughout, we assume that potential outcomes depend on cutoffs only through the treatment assignment, which is implicit in our notation \citep{dong2015identifying,bertanha2020regression}.
In other words, changing the cutoff value does not alter outcomes unless it leads to a change in the treatment condition.
The assumption is violated, for example, if individuals change their behavior in response to a change in the cutoff value under the same treatment condition.


Under the multi-cutoff RD design, a policy $\pi$ is a deterministic function that maps the running variable and group indicator to a binary treatment assignment variable, $\pi:\mathcal{X} \times \mathcal{Q} \rightarrow \{0,1\}$.
Let $u(y,w)$ denote the utility for outcome $y$ under treatment $w$. 
Thus, the utility for unit $i$ with treatment $w$ and potential outcome $Y_i(w)$ is given by $u(Y_i(w),w)$.
Our goal is to find a new policy $\pi$ that has a high average value (expected utility) for the overall population.
For the ease of exposition, we use the observed outcome as the utility, i.e., $u(y, w)=y$, but our discussion below readily extends to other utility functions.
In Section~\ref{sec:app}, we consider an adjusted utility function that takes into account the cost of treatment.

For any given policy $\pi$, we define its overall value as the expected utility in the target population $\mathcal{P}$: 
\begin{equation}\label{equ:expUtility}
  V(\pi)=\mathbbm{E}\left[ Y\left(\pi(X,G)\right)\right],  
\end{equation}
where the expectation is taken with respect to the joint distribution of the running variable $X$, the group indicator $G$, and potential outcomes $Y(w)$. 
The goal of policy learning is to find the best possible policy within a pre-specified policy class $\Pi$, i.e.,
$$\pi^{*} \ = \ \underset{\pi \in  \Pi}{\operatorname{argmax}}\  V(\pi).$$

Under the multi-cutoff RD design, we first consider the following class of (group-specific) linear threshold policies: 
\begin{equation}
\begin{aligned}\label{equ:Pi:general:unrestricted}
        \Piml=\left\{\pi: \mathcal{X}\times \mathcal{Q} \rightarrow \{0,1\} \mid \pi(x,g) = \mathds{1}(x\geq c_g) , c_g \in \mathcal{X}\right\},
    \end{aligned}
\end{equation}
where each group-specific cutoff value $c_g$ corresponds to a candidate policy for group $g$.
We note that the baseline cutoff value $\tilde{c}_g$ may be different from ${c}_g$, but both belong to this policy class. 
Also, by definition, each group's policy (cutoff) in $\Piml$ is determined independently of the others' choices, meaning no joint constraints are placed on $\{c_g\}_{g=1}^Q$. 


Learning an optimal policy requires the evaluation of $V(\pi)$ for any given policy $\pi \in \Piml$. 
Under RD designs, however, the average potential outcomes under cutoffs that are different from the existing ones are not identifiable.
Since the treatment assignment is \textit{deterministic} rather than \textit{stochastic}, the usual overlap assumption, i.e., $0 < \Pr(W_i = 1 \mid X_i = x) < 1 \ \text{for all } x \in \mathcal{X}$, is violated under the RD designs.
As a result, we cannot use  standard policy learning methodology based on the inverse probability-of-treatment weighting (IPW) or its variants \citep[see, e.g.][]{zhao2012estimating, Kitagawa2018,athey2021policy}. 
Therefore, to evaluate $V(\pi)$, we must extrapolate beyond the existing threshold $\tilde{c}_g$ and infer the average unobserved potential outcomes.

\section{Safe policy learning through robust optimization}\label{sec:method:populationPL}

We now outline our basic framework of policy learning under the multi-cutoff RD designs.
The proposed framework has three steps.
We first decompose the expected utility for any given policy into identifiable and unidentifiable components. 
Second, we partially identify the unidentifiable component using a relaxed version of the ``parallel-trends'' assumption from the difference-in-differences literature \citep{rambachan2023more}.
Lastly, we propose a robust optimization approach that maximizes the worst case expected utility.
We show that the resulting maximin value optimal policy has a \textit{safety} guarantee; the value of deploying the learned policy is no worse than the baseline policy under the assumption used for partial identification.

\subsection{Policy value decomposition}\label{sec:method:decom}

We begin by defining the baseline policy that generates the observed data as
$$\tilde\pi(x,g) \ := \ \mathds{1}(x\geq \tilde c_g).$$
Note that this baseline policy $\tilde\pi$ is contained in the linear-threshold policy class $\Piml$ defined in Equation~\eqref{equ:Pi:general:unrestricted}.
By comparing the deviation of a given policy $\pi$ from the baseline policy $\tilde\pi$, we decompose the expected utility in Equation~\eqref{equ:expUtility} into point-identifiable and non-identifiable terms, denoted by $\mathcal{I}_{\text{iden}}$ and $\mathcal{I}_{\text{uniden}}$, respectively:
\begin{equation}
\begin{aligned}\label{equ:general:multi:V}
V(\pi,m)&=\underbrace{ \mathbb{E}\left[\pi(X,G)\tilde\pi(X,G)Y + \left(1-\pi(X,G)\right)\left(1-\tilde\pi(X,G)\right)Y\right]}_{\mathcal{I}_{\text{iden}}}\\
&+\underbrace{ \mathbb{E}\left[  \pi(X,G)\left(1-\tilde\pi(X,G)\right) m(1,X,G)\right]+\mathbb{E}\left[  \left(1-\pi(X,G)\right)\tilde\pi(X,G) m(0,X,G)\right]}_{\mathcal{I}_{\text{uniden}}},\\
\end{aligned}
\end{equation}
where $m(w,x,g)$ is the group-specific conditional potential outcome function.

The first term in Equation~\eqref{equ:general:multi:V} corresponds to the cases where the policy $\pi$ agrees with the baseline policy $\tilde\pi$.
This component can be point-identified from the observable data by replacing the counterfactual outcome $Y(\pi(X,G))$ with the observed outcome $Y$.
The second and third terms, however, are not point-identifiable without further assumptions; the counterfactual outcome cannot be observed when the policy under consideration $\pi$ disagrees with the baseline policy $\tilde\pi$.

We consider the following restricted set of group-specific linear threshold policies,
\begin{equation}
    \begin{aligned}\label{equ:Pi:general}
        \Piol=\left\{ \pi(x,g) = \mathds{1}(x\geq c_g) \mid \ c_g \in [\tilde c_1,\tilde c_Q]\right\}.
    \end{aligned}
\end{equation}
Like the unrestricted policy class $\Piml$ in Equation~\eqref{equ:Pi:general:unrestricted}, this policy class contains the baseline policy $\tilde{\pi}(x,g)$.
In this policy class, however, the new cutoffs are found between the lowest and highest cutoffs, denoted by $\tilde{c}_1$ and $\tilde{c}_Q$, respectively.
The reason for this restriction is that the observed data provide no information to leverage beyond this range.
For example, above the highest cutoff $\tilde{c}_Q$, all units receive the treatment and no outcome is observed under the control condition.
 
Under this restricted policy class $\Piol$, the unidentifiable term $\mathcal{I}_{\text{uniden}}$ can be further decomposed into the conditional mean function of the observed outcome, which is point-identifiable, and an unidentifiable ``difference function,'' 
\begin{align}
    \tilde m(x,g) \ & := \ m(\tilde \pi (x,g), x,g) \label{equ:tildem(x,g)} = \E[Y \mid X = x, G = g] \\
d(w,x,g,g^\prime) \ & := \ m(w,x,g)- m(w,x,g^\prime). \label{equ:differencefn}
\end{align}
The difference function $d(w,x,g,g^\prime)$ represents the difference in the conditional mean function of potential outcomes between two groups at the same value of the running variable.

Note that if $d(w,x,g,g^\prime)=0$, then the conditional average treatment effect is identical between groups $g$ and $g^\prime$ given the running variable. 
This type of assumption is often invoked to combine  treatment effect estimates from multiple studies \citep{stuart2011use,dahabreh2019generalizing}, but it is typically not credible here because the running variable is the only conditioning variable.
In Section~\ref{sec:robustopt}, we will instead partially identify the difference function by requiring a weaker assumption that the difference function is a smooth function of the running variable.

We now formally state our decomposition result, showing that we can use the group whose existing cutoff is closest to the counterfactual cutoff of another group for extrapolation. 
\begin{proposition}[Decomposition of the unidentifiable component]\label{prop:unident}
For any  policy $\pi \in \Piol$, the unidentifiable term $\mathcal{I}_{\text{uniden}}$ in Equation~\eqref{equ:general:multi:V} can be further decomposed as follows:
\begin{equation}
    \mathcal{I}_{\text{uniden}} \ = \ \Xi_1 + \Xi_2, \label{equ:general:multi:uniden}
    \end{equation}
where $\Xi_1$ is identifiable but $\Xi_2$ is not necessarily identifiable,
\begin{align*} 
 \Xi_1 \ := \ &  \mathbbm{E}\left[\sum_{g=1}^{Q} \mathds{1}\left(G=g\right)\cdot \pi(X,g)\left(1-\tilde\pi(X,g)\right) \sum_{g^\prime=1}^{g-1} \mathds{1}\left(X\in[\tilde c_{g^\prime}, \tilde c_{g^\prime+1}]\right)  \tilde m(X,g^{\prime})\right]\nonumber\\
    & + \mathbbm{E}\left[\sum_{g=1}^{Q} \mathds{1}\left(G=g\right)\cdot \left(1-\pi(X,g)\right)\tilde\pi(X,g) \sum_{g^\prime=g+1}^{Q} \mathds{1}\left(X\in[\tilde c_{g^\prime-1}, \tilde c_{g^\prime}]\right)  \tilde m(X,g^{\prime})\right] \\
 \Xi_2 \ := \ &  \mathbbm{E}\left[\sum_{g=1}^{Q} \mathds{1}\left(G=g\right)\cdot \pi(X,g)\left(1-\tilde\pi(X,g)\right) \sum_{g^\prime=1}^{g-1} \mathds{1}\left(X\in[\tilde c_{g^\prime}, \tilde c_{g^\prime+1}]\right) d(1,X,g,g^{\prime}) \right]\nonumber\\
    &+ \mathbbm{E}\left[\sum_{g=1}^{Q} \mathds{1}\left(G=g\right)\cdot \left(1-\pi(X,g)\right)\tilde\pi(X,g) \sum_{g^\prime=g+1}^{Q} \mathds{1}\left(X\in[\tilde c_{g^\prime-1}, \tilde c_{g^\prime}]\right)  d(0,X,g,g^{\prime}) \right]
 \end{align*}
\end{proposition}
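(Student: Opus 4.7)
The plan is to prove the two terms in $\mathcal{I}_{\text{uniden}}$ separately by conditioning on the group, decomposing the relevant range of the running variable into slabs determined by adjacent baseline cutoffs, and then rewriting the unknown counterfactual mean $m(w,X,g)$ as a known group-specific observed mean $\tilde m(X,g')$ plus the difference function $d(w,X,g,g')$.

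First I would handle the term $\mathbb{E}[\pi(X,G)(1-\tilde\pi(X,G))\,m(1,X,G)]$. Conditioning on $G=g$, the indicator $\pi(X,g)(1-\tilde\pi(X,g))$ is nonzero only when $X \in [c_g,\tilde c_g)$, i.e.\ when $c_g<\tilde c_g$. Because $\pi\in\Piol$ forces $c_g\ge \tilde c_1$, this range is contained in $[\tilde c_1,\tilde c_g)$, which partitions (up to a measure-zero set) as $\bigcup_{g'=1}^{g-1}[\tilde c_{g'},\tilde c_{g'+1})$. On each slab $[\tilde c_{g'},\tilde c_{g'+1})$ with $g'<g$, group $g'$ is treated under the baseline policy, so $\tilde m(X,g') = m(1,X,g')$, and by the definition of $d$ in \eqref{equ:differencefn},
\begin{equation*}
m(1,X,g) \;=\; \tilde m(X,g') + d(1,X,g,g').
\end{equation*}
Inserting a partition-of-unity indicator $\sum_{g'=1}^{g-1}\mathds{1}(X\in[\tilde c_{g'},\tilde c_{g'+1}])$ (valid because $X\in[\tilde c_1,\tilde c_g)$ whenever the outer indicators fire) and summing back over $g$ with $\mathds{1}(G=g)$ gives exactly the first summands of $\Xi_1$ and $\Xi_2$.

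Second, I would apply the symmetric argument to $\mathbb{E}[(1-\pi(X,G))\tilde\pi(X,G)\,m(0,X,G)]$. Now the active range is $X\in[\tilde c_g,c_g)\subseteq[\tilde c_g,\tilde c_Q]$, which I partition as $\bigcup_{g'=g+1}^{Q}[\tilde c_{g'-1},\tilde c_{g'})$. On each such slab, group $g'$ is untreated at the baseline, hence $\tilde m(X,g') = m(0,X,g')$, and $m(0,X,g) = \tilde m(X,g') + d(0,X,g,g')$. The bookkeeping produces the second summands of $\Xi_1$ and $\Xi_2$. Adding the two identities yields $\mathcal{I}_{\text{uniden}} = \Xi_1 + \Xi_2$.

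The step that demands the most care is aligning the slab partitions with the active region of each indicator so that no mass is double-counted and none is missed: the argument relies on the restriction $c_g\in[\tilde c_1,\tilde c_Q]$ from $\Piol$, on the ordering $\tilde c_1<\cdots<\tilde c_Q$, and on the continuous density of $X$ so that the closed-versus-half-open choice of slab endpoints is immaterial. Once that partition is justified, identifying $\Xi_1$ as point-identifiable (it depends only on $\tilde m$, which equals $\mathbb{E}[Y\mid X,G]$) and $\Xi_2$ as unidentifiable (it depends on the cross-group difference $d$, which is not pinned down by the observable distribution) follows immediately from the definitions in \eqref{equ:tildem(x,g)} and \eqref{equ:differencefn}.
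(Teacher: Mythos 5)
Your proof is correct and follows essentially the same route as the paper's (the paper only spells out the two-group case explicitly, but the argument is identical): restrict to the disagreement region $[c_g,\tilde c_g)$ or $[\tilde c_g,c_g)$, partition it into slabs between adjacent baseline cutoffs, identify $m(w,X,g')$ with $\tilde m(X,g')$ on the slab where group $g'$ receives treatment $w$ under the baseline, and add back $d(w,X,g,g')$. Your attention to the half-open versus closed slab endpoints, handled by the continuous density of $X$, matches the paper's implicit treatment.
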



\begin{figure}[t!]
    \centering
    \includegraphics[width=1\linewidth]{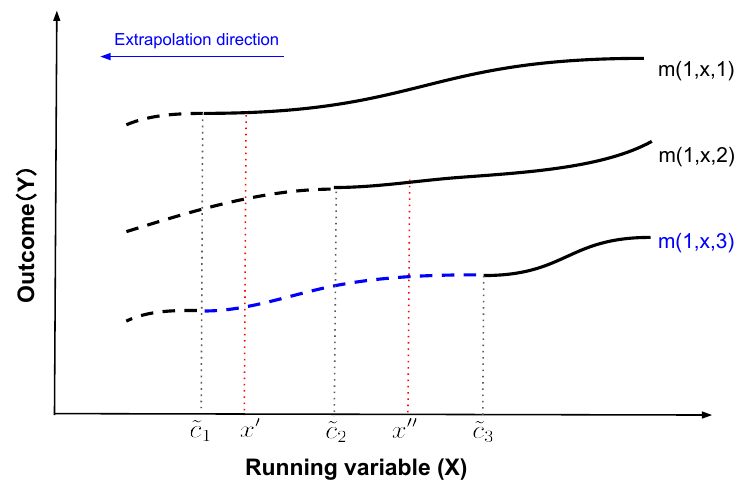}
    \caption{Illustration of the policy value decomposition in Proposition~\ref{prop:unident} in a three-group case.
    The black lines represent the group-specific conditional expectation functions under the treatment condition $m(1,x,\cdot)$, where the solid line segments can be identified by the observed conditional mean outcome functions $\tilde m(x,\cdot)$ whereas the black dashed segments are not identifiable.
   The blue dashed line indicates an example extrapolation region where we consider the counterfactual policy value under an alternative cutoff lower than the existing cutoff for Group~3.
   To infer the counterfactual policy value at $x^{\prime\prime}$ for this group, we decompose it into the sum of the observed conditional mean value for Group~2 and the unidentifiable difference between Groups~2~and~3 at $\tilde{c}_3$.
   Similarly, to infer the counterfactual policy value at $x^\prime$ for Group~3, we leverage the observed difference between Groups~1~and~3. }
    \label{fig:m1}
\end{figure}

Figure~\ref{fig:m1} illustrates the decomposition using a three-group case and focusing on the outcome under the treatment.
Suppose that we wish to learn a new cutoff for Group~3 whose existing cutoff is $\tilde{c}_3$.
To do this, we must extrapolate the conditional mean outcome $m(1,x,3)$ for this group in the region below the existing cutoff (represented by the dashed blue line). 
To evaluate the counterfactual value of the conditional mean outcome at $x^{\prime\prime} \in [\tilde{c}_2, \tilde{c}_3]$ for Group~3, we choose Group~2 whose existing cutoff is the closest to $\tilde{c}_3$.
We then decompose the counterfactual value $m(1,x^{\prime\prime},3)$ into the identifiable conditional mean for Group~2, i.e., $m(1,x^{\prime\prime},2)$, and the difference in the conditional mean function between Groups~3~and~2, i.e., $d(1,x^{\prime\prime},3,2)=m(1,x^{\prime\prime},3)-m(1,x^{\prime\prime},2)$.
Similarly, to infer $m(1, x^{\prime},3)$, we use Group~1 as a reference group and decompose this counterfactual value into the identifiable conditional mean value for Group~1, i.e., $m(1, x^\prime, 1)$, and the difference between Groups~3~and~1, i.e., $d(1,x^\prime,3,1)$.
\begin{remark}
There are various ways to decompose $\mathcal{I}_{\text{uniden}}$, depending on the choice of reference group, but our methodology can be adopted to different decomposition as shown in Appendix~\ref{appendix:add:Iunid}.
While it may be possible to find an optimal weighted average of the different comparison groups to maximize efficiency, we leave this for future research.
\end{remark}

In sum, Proposition~\ref{prop:unident} implies the following decomposition of the expected utility in Equation~\eqref{equ:general:multi:V}:
\begin{equation}\label{equ:gerneral:multi:Vdecom}
 V(\pi,d) \ = \ \mathcal{I}_{\text{iden}} +\Xi_1+\Xi_2.
\end{equation}
For any given policy $\pi\in\Piol$, when the baseline policy $\pi$ and a candidate policy $\tilde\pi$ agree, the expected utility equals the identifiable component $\mathcal{I}_{\text{iden}}$.
When they disagree, the expected utility equals the sum of two components: the identifiable term $\Xi_{1}$, and the partially identifiable term $\Xi_{2}$.


\subsection{Partial identification and robust optimization}\label{sec:robustopt}

Using the decomposition above, we propose a robust optimization approach to policy learning.
Specifically, we first partially identify the difference function $\Xi_2$ and then use robust optimization to derive an optimal policy by maximizing the worst case value.

As illustrated in Figure~\ref{fig:m1} and discussed above, the cross-group difference in the conditional outcome needed for extrapolation is not identifiable.
For example, at $x^{\prime\prime}$, this difference function between Groups~3~and~2 cannot be point-identified.
Yet, the same difference function is identifiable in the region above the existing cutoff of Group~3, i.e., $x \ge \tilde{c}_3$.
We leverage this fact for extrapolation under the following assumption that the cross-group difference changes smoothly as a function of the running variable.
\begin{assumption}[Smooth Difference Function] \label{ass:general:lipdiff}
The difference function $d(w,x,g,g^\prime)$ lies in the following function class,
$$\mathcal{F}=\left\{f: \left|f(w, x,g,g^\prime)-f(w, \tilde c_{g^\ast},g,g^\prime)\right|\leq \lambda_{wgg^\prime} \left|x-\tilde c_{g^\ast}\right| \quad \forall x\in (\tilde c_{g\wedge g^\prime},\tilde c_{g\vee g^\prime}), w\in \{0,1\} \right\},$$
where ${g^\ast}=w(g\vee g^\prime)+(1-w)(g\wedge g^\prime)$ is a group index, and $\lambda_{wgg^\prime} < \infty$ is a positive constant for $w\in \{0,1\}$ and $g,g^\prime \in \mathcal{Q}$. 
\end{assumption}
In the function class $\mathcal{F}$, the group index ${g^\ast}$ depends on the treatment condition.
For example, in the case of the conditional outcome model under the treatment, ${g^\ast}$ represents the group with the higher cutoff, whereas under the control condition it corresponds to the group with the lower cutoff.
Thus, $\tilde c_{g^\ast}$ equals the boundary point at which the difference model can be identified.  

Assumption~\ref{ass:general:lipdiff} states that the cross-group heterogeneity is a smooth function of the running variable with a varying slope, which is assumed to be bounded when the difference function is unidentifiable.
The assumption limits the strength of the interaction between the running variable and group indicator in the conditional expected potential outcome.
Assumption~\ref{ass:general:lipdiff} generalizes and hence weakens the assumption of \cite{cattaneo2020extrapolating} who consider an important special case with two groups.
Specifically, they assume that the cross-group heterogeneity does not depend on the running variable at all, i.e., $\lambda_{w12}=0$ implying $d(w,x,1,2)$ is constant in $x$.
Their assumption is analogous to the parallel trend assumption under the difference-in-differences design. 


Note that for some values of the running variable, the difference function is identifiable as
$\tilde d(x,g,g^\prime) \ :=  \mathbbm{E}\left[Y \mid X=x,G=g\right]-  \mathbbm{E}\left[Y \mid X=x,G=g^\prime\right]$.
Therefore, under Assumptions~\ref{ass:general:lipdiff}, we can directly compute the set of potential models for the difference function by extrapolating from the closest boundary point $\tilde c_{g^\ast}$.
This leads to the following restricted model class for the difference function,
\begin{equation}\label{equ:genmulti:Md}
    \mathcal{M}_d=\left\{ d: \{0,1\}  \times \mathcal{X} \times \mathcal{Q}\times \mathcal{Q} \rightarrow \mathbb{R}\mid  B_{\ell}(w,x,g,g^\prime)\leq  d(w,x,g,g^\prime) \leq  B_{u}(w,x,g,g^\prime) \right\},
\end{equation}
with the point-wise upper and lower bounds given by
\begin{equation}
     \begin{aligned}\label{equ:genmulti:bound}
    B_{\ell}(w,x,g,g^\prime)& =  \underset{\begin{subarray}{c}
  x^{\prime} \in \mathcal{X}_{gw} \cap \mathcal{X}_{g^\prime w}   \\
    x^\prime \to \tilde c_{{g^\ast}}
  \end{subarray}}{\operatorname{lim}}
  \tilde{d}(x^\prime,g,g^\prime) - \lambda_{wgg^\prime} |x-x^\prime|,  \\
B_{u}(w,x,g,g^\prime) & = \underset{\begin{subarray}{c}
  x^{\prime} \in \mathcal{X}_{gw} \cap \mathcal{X}_{g^\prime w}   \\
    x^\prime \to \tilde c_{{g^\ast}}
  \end{subarray}}{\operatorname{lim}}  \tilde{d}(x^\prime,g,g^\prime) + \lambda_{wgg^\prime} |x-x^\prime|,
     \end{aligned}
\end{equation}
where $\mathcal{X}_{gw}=\{x \in \mathcal{X} \mid \tilde{\pi}(x,g)=w\}$ represents the set of values of the running variable where the baseline policy gives the treatment condition $w$ to group~$g$. 
Here the limits are taken as the value of the running variable $x'$ approaches the baseline cutoff $\tilde{c}_{g^\ast}$ from within $\mathcal{X}_{gw} \cap \mathcal{X}_{g^\prime w} $.
We also note that the point-wise bounds for $d(w,x,g,g^\prime)$ in Equation~\eqref{equ:genmulti:bound} could potentially be improved through alternative constructions.
For instance, one could use extrapolated bounds for other pairwise difference functions involving an additional group $g''$, and calculate the difference between the upper and lower bounds of those difference functions: $[B_{l}(w,x,g,g'')-B_{u}(w,x,g'',g^\prime),B_{u}(w,x,g,g'')-B_{l}(w,x,g'',g^\prime)]$. This is an alternative partial identification bound for $d(w,x,g,g^\prime)$.
While further intersecting these bounds could yield tighter partial identification results, in practice this may require an iterative process to progressively tighten the bounds. 
We construct the bounds using the straightforward approach in Equation~\eqref{equ:genmulti:bound}, leaving an exploration of other approaches to future work.

Finally, we use robust optimization to find an optimal policy.
We do so by maximizing the worst case value of the expected utility $V(\pi,d)$ in Equation~\eqref{equ:gerneral:multi:Vdecom} over the restricted policy class $\Piol$, with the difference function taking values in the ambiguity set $\mathcal{M}_d$:
\begin{equation}
    \begin{aligned}\label{equ:general:Vinf}
       \pi^{\inf} \in  \underset{\pi \in \Piol}{\operatorname{argmax}} \  \min_{d \in \mathcal{M}_d}\  V(\pi,d) \Longleftrightarrow  \pi^{\inf} \in  \underset{\pi \in \Piol}{\operatorname{armin}} \  \max_{d \in \mathcal{M}_d}\  V(\tilde{\pi}, d) - V(\pi,d).
    \end{aligned}
\end{equation}
Note that the value under the baseline policy $ V(\tilde{\pi}, d)$ is identifiable, and so maximizing the the minimum value is equivalent to minimizing the maximum loss relative to the baseline policy. Below we will focus on the maximin formulation of the problem.
As \cite{ben2021safe} show, the resulting policy $\pi^{\inf}$ has a \textit{safety} guarantee that the overall expected utility of the learned policy is no less than that of the baseline policy.
This safety property holds because $\pi^{\inf}$ maximizes the worst case value when it disagrees with the baseline policy.

\begin{remark}
    Our proposed methodological framework can be extended to the RD setting with a single cutoff. Instead of relying on Assumption~\ref{ass:general:lipdiff}, one can impose a slightly stronger smoothness assumption on the conditional mean potential outcome functions themselves \citep{dowd2021donuts}. 
    A detailed discussion of this approach is provided in Appendix~\ref{appendix:singlecutoff}.
\end{remark}

\section{Learning policies from data}\label{sec:multi:empirical}

In the previous section, we define the population optimal policy $\pi^{\inf}$ that maximizes the worst case value of the population expected utility $V(\pi)$ for any given policy $\pi\in\Piol$. 
Next, we show how to construct this maximin policy using observed data.
Recalling our decomposition in Proposition~\ref{prop:unident}, the  optimization problem of Equation~\eqref{equ:general:Vinf} can be decomposed into the sum of three components,
        $$\pi^{\inf} \ = \ \underset{\pi \in \Pi_{\mathrm{Overlap}}}{\operatorname{argmax}}\  \mathcal{I}_{\text{iden}}+\Xi_{\text{1}}+\underset{d \in \mathcal{M}_d}{\operatorname{min}} \Xi_{2}$$
where the first two components, $\mathcal{I}_{\text{iden}}$ and $\Xi_{\text{1}}$, are point-identifiable, while the last term $\Xi_{2}$ is not.
We introduce our proposed estimator by examining each of these three terms in turn.

\subsection{Doubly robust estimation of the identifiable component}

Since the first term $\mathcal{I}_{\text{iden}}$ in Equation~\eqref{equ:general:multi:V} is identifiable, we use its sample analog for unbiased estimation:
\begin{equation}\label{equ:multi:emp1}
\widehat{\mathcal{I}}_{\text{iden}} = \frac{1}{n} \sum_{i=1}^{n} \pi(X_i,G_i)\tilde\pi(X_i,G_i)Y_i+(1-\pi(X_i,G_i))(1-\tilde\pi(X_i,G_i)) Y_i.
\end{equation}

The second term $\Xi_{1}$ in Equation~\eqref{equ:general:multi:uniden} is also nonparametrically identified, but this term contains an unknown nuisance component, corresponding to the conditional outcome regression for each group $g$, $\tilde{m}(X, g)$.
We compute the efficient influence function for estimating $\Xi_1$, which yields the following equivalent representation,
\begin{equation}
\begin{aligned}\label{equ:general:multi:db}
  \Xi_{\text{DR}} \ := \ & \mathbbm{E}\left[\sum_{g=1}^{Q} \pi(X,g)\left(1-\tilde\pi(X,g)\right) 
   \sum_{g^\prime=1}^{g-1} \right. \left. \mathds{1}\left(X\in[\tilde c_{g^\prime},\tilde c_{g^\prime+1}]\right) \cdot \right.\\
   & \hspace{1.5in}\left. \left\{ \mathds{1}\left(G=g\right)\tilde m(X,g^{\prime})+ \mathds{1}\left(G=g^{\prime}\right)\frac{e_g(X)}{e_{g^{\prime}}(X)}(Y-\tilde m(X,g^{\prime})) \right\} \right]
  \\
  & 
  + \mathbbm{E}\left[\sum_{g=1}^{Q}  \left(1-\pi(X,g)\right)\tilde\pi(X,g) \sum_{g^\prime=g+1}^{Q} \right. \left.  \mathds{1}\left(X\in[\tilde c_{g^\prime-1}, \tilde c_{g^\prime}]\right)\cdot\right.\\
  & \hspace{1.5in} \left. \left\{\mathds{1}\left(G=g\right)\tilde m(X,g^{\prime})+\mathds{1}\left(G=g^{\prime}\right)\frac{e_g(X)}{e_{g^{\prime}}(X)}(Y-\tilde m(X,g^{\prime})) \right\} \right],
 \end{aligned}
\end{equation}
where $e_g(x):=\mathbb{P}\left[G_i=g\mid X_i=x\right]$ is the the conditional probability of group membership $g$ given the value of the running variable $X_i = x$.

The above identification requires an assumption on the existence of overlap between groups in terms of the running variable.
\begin{assumption}[Overlap between groups based on the running variable]\label{ass:overlap:general}
There exists some $\eta >0$ such that  $\eta \leq e_g(x)\leq 1 - \eta$ for all $x\in[\tilde c_1,\tilde c_Q]$ and $g\in \mathcal{Q}$. 
\end{assumption}

Although Equation~\eqref{equ:general:multi:db} resembles the standard doubly robust construction of the average treatment effect, there is a key difference; the nuisance model $e_g(x)$ represents the conditional probability of \emph{group membership} given the running variable, rather than the conditional probability of treatment assignment.
This difference has important implications.
First, our framework neither requires covariate overlap nor unconfoundedness for treatment assignment.  In fact, we also do not require group membership to be unconfounded. Second, the nuisance model is a function of a single-dimensional running variable, making it substantially easier to estimate than typical propensity score models.

We adopt a fully nonparametric approach to estimating these nuisance components $\tilde m(\cdot,\cdot)$ and $e_g(\cdot)$ and use cross-fitting \citep{chernozhukov2016locally,athey2021policy}.
Since the running variable is one dimensional, many potential nonparametric estimators can be used.
In our empirical analyses (Section~\ref{sec:app}), we fit a local linear regression for $\tilde m(\cdot,\cdot)$ available in the R-package \texttt{nprobust} and fit a multinomial logisitic regression for $e_g(\cdot)$ based on a single hidden layer neural network of the R package \texttt{nnet}.

We cross-fit our estimators, randomly splitting the data into $K$ disjoint folds, where each fold contains a subset of data from all $Q$ groups.
For each fold $k$, we use the remaining $K-1$ folds to estimate the nuisance components and then compute the estimates on fold $k$.
We denote these estimates by $\hat{\tilde m}^{-k[i]}(X_i,g)$ and $\hat e_g^{-k[i]}(X_i)$, where we use the superscript $-k[i]$ to denote the model fitted to all but the fold that contains the $i$th observation. 
Now, we can write the proposed cross-fitted doubly-robust estimator as,
\begin{equation}\label{equ:general:multi:empDR}
\begin{aligned}
\widehat\Xi_{\text{DR}} = & \ \frac{1}{n}\sum_{i=1}^{n}
\left[\sum_{g=1}^{Q} \pi(X_i,g)\left(1-\tilde\pi(X_i,g)\right) 
   \sum_{g^\prime=1}^{g-1} \right. \left. \mathds{1}\left(X_i\in[\tilde c_{g^\prime},\tilde c_{g^\prime+1}]\right) \cdot \right.\\
   & \hspace{1.5in}\left. \left\{ \mathds{1}\left(G_i=g\right)\hat{\tilde m}^{-k[i]}(X_i,g^{\prime})+ \mathds{1}\left(G_i=g^{\prime}\right)\frac{\hat{e}^{-k[i]}_g(X_i)}{\hat{e}^{-k[i]}_{g^{\prime}}(X_i)}(Y_i-\hat{\tilde m}^{-k[i]}(X_i,g^{\prime})) \right\} \right.
  \\
  & \left.
  \qquad\qquad+ \sum_{g=1}^{Q}  \left(1-\pi(X_i,g)\right)\tilde\pi(X_i,g) \sum_{g^\prime=g+1}^{Q} \right. \left.  \mathds{1}\left(X_i\in[\tilde c_{g^\prime-1}, \tilde c_{g^\prime}]\right)\cdot\right.\\
  & \hspace{1.5in} \left. \left\{\mathds{1}\left(G_i=g\right)\hat{\tilde m}^{-k[i]}(X_i,g^{\prime})+\mathds{1}\left(G_i=g^{\prime}\right)\frac{\hat{e}^{-k[i]}_g(X_i)}{\hat{e}^{-k[i]}_{g^{\prime}}(X_i)}(Y_i-\hat{\tilde m}^{-k[i]}(X_i,g^{\prime})) \right\} \right].
\end{aligned}
\end{equation}

This doubly robust estimator $\widehat\Xi_{\text{DR}}$ achieves semiparametric efficiency while only requiring mild sub-parametric convergence rate conditions of the two nuisance models $\tilde{m}(x,g)$ and $e_g(x)$ (see Assumption~\ref{ass:rate:general}). 
In Section~\ref{sec:theory}, we show that this rate-double robustness property allows us to learn the identifiable component of the policy value at a faster rate  than the naive outcome imputation method.

\subsection{Estimating the bounds}\label{sec:multi:empirical:E2}

We now turn to estimating the worst-case value of $\Xi_{2}$ in Equation~\eqref{equ:general:multi:uniden}. 
We first define the sample analog of $\Xi_{2}$ as follows:
\begin{equation}
\begin{aligned}
     \widehat\Xi_2:=  \frac{1}{n}\sum_{i=1}^{n}\Big[&\sum_{g=1}^{Q} \mathds{1}\left(G_i=g\right)\cdot \pi(X_i,g)\left(1-\tilde\pi(X_i,g)\right) \sum_{g^\prime=1}^{g-1} \mathds{1}\left(X_i\in[\tilde c_{g^\prime}, \tilde c_{g^\prime+1}]\right) d(1,X_i,g,g^{\prime}) \\
    &+\sum_{g=1}^{Q} \mathds{1}\left(G_i=g\right)\cdot \left(1-\pi(X_i,g)\right)\tilde\pi(X_i,g) \sum_{g^\prime=g+1}^{Q} \mathds{1}\left(X_i\in[\tilde c_{g^\prime-1}, \tilde c_{g^\prime}]\right)  d(0,X_i,g,g^{\prime}) \Big].
\end{aligned}
\end{equation}
Unfortunately, it is not possible to directly optimize $\widehat\Xi_2$ over the restricted model class $\mathcal{M}_d$ because a model in this class contains an unknown nuisance component ${\tilde{d}}(\cdot,g,g^\prime)$.
This term appears in the point-wise upper and lower bounds, $ B_{\ell}(w,x,g,g^\prime)$ and $ B_{u}(w,x,g,g^\prime)$.
In Appendix~\ref{appendix:add:DRlearner}, we propose an efficient two-stage doubly robust estimator $\hat{\tilde{d}}^{DR}(\cdot,g,g^\prime)$ that has a fast convergence rate.
This model builds upon the DR-learner developed for estimating the conditional average treatment effect \citep{kennedy2020towards}.

We construct an \textit{empirical} model class $\widehat{\mathcal{M}}_d$ by plugging in the estimates in place of their true values given in Equation~\eqref{equ:genmulti:bound}.
The empirical bounds, $\hat{B}_{\ell}(w,x,g,g^\prime)$ and $\hat{B}_{u}(w,x,g,g^\prime)$, for any value of $x$, are given by:
 \begin{equation}
     \begin{aligned}\label{equ:genmulti:bound:DB}
    \hat{B}_{\ell}(w,x,g,g^\prime)& =  \underset{\begin{subarray}{c}
  x^{\prime} \in \mathcal{X}_{gw} \cap \mathcal{X}_{g^\prime w}   \\
    x^\prime \to \tilde c_{{g^\ast}}
  \end{subarray}}{\operatorname{lim}}
  \hat{\tilde{d}}(x^\prime,g,g^\prime) - \lambda_{wgg^\prime} |x-x^\prime|  \\
 \hat{B}_{u}(w,x,g,g^\prime) & = \underset{\begin{subarray}{c}
  x^{\prime} \in \mathcal{X}_{gw} \cap \mathcal{X}_{g^\prime w}   \\
    x^\prime \to \tilde c_{{g^\ast}}
  \end{subarray}}{\operatorname{lim}}   \hat{\tilde{d}}(x^\prime,g,g^\prime) + \lambda_{wgg^\prime} |x-x^\prime|.
     \end{aligned}
 \end{equation}
We discuss choosing values of the smoothness parameter $\lambda_{wgg^\prime}$ in Section~\ref{sec:multi:smoothpara} below. 

With these three components in place, we obtain an \textit{empirical} optimal policy by maximizing the worst-case in-sample value across policies $\pi \in \Piol$:
\begin{equation}\label{equ:multi:estor}
\hat{\pi} \ = \ {\operatorname{argmax}}_{\pi \in \Piol} \left\{ \widehat{\mathcal{I}}_{\text{iden}} + \widehat\Xi_{\text{DR}} + \min _{d \in \widehat{\mathcal{M}}_{d}} \widehat\Xi_2\right\}.
\end{equation}

\subsection{Choosing the smoothness parameter}
\label{sec:multi:smoothpara}

Smoothness restrictions are often used to obtain partial identification \citep[e.g.,][]{manski1997monotone,kim2018identification}. 
 Similar to these cases, Assumption~\ref{ass:general:lipdiff} requires a priori knowledge of the degree of smoothness, directly controlled by the corresponding smoothness parameter $\lambda_{wgg^\prime}$. 
A greater value of $\lambda_{wgg^\prime}$ will lead to a more conservative policy whose new thresholds are closer to the original ones.
In principle, the choice of the smoothing parameter value $\lambda_{wgg^\prime}$ should be guided by subject-matter expertise.
Here, we propose a default data-driven method based on the assumption that the least smooth area of the difference function occurs in the region of overlapping policies between the two groups \citep{rambachan2023more}.

To choose the value of $\lambda_{wgg^\prime}$, we leverage our proposed two-stage doubly robust estimation procedure described in Appendix~\ref{appendix:add:DRlearner}. In short, we first construct a DR pseudo-outcome for the actual observed difference $\tilde d(\cdot, g, g')$.
We then fit a local polynomial regression with the running variable as the only predictor.
The fitted nonparametric model also yields the estimates of their first local derivatives $\tilde d^{(1)}(\cdot, g, g')$ based on the local polynomial regression method proposed by \cite{calonico2018effect,calonico2020coverage}.
We compute the absolute value of the estimated first-order derivative at a grid of points in the region of overlapping policies between the two groups, and take the maximum value as $\lambda_{wgg^\prime}$.
In practice, as illustrated in Section~\ref{sec:app}, we recommend considering a range of plausible values of $\lambda_{wgg^\prime}$, e.g., multiplying those estimates by a constant, e.g., 2, 4, and 8, and conducting a sensitivity analysis to illustrate the sensitivity of learned policies to the strength of the smoothness assumption  (see \citet{rambachan2023more,kim2018identification,imbens2019optimized} for a similar practical advice).

\section{Theoretical guarantees}
\label{sec:theory}

To better understand the statistical properties of the learned policy described above, we compute an asymptotic bound on the utility loss relative to the baseline policy and the oracle optimal policy within the policy class.
Before presenting the results, we introduce additional assumptions necessary for characterizing the estimation error of the policy value and the regret bound.
These assumptions are stronger than the minimal assumptions required for identification.


\subsection{Assumptions for estimation}


First, as in much of the existing policy learning literature \citep[e.g.,][]{zhou2018offline,nie2021learning,zhan2021policy}, we assume that the potential outcomes are bounded.
\begin{assumption}[Bounded outcome]\label{ass:gen:bo}
There exists a $\Lambda \geq 0$ such that $\left|Y(w)\right|\leq \Lambda$.
\end{assumption}
\noindent This bounded assumption can be generalized to unbounded but sub-Gaussian random variables \citep{zhou2018offline,athey2021policy}.

Second, we consider the following conditions on the convergence rates for the estimators of the nuisance components in $\widehat{\Xi}_{\text{DR}}$ (Equation~\eqref{equ:general:multi:empDR}).
These conditions are satisfied for many semi-parametric or non-parametric estimators (see \cite{zhou2018offline} for a brief summary).
\begin{assumption}[Convergence rates of estimated nuisance functions]\label{ass:rate:general}
For every fold $k\in\{1,2, \ldots, K\}$:
\begin{equation}
    \sup_{x,g}\left\{|\hat{\tilde{ m}}^{-k}(x,g)-{\tilde{ m}}(x,g)|\right\} \stackrel{p}{\longrightarrow} 0, \quad \sup_{x}\left\{|\hat{e}^{-k}_g(x)-e_g(x)|\right\}\stackrel{p}{\longrightarrow} 0
\end{equation}
and for  every  $k\in\{1,2, \ldots, K\}$ and $g,g^\prime\in\mathcal{Q}$, $g^\prime\neq g$:

\begin{equation}
    \begin{aligned}\label{equ:rate:general:equ2}
          \mathbb{E}\left[\left(\hat{\tilde{ m}}^{-k}(X,g)-{\tilde{ m}}(X,g)\right)^{2}\right] \cdot \mathbb{E}\left[ \left(\frac{\hat{e}^{-k}_g(X)}{\hat{e}^{-k}_{g^{\prime}}(X)}-\frac{{e}_g(X)}{{e}_{g^{\prime}}(X)}\right)^2\right]=o\left(n^{-1}\right),
    \end{aligned}
\end{equation}
where $X$ is taken to be an independent test sample drawn according to the density $f_X(x)$.
\end{assumption}
Assumption~\ref{ass:rate:general} allows us to obtain tight regret bounds for learning policies using the proof strategy of \cite{zhou2018offline} \citep[see also][]{athey2021policy}.
Such conditions have been extensively used in the semi-parametric estimation literature \citep[e.g.,][]{newey2018cross,farrell2015robust,chernozhukov2018double}, which only requires the product error bound of the two nuisance components to be
$o\left(n^{-1}\right)$.
Under the multi-cutoff RD design, Assumption~\ref{ass:rate:general} is especially reasonable since the running variable $X$ is one-dimensional,
and so we do not suffer from the curse of dimensionality when non-parametrically estimating the nuisance functions.
This contrasts with typical observational studies where one must adjust for many observed confounders.

\begin{remark}
The proposed doubly-robust estimator allows one to choose any estimators of $\tilde{m}$ and ${e}_{g}$ that satisfy Assumption~\ref{ass:rate:general}. Some estimators may require additional assumptions, such as higher-order smoothness for the nuisance model (e.g., when using a local linear estimator for $\tilde{m}$ \citep{calonico2018effect}).  These estimation assumptions, however, are fundamentally different from and generally do not conflict with the identification assumption (Assumption~\ref{ass:general:lipdiff}).
\end{remark}

Finally, we quantify the effect of using the estimated empirical bounds (i.e., using the plug-in estimates of ${\tilde{d}}(x,g,g^\prime)$) on the regret of the learned policy by invoking the following assumption that characterizes the estimation error rate of  $\hat{\tilde{d}}(\cdot,g,g^\prime)$.
\begin{assumption}[Estimation Error of $\hat{\tilde{d}}(\cdot,g,g^\prime)$]\label{ass:bdrate}
For some sequence $\rho_n\rightarrow\infty$, assume that 
$$ \left|\underset{\begin{subarray}{c}
  x^{\prime} \in \mathcal{X}_{gw} \cap \mathcal{X}_{g^\prime w}   \\
    x^\prime \to \tilde c_{{g^\ast}}
  \end{subarray}}{\operatorname{lim}}
  \tilde{d}(x^\prime,g,g^\prime) 
  -\underset{\begin{subarray}{c}
  x^{\prime} \in \mathcal{X}_{gw} \cap \mathcal{X}_{g^\prime w}   \\
    x^\prime \to \tilde c_{{g^\ast}}
  \end{subarray}}{\operatorname{lim}}
  \hat{\tilde{d}}(x^\prime,g,g^\prime)\right|=O_p\left(\rho_n^{-1}\right)$$
for all $w \in \{0,1\}$ and $g,g^\prime\in \mathcal{Q}$.
\end{assumption}

This assumption concerns the estimation error of the \textit{conditional} cross-group difference function ${\tilde{d}}(\cdot,g,g^\prime)$ at the existing discrete cutoff values $x=\tilde{c}_{1},\ldots,\tilde{c}_{Q}$. 
In practice, this convergence rate $\rho_n$ will depend on the specific choice of the estimator $\hat{\tilde{d}}(\cdot,g,g^\prime)$.
To maintain the generality of our results, we remain agnostic about the exact rate of $\rho_n^{-1}$ in our theoretical results, but we  discuss  special cases in Remark~\ref{remark:rhorate} below.

From the definitions of the lower bounds in Equations~\eqref{equ:genmulti:bound}~and~\eqref{equ:genmulti:bound:DB},
we can write the estimation error of the empirical (lower) bound $\hat{B}_\ell(w, x, g, g')$ in terms of the estimation error of $\hat{\tilde{d}}(\cdot,g,g^\prime)$,
\begin{equation}\label{equ:B&d}
   \begin{aligned}
  &\frac{1}{n}\sum_{i=1}^{n} \max _{w \in \{0,1\},\ g,g^\prime\in \mathcal{Q}} \left | {B}_{\ell}(w,X_i,g,g^\prime)-\hat{B}_{\ell}(w,X_i,g,g^\prime)\right|   \\
  = &\
  \max _{w \in \{0,1\},\ g,g^\prime\in \mathcal{Q}} \left |  \underset{\begin{subarray}{c}
  x^{\prime} \in \mathcal{X}_{gw} \cap \mathcal{X}_{g^\prime w}   \\
    x^\prime \to \tilde c_{{g^\ast}}
  \end{subarray}}{\operatorname{lim}}
  \tilde{d}(x^\prime,g,g^\prime) 
  -\underset{\begin{subarray}{c}
  x^{\prime} \in \mathcal{X}_{gw} \cap \mathcal{X}_{g^\prime w}   \\
    x^\prime \to \tilde c_{{g^\ast}}
  \end{subarray}}{\operatorname{lim}}
  \hat{\tilde{d}}(x^\prime,g,g^\prime)\ 
  \right|. 
  \end{aligned}
\end{equation}
Therefore, under Assumption~\ref{ass:bdrate}, the estimation error of the empirical (lower) bound is also governed by the rate $\rho_n^{-1}$:
$$ \mathbb{E}\left[\frac{1}{n}\sum_{i=1}^{n} \max _{w \in \{0,1\},\ g,g^\prime\in \mathcal{Q}} \left |  {B}_{\ell}(w,X_i,g,g^\prime)-\hat{B}_{\ell}(w,X_i,g,g^\prime)\right|\right]=O\left(\rho_n^{-1}\right).$$
Here, we focus on the lower bounds since the inner optimization problem of Equation~\eqref{equ:multi:estor} depends only on the worst value of $\hat{\tilde{d}}(x,g,g^\prime)$, i.e, ${B}_{\ell}(w,x,g,g^\prime)$; however, this rate applies to the upper bound as well.
In the subsequent sections, we will examine how this error affects the final performance of the learned policy.

\subsection{Performance relative to the baseline policy}

Under the above assumptions, we compare the learned policy $\hat{\pi}$ to the baseline policy as well as the oracle policy that maximizes the expected utility. 
To simplify the statement of the main theorem, we first define $ {\Gamma}^{(1)}_{ig} $ and $ {\Gamma}^{(0)}_{ig} $ as the doubly-robust scores for Group~$g$ used to extrapolate to the left (from the treated to the control) and right (from the control to the treated) of the current cutoff, respectively: 
{\footnotesize
\begin{equation}
    \begin{aligned}\label{equ:thm:DRscore}
        {\Gamma}^{(1)}_{ig} &= \mathds{1}\left(G_i=g\right) \sum_{g^\prime=1}^{g-1}  \mathds{1}\left(X_i\in[\tilde c_{g^\prime},\tilde c_{g^\prime+1}]\right)\tilde m(X_i,g^{\prime})+    \sum_{g^\prime=1}^{g-1}\mathds{1}\left(G_i=g^{\prime}\right) \mathds{1}\left(X_i\in[\tilde c_{g^\prime},\tilde c_{g^\prime+1}]\right)  \frac{e_g(X_i)}{e_{g^{\prime}}(X_i)}(Y_i-\tilde m(X_i,g^{\prime})), \\
{\Gamma}^{(0)}_{ig}& = \mathds{1}\left(G_i=g\right)\sum_{g^\prime=g+1}^{Q}  \mathds{1}\left(X_i\in[\tilde c_{g^\prime-1}, \tilde c_{g^\prime}]\right)\tilde m(X_i,g^{\prime})+ \sum_{g^\prime=g+1}^{Q} \mathds{1}\left(G_i=g^{\prime}\right) \mathds{1}\left(X_i\in[\tilde c_{g^\prime-1}, \tilde c_{g^\prime}]\right)\frac{e_g(X_i)}{e_{g^{\prime}}(X_i)}(Y_i-\tilde m(X_i,g^{\prime})).
    \end{aligned}
\end{equation}
}

To characterize the safety property of our learned policy $\hat{\pi}$, we first bound the utility loss relative to the baseline policy $\tilde \pi$. 
Here, we report the simplified version of this bound, and provide a more general version in Appendix~\ref{app:proof:thm1}.
\begin{theorem}[Statistical safety guarantee]\label{thm:multi:safe:general}
Under Assumptions~\ref{ass:general:lipdiff},~\ref{ass:overlap:general},~\ref{ass:gen:bo},~\ref{ass:rate:general}~and~\ref{ass:bdrate}, the expected utility loss of the learned safe policy $\hat{\pi}$ relative to the baseline policy $\tilde{\pi}$ is,
 \begin{equation*}
     \begin{aligned}
V(\tilde{\pi})-V(\hat{\pi})\leq O_p\left(\frac{\Lambda\sqrt{Q}+Q\sqrt{\max_{g\in\mathcal{Q}}V^\ast_{g}}}{\sqrt{n}}\vee\rho_n^{-1}\right),
\end{aligned}
 \end{equation*}
 where the worst-case variance of evaluating the difference between two policies in $\Piol$ for any specific group $g$ is given by,
 \[
V^\ast_{g}=  \sup _{\pi_a(\cdot,g), \pi_b(\cdot,g) \in \Piol}\mathbbm{E}\left[  (\pi_a(X_i,g)- \pi_b(X_i,g))^2\cdot \left\{\left(1-\tilde\pi(X_i,g)\right)
\left({\Gamma}^{(1)}_{ig}\right)^2 + \tilde\pi(X_i,g) 
\left({\Gamma}^{(0)}_{ig}\right)^2\right\}\right].
\]
\end{theorem}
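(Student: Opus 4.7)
The plan is to reduce the safety guarantee to a standard empirical regret bound by working with the worst-case value functional. Define the oracle and empirical worst-case values
$$Q(\pi) \ := \ \min_{d \in \mathcal{M}_d} V(\pi,d), \qquad \widehat{Q}(\pi) \ := \ \widehat{\mathcal{I}}_{\text{iden}}(\pi) + \widehat\Xi_{\text{DR}}(\pi) + \min_{d \in \widehat{\mathcal{M}}_d}\widehat\Xi_2(\pi,d),$$
so that $\hat\pi = \argmax_{\pi \in \Piol}\widehat{Q}(\pi)$. The first step is a population-level safety argument. When $\pi = \tilde\pi$, the indicator products $\pi(1-\tilde\pi)$ and $(1-\pi)\tilde\pi$ appearing in Proposition~\ref{prop:unident} vanish identically, so $\Xi_1(\tilde\pi) = \Xi_2(\tilde\pi,d) = 0$ for every $d \in \mathcal{M}_d$, and $\mathcal{I}_{\text{iden}}(\tilde\pi) = \mathbb{E}[Y] = V(\tilde\pi)$. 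Hence $Q(\tilde\pi) = V(\tilde\pi)$, while $Q(\hat\pi) \leq V(\hat\pi)$ trivially, yielding the reduction
$$V(\tilde\pi) - V(\hat\pi) \ \leq \ Q(\tilde\pi) - Q(\hat\pi).$$

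Next, inserting $\pm\widehat{Q}(\tilde\pi) \pm \widehat{Q}(\hat\pi)$ and using that $\widehat{Q}(\hat\pi) \geq \widehat{Q}(\tilde\pi)$ by the definition of $\hat\pi$ gives the oracle inequality
$$Q(\tilde\pi) - Q(\hat\pi) \ \leq \ 2\sup_{\pi \in \Piol}\bigl|\widehat{Q}(\pi) - Q(\pi)\bigr|.$$
I split this uniform error into three pieces: $A_n = \sup_\pi|\widehat{\mathcal{I}}_{\text{iden}}(\pi) - \mathcal{I}_{\text{iden}}(\pi)|$, $B_n = \sup_\pi|\widehat\Xi_{\text{DR}}(\pi) - \Xi_1(\pi)|$, and $C_n = \sup_\pi|\min_{d \in \widehat{\mathcal{M}}_d}\widehat\Xi_2(\pi,d) - \min_{d \in \mathcal{M}_d}\Xi_2(\pi,d)|$, and bound each separately.

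The term $A_n$ is a uniform empirical-process deviation of a bounded sample mean over the class $\Piol$ of group-wise threshold policies. Since $\Piol$ has VC dimension of order $Q$ (one threshold per group), standard Rademacher/maximal inequalities together with $|Y| \leq \Lambda$ (Assumption~\ref{ass:gen:bo}) yield $A_n = O_p(\Lambda \sqrt{Q/n})$. For $B_n$, I adapt the cross-fitted doubly-robust policy-learning analysis of \cite{zhou2018offline} and \cite{athey2021policy}: the product-rate condition in Assumption~\ref{ass:rate:general}, combined with Assumption~\ref{ass:overlap:general}, makes the nuisance remainder $o_p(n^{-1/2})$ uniformly in $\pi$, so $B_n$ is asymptotically equivalent to the uniform empirical deviation of the oracle DR scores $\Gamma_{ig}^{(1)}, \Gamma_{ig}^{(0)}$ in \eqref{equ:thm:DRscore}. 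Symmetrization plus a maximal inequality aggregated across the $Q$ group-specific policy components yields $B_n = O_p(Q \sqrt{\max_g V_g^*/n})$; the factor $Q$ (rather than $\sqrt{Q}$) reflects that each policy-class deviation decomposes into up to $Q$ per-group DR contributions whose envelope variance is precisely $V_g^*$. For $C_n$, the inner minimization is pointwise attained at either $B_\ell$ or $B_u$ depending on the sign of the weight multiplying $d(w,X_i,g,g')$, so the difference of minima is dominated pointwise by $\max_{w,g,g'}|\hat B_\ell - B_\ell|$ (and symmetrically for $B_u$); averaging and invoking Assumption~\ref{ass:bdrate} gives $C_n = O_p(\rho_n^{-1})$.

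Combining the three rates and taking the slower of the parametric rate and $\rho_n^{-1}$ delivers the stated bound. The main obstacle is the $B_n$ step: I need to carefully decouple the nuisance estimation error from the uniform-in-$\pi$ empirical process of the cross-fitted doubly-robust statistic, while tracking how the $Q$ group-specific scores aggregate into a single policy-class envelope. The quantity $V_g^*$ introduced in the theorem statement is designed to be exactly the envelope that emerges from this aggregation, so that maximizing over $g$ and summing across the $Q$ group-specific increments produces the $Q\sqrt{\max_g V_g^*}$ scaling in the stated rate.
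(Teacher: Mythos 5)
Your proposal follows essentially the same route as the paper's proof: the population safety reduction via the worst-case value (no extrapolation at $\tilde\pi$, so the worst case equals the truth there, while it lower-bounds the truth at $\hat\pi$), an oracle inequality exploiting that $\hat\pi$ maximizes the empirical worst-case objective, and a three-way error split handled by a VC/Rademacher bound for the identifiable part, the Zhou-et-al-style cross-fitted DR analysis for $\Xi_1$, and Assumption~\ref{ass:bdrate} for the estimated bounds. Two points deserve attention. First, your claimed bound $C_n=O_p(\rho_n^{-1})$ is incomplete: $C_n$ compares an \emph{empirical} minimum over $\widehat{\mathcal{M}}_d$ with a \emph{population} minimum over $\mathcal{M}_d$, so besides the bound-estimation error $|\hat B_\ell-B_\ell|$ it also contains the sampling fluctuation of $\frac{1}{n}\sum_i(\cdots)B_\ell(w,X_i,g,g')$ around its expectation; the paper absorbs that piece into $\sup_\pi|\hat J(\pi)-J(\pi)|$ (its Lemma on the Rademacher complexity of the class including the difference-function terms, with envelope $2\Lambda$), and you need the analogous empirical-process bound, which contributes at the same $O_p(\sqrt{Q/n})$ order and so does not change the final rate, but must be stated. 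Second, your assertion that the single-policy uniform deviation $B_n=\sup_\pi|\widehat\Xi_{\text{DR}}(\pi)-\Xi_1(\pi)|$ has envelope $V^\ast_g$ is correct but for a reason you should make explicit: because the DR scores are multiplied by the disagreement indicators $\pi(1-\tilde\pi)$ and $(1-\pi)\tilde\pi$, both $\Xi_1$ and $\widehat\Xi_{\text{DR}}$ vanish at $\pi=\tilde\pi$, and $\pi(1-\tilde\pi)+(1-\pi)\tilde\pi=(\pi-\tilde\pi)^2$; hence the single-policy deviation is literally the policy-pair deviation $\hat\Delta(\pi,\tilde\pi)-\Delta(\pi,\tilde\pi)$, which is what the paper bounds uniformly over pairs (its Lemmas on $\tilde\Delta-\Delta$ and $\hat\Delta-\tilde\Delta$) to obtain the $(\pi_a-\pi_b)^2$-weighted variance $V^\ast_g$. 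Without that observation, a direct symmetrization of the single-policy sup would not deliver $V^\ast_g$ as the constant.
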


The theorem ensures, asymptotically, that the learned policy will not yield a worse overall utility than the baseline policy. 
The constant term $V^\ast_{g}$ in the bound arises from the proposed doubly robust estimator $\widehat\Xi_{\text{DR}}$ \citep[see also][]{zhou2018offline,athey2021policy}.
In addition, the rate of convergence depends on the slower of the two estimation error rates: one is the standard $n^{-1/2}$ rate that comes from estimating ${\mathcal{I}}_{\text{iden}}$ and $\Xi_{\text{DR}}$, and the other is the estimation error $\rho_n^{-1}$ for the empirical model class $\widehat{\mathcal{M}}_d$. 
This term $\rho_n^{-1}$ typically exhibits a slower-than-parametric rate, depending on the accuracy of estimating the local difference ${\tilde{d}}(\cdot,g,g^\prime)$ at fixed points. 

Compared to the optimal $O_p(n^{-1/2})$ regret bound achieved in the standard policy learning settings with point identification \citep{kallus2018confounding,Kitagawa2018,zhan2021policy,athey2021policy}, the presence of this additional error term $\rho_n^{-1}$ is due to the partial identification under multi-cutoff RD designs. Such a penalty is generally unavoidable under partial identification settings \citep{kallus2018confounding,khan2023off}.
 Although it is slower than $\sqrt{n}$, in RD contexts with a single running variable and smooth difference functions, $\rho_n^{-1}$ can typically achieve a good convergence rate.

\begin{remark}\label{remark:rhorate}
Assumption~\ref{ass:general:lipdiff} implies that the difference function $d$ has a bounded first-derivative. In principle,  $d$ can exhibit higher-order smoothness, and we might expect it to be smoother than the conditional outcome functions themselves (e.g., $d$ is infinitely smooth in the special case of parallel trends assumption).
Following \cite{kennedy2020towards}, our proposed estimator $\hat{\tilde{d}}^{DR}(\cdot,g,g^\prime)$ in Section~\ref{sec:multi:empirical:E2} can take advantage of this extra smoothness to aid in estimation.
For example, when the nuisance components satisfy certain weak second-order error conditions and $\tilde{d}(\cdot,g,g^\prime)$ is $\gamma$-smooth, the oracle rate for locally estimating $\tilde{d}(\cdot,g,g^\prime)$ at fixed points is characterized as $\rho_n^{-1} \asymp n^{-1/(2+1/\gamma)}$, given that the running variable is uni-dimensional. This minimax rate is slower than root-$n$ because the parameter is an infinite-dimensional regression curve. However, as the difference function becomes increasingly smooth, the rate $\rho_n^{-1}$ should approach the fast parametric rate. In particular, if the parallel trends assumption holds and the difference function is constant (i.e., $\lambda_{wgg'} = 0$ and the function is infinitely smooth), we can point identify the objective, $\rho_n = n^{-1/2}$, and the expected utility loss achieves the parametric rate.
\end{remark}



\begin{remark}\label{remark:bound}
 While our asymptotic results are derived as the number of observations $n$ increases, the constant term in the regret bound depends on the number of total groups, $Q$. 
 There are two reasons for this. 
 First, as $Q$ increases, the policy class becomes larger because there is one threshold for each group, which makes it more challenging to learn an optimal policy. 
 This growth, represented by the $\sqrt{Q}$ term, is an inherent property of the policy learning problem.
 Second, to simplify the results, we include the worst-case variance across the groups, scaled by $Q$ in the bound. 
 However, this explicit linear-order dependence can likely be improved, in contrast to the first term, which is a fundamental property of the problem.
\end{remark}

\subsection{Performance relative to the oracle policy}

Lastly, we  compare our learned policy to the infeasible, oracle optimal policy $\pi^\ast$ in the same policy class, defined as:
$$\pi^{\ast} \ := \ \underset{\pi \in  \Piol}{\operatorname{argmax}}\  V(\pi).$$
As before, we report a simpler version of the bound here, while its complete version is found in Appendix~\ref{app:proof:thm2}.
\begin{theorem}[Empirical optimality gap under the multi-cutoff regression discontinuity designs]\label{thm:multi:optimalgap:general}
Under Assumptions~\ref{ass:general:lipdiff},~\ref{ass:gen:bo},~\ref{ass:overlap:general},~\ref{ass:rate:general}~and~\ref{ass:bdrate}, the regret of the learned safe policy $\hat{\pi}$ relative to the optimal policy $\pi^{*}$ is, 
\begin{equation*}
    \begin{aligned}
V(\pi^\ast)-V(\hat{\pi})\leq O_p\left(\frac{\Lambda\sqrt{Q}+Q\sqrt{\max_{g\in\mathcal{Q}}V^\ast_{g}}}{\sqrt{n}}\vee\rho_n^{-1}\right) + \frac{2}{n}\sum_{i=1}^{n} \max _{w \in \{0,1\},g,g^\prime\in\mathcal{Q}}
              \lambda_{wgg^\prime} |X_i-\tilde c_{{g^\ast}}|,
    \end{aligned}
\end{equation*}
 where ${g^\ast}=w(g\vee g^\prime)+(1-w)(g\wedge g^\prime)$ denotes the nearest cutoff index used to extrapolate the difference function $d(w,x,g,g^\prime)$.
\end{theorem}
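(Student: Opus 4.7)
} The strategy parallels that of Theorem~\ref{thm:multi:safe:general}, but replaces the baseline $\tilde\pi$ with the infeasible oracle $\pi^\ast$. The crucial new ingredient is that $\pi^\ast$ generally disagrees with $\tilde\pi$ on a nontrivial region, so its value depends on the unidentified difference function $d$; consequently, the regret must explicitly absorb a partial-identification gap term on top of the estimation-error rate that already appeared in Theorem~\ref{thm:multi:safe:general}.

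First, I would set up the central decomposition. Writing $\widehat V^{\inf}(\pi) := \widehat{\mathcal{I}}_{\text{iden}}(\pi) + \widehat\Xi_{\text{DR}}(\pi) + \min_{d\in\widehat{\mathcal{M}}_d}\widehat\Xi_2(\pi,d)$ for the empirical worst-case objective optimized in \eqref{equ:multi:estor}, and inserting it between $V(\pi^\ast)$ and $V(\hat\pi)$, gives
\begin{equation*}
V(\pi^\ast)-V(\hat\pi) \;=\; \bigl[V(\pi^\ast)-\widehat V^{\inf}(\pi^\ast)\bigr] + \bigl[\widehat V^{\inf}(\pi^\ast)-\widehat V^{\inf}(\hat\pi)\bigr] + \bigl[\widehat V^{\inf}(\hat\pi)-V(\hat\pi)\bigr].
\end{equation*}
The middle bracket is nonpositive by the optimality of $\hat\pi$, so it suffices to bound the two outer brackets.

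Second, I would bound the last bracket $\widehat V^{\inf}(\hat\pi)-V(\hat\pi)$ exactly as in Theorem~\ref{thm:multi:safe:general}: construct a pointwise projection $\tilde d_{\text{proj}}$ of $d_{\text{true}}$ onto $\widehat{\mathcal{M}}_d$, use $\widehat V^{\inf}(\hat\pi)\le \widehat V(\hat\pi,\tilde d_{\text{proj}})$, and then apply cross-fitted concentration arguments (via Assumption~\ref{ass:rate:general} and the uniform bracketing of $\Piol$ using the doubly-robust scores $\Gamma^{(0)}_{ig},\Gamma^{(1)}_{ig}$) together with Assumption~\ref{ass:bdrate} to control $|\widehat V(\hat\pi,\tilde d_{\text{proj}})-\widehat V(\hat\pi,d_{\text{true}})|$. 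This contributes the $O_p\bigl((\Lambda\sqrt{Q}+Q\sqrt{\max_g V^\ast_g})/\sqrt n \,\vee\, \rho_n^{-1}\bigr)$ piece.

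Third, I would split the first bracket into estimation error plus an explicit partial-identification width:
\begin{equation*}
V(\pi^\ast)-\widehat V^{\inf}(\pi^\ast) \;=\; \bigl[V(\pi^\ast)-\widehat V(\pi^\ast,d_{\text{true}})\bigr] + \bigl[\widehat V(\pi^\ast,d_{\text{true}})-\widehat V^{\inf}(\pi^\ast)\bigr].
\end{equation*}
The first bracket is a pointwise estimation error at the fixed policy $\pi^\ast$, handled by the same doubly-robust concentration used in Step~2 and contributing to the $O_p$ term. For the second bracket, observe that for any feasible $d\in\widehat{\mathcal{M}}_d$, the pointwise difference $|\widehat\Xi_2(\pi^\ast,d_{\text{true}})-\widehat\Xi_2(\pi^\ast,d)|$ is bounded by the sample average of the constraint width $\hat B_u-\hat B_\ell = 2\lambda_{wgg^\prime}|X_i-\tilde c_{g^\ast}|$, which matches the explicit term in the theorem. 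The wrinkle is that $d_{\text{true}}$ itself may not lie in $\widehat{\mathcal{M}}_d$: to close this, project $d_{\text{true}}$ onto $\widehat{\mathcal{M}}_d$ and absorb the projection slack into $O_p(\rho_n^{-1})$ via Assumption~\ref{ass:bdrate} (using the identity~\eqref{equ:B&d}).

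The main obstacle is precisely this feasibility mismatch. Because the identified-set constraints are themselves estimated from the data, the true $d_{\text{true}}$ need not belong to $\widehat{\mathcal{M}}_d$, so one cannot naively assert $\widehat V^{\inf}(\pi)\le \widehat V(\pi,d_{\text{true}})$. A careful pointwise truncation argument, paired with Assumption~\ref{ass:bdrate} (which controls only the lower bound, as needed since the inner optimization in~\eqref{equ:multi:estor} bites only there), is what yields the extra $\rho_n^{-1}$ contribution and keeps the explicit partial-ID width term sharp, while all the remaining pieces follow routinely by combining this with the Theorem~\ref{thm:multi:safe:general} machinery.
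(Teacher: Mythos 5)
Your proposal is correct and follows essentially the same route as the paper's proof: a three-term sandwich around the empirical objective, the middle term killed by the optimality of $\hat{\pi}$, the outer terms controlled by the Rademacher/doubly-robust concentration lemmas plus the $O_p(\rho_n^{-1})$ discrepancy between $\mathcal{M}_d$ and $\widehat{\mathcal{M}}_d$ from Assumption~\ref{ass:bdrate}, and the explicit width $\hat{B}_u-\hat{B}_\ell = 2\lambda_{wgg^\prime}|X_i-\tilde c_{g^\ast}|$ of the identified set evaluated at $\pi^\ast$ supplying the extra partial-identification term. The only cosmetic difference is that you sandwich with the empirical worst-case value $\widehat V^{\inf}$ whereas the paper sandwiches with the plug-in value at the true $d$ and then passes to the worst case over $\mathcal{M}_d$ and $\widehat{\mathcal{M}}_d$; the two arrangements are equivalent.
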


Compared to Theorem~\ref{thm:multi:safe:general}, there is an additional term that is proportional to the smoothness parameter $\lambda_{wgg^\prime}$ and the average distance to the nearest cutoff.
This term measures the sample average of the length of the point-wise partial identification bound for $d(w,X_i,g,g^\prime)$ under Assumption~\ref{ass:general:lipdiff}.
While the robust optimization procedure ensures that our learned policy will be no worse than the baseline policy, asymptotically, Theorem~\ref{thm:multi:optimalgap:general}
shows that this safety guarantee comes at the cost of obtaining a potentially suboptimal policy.
If the cross-group difference deviates substantially from a parallel trend, i.e., $\lambda_{wgg^\prime}>0$ is larger, then the interval will be wider, leading to a potentially larger optimality gap.

We note that Theorem~\ref{thm:multi:optimalgap:general} assumes the smoothness parameters $\lambda_{wgg^\prime}$ are known a priori.
The result does not account for the uncertainty that might arise due to the use of a data-driven method for estimating $\lambda_{wgg^\prime}$.
In different contexts with partial identification, \cite{kallus2018confounding} and \cite{khan2023off} take an approach similar to the one discussed in Section~\ref{sec:multi:smoothpara} by treating such hyper-parameters as known and conducting a sensitivity analysis.

\section{Empirical application}\label{sec:app}

In this section, we apply the proposed methodology to the ACCES (Access with Quality to Higher Education) program, which is a national-level subsidized loan program in Colombia.
We present additional simulation results evaluating our approach in Appendix \ref{sec:simu}.
Starting in 2002, the ACCES program has provided financial aid to low-income students.
Eligibility for the program is determined by a test score cutoff, which varies across different regions of Colombia and time periods.
The original analysis by \cite{melguizo2016credit} estimated a single RD treatment effect by pooling and normalizing all 33 cutoffs, while \cite{cattaneo2020extrapolating} focused on two cutoffs from different time periods and extrapolated the RD treatment effect for each one.
In contrast to these prior studies, we analyze all the cutoffs across different geographical regions at the same time and learn a new safe policy for each one of them by leveraging the existence of multiple cutoffs.

\subsection{The background}\label{sec:app:background}

We begin by providing some background information about the ACCES program (see \cite{melguizo2016credit} for more details).
To be eligible for the ACCES program, a student must achieve a high score on the national high school exit exam, the SABER 11, administered by the Colombian Institute for the Promotion of Postsecondary Education (ICFES).
Each semester of every year, once students take the SABER 11, the ICFES authorities compute a position score that ranges from 1 to 1,000, based on each student's position in terms of 1,000 quantiles. 
For example, the students who are assigned a position score of 1 should be in the top 0.1\%, while those with the test scores in the bottom 0.1\% receive a position score of 1000.
The SABER 11 position scores are calculated for each semester, and then pooled for each year.
Throughout our analysis, as in \cite{cattaneo2020extrapolating}, we multiply the position score by $-1$ so that the values of the running variable above a cutoff lead to the program eligibility. 

To select ACCES beneficiaries, the Colombian Institute for Educational Loans and Studies Abroad establishes an eligibility cutoff such that students whose position scores are higher than the cutoff becomes eligible for the ACCES program. 
The eligibility cutoff is defined separately for each department (or region) of Colombia, and has changed over time.
Between 2002 and 2008, the cutoff was $-850$ for all Colombian departments.
After 2008, however, the ICEFES used varying cutoffs across different years and departments.

We use data collected from all 33 departments in 2010, when each department used a different cutoff value.
In our application, the running variable, despite being discrete, takes a large number of distinct values.  In addition, we drop departments whose sample size is less than 200, which ensures the reliable use of local polynomial regression \citep{calonico2014robust,calonico2018effect}.
The final dataset includes observations from 23 different departments of Colombia.
Following \cite{melguizo2016credit} and \cite{cattaneo2020extrapolating}, the outcome of interest is an indicator for whether the student enrolls in a postsecondary institution.
Table~\ref{tab:emp:summary} reports the sample size, group-specific baseline cutoff $\tilde{c}_g$, and estimated group-specific RD treatment effect at the baseline cutoff for each department. 

\begin{table}[t]
    \centering
     \caption{Summary of observations from 23 departments in Colombia in year 2010.}
    \begin{tabular}{lcccc}
       \hline 
Department & Sample size & Baseline cutoff & RD Estimate & s.e. \\ 
\hline \\
Magdalena & 214 & -828 & 0.55 & 0.41 \\ 
La Guajira & 209 & -824 & 0.57 & 0.31 \\ 
Bolivar & 646 & -786 & 0 & 0.21 \\ 
Caqueta & 238 & -779 & 0.56 & 0.35 \\ 
Cauca & 322 & -774 & 0.02 & 0.78 \\ 
Cordoba & 500 & -764 & -$0.63\textcolor{blue}{^\star}$ & 0.26 \\ 
Cesar & 211 & -758 & 0.53 & 0.37 \\ 
Sucre & 469 & -755 & $0.21\textcolor{blue}{^\star}$ & 0.11 \\ 
Atlantico & 448 & -754 & $0.81\textcolor{blue}{^\star}$ & 0.15 \\ 
Arauca & 214 & -753 & -0.7 & 1.11 \\ 
Valle Del Cauca & 454 & -732 & $0.61\textcolor{blue}{^\star}$ & 0.16 \\ 
Antioquia & 416 & -729 & $0.7\textcolor{blue}{^\star}$ & 0.21 \\ 
Norte De Santander & 233 & -723 & -0.08 & 0.43 \\ 
Putumayo & 236 & -719 & -0.47 & 0.79 \\ 
Tolima & 347 & -716 & 0.12 & 0.31 \\ 
Huila & 406 & -695 & 0.06 & 0.23 \\ 
Narino & 463 & -678 & 0.08 & 0.19 \\ 
Cundinamarca & 403 & -676 & 0.12 & 0.81 \\ 
Risaralda & 263 & -672 & 0.41 & 0.42 \\ 
Quindio & 215 & -660 & -0.03 & 0.11 \\ 
Santander & 437 & -632 & $0.74\textcolor{blue}{^\star}$ & 0.25 \\ 
Boyaca & 362 & -618 & 0.48 & 0.36 \\ 
Distrito Capital & 539 & -559 & $0.66\textcolor{blue}{^\star}$ & 0.14 \\  
\hline \\
    \end{tabular}
     \footnotesize{\\$\textcolor{blue}{^\star}$ indicates significant RD treatment effect at the 5\% level}
    \label{tab:emp:summary}
\end{table}

\subsection{Empirical findings}\label{sec:app:multi}

We estimate the empirical improvement of the expected utility, i.e., $\hat{V}(\pi)-\hat{V}(\tilde \pi)$, for each candidate policy in the policy class $\Piol$, restricting the candidate cutoffs for each department to be within the range of the original lowest and highest cutoffs, i.e., $c_g \in [-828, -559]$.
We select the value of the smoothness parameter $\lambda_{wgg^\prime}$ for each pairwise difference function $d(w,x,g,g^\prime)$ under Assumption~\ref{ass:general:lipdiff}, by applying the procedure described in Section~\ref{sec:multi:smoothpara}.
We first fit a local polynomial regression of each difference function $d(w,\cdot,g,g^\prime)$, and then take the maximal absolute value of the estimated local first derivative at a grid of points as $\lambda_{wgg^\prime}$.
We also perform a sensitivity analysis
by multipling the data-driven choice of $\lambda_{wgg^\prime}$ by a positive constant $M$ and seeing how the learned policy changes.

\begin{figure}[!t]
    \centering
  \includegraphics[width=1\linewidth]{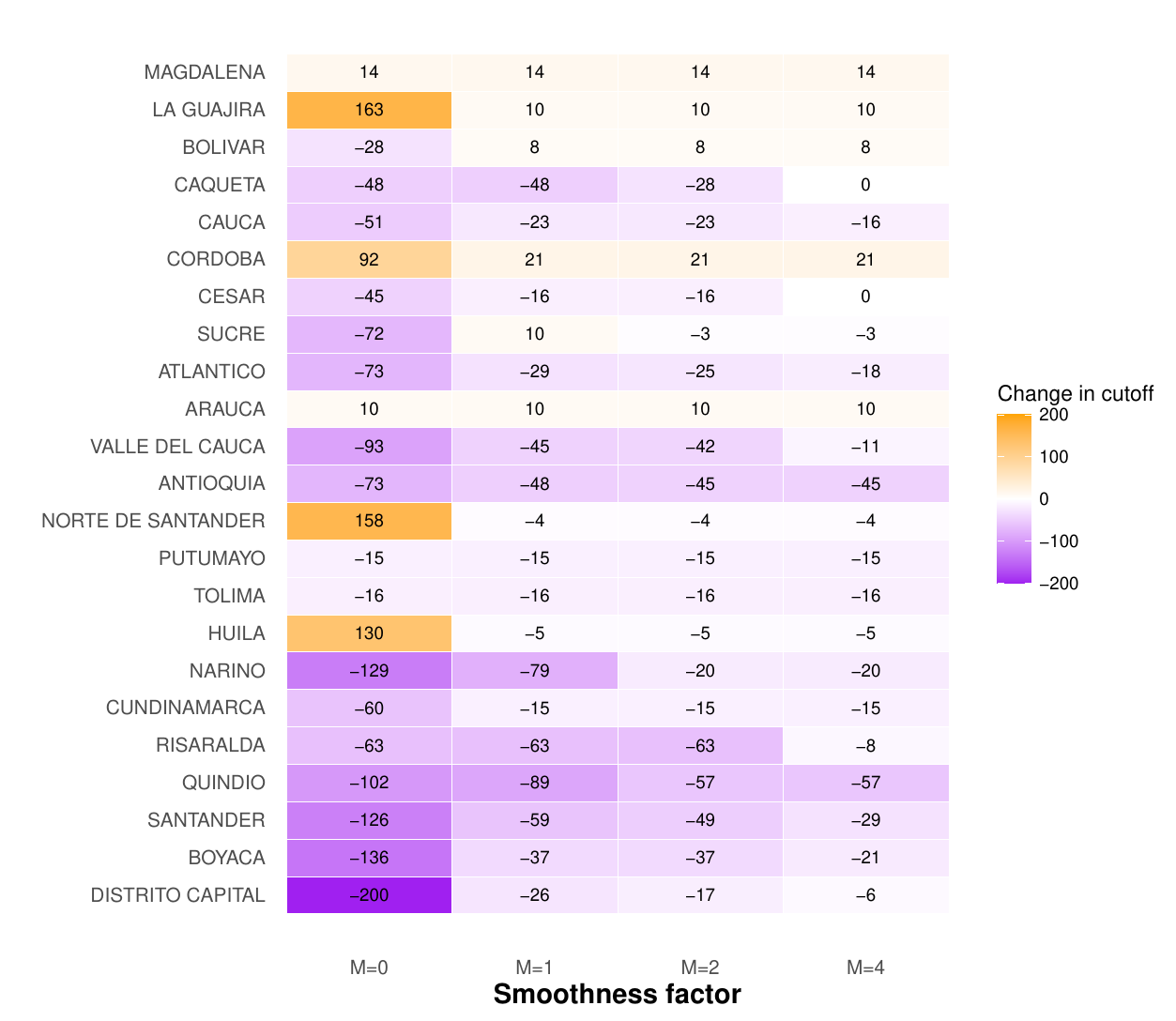}
    \caption{Cutoff change relative to the baseline $\hat{c}_g-\tilde{c}_g$ for each department (y-axis) under different smoothness multiplicative factors (x-axis). }
    \label{fig:realapp:compare}
\end{figure}

We first consider the stronger parallel-trend assumption that the cutoff differences are exactly constant, i.e., $\lambda_{wgg^\prime}=0$, allowing us to point-identify the counterfactual function $d(w,\cdot,g,g^\prime)$.
As shown in the first column ($M=0$) in Figure~\ref{fig:realapp:compare}, for most departments, the learned cutoffs are lower than the baseline cutoffs (highlighted by purple).
Thus, lowering the cutoff, which increases the number of eligible students, is likely to result in a higher overall enrollment rate.
This is not surprising, as financial aid typically should help students attend post-secondary institutions.

However, the learned cutoffs in La Cuarjira, Cordoba, Norte De Santander, and Huila are much greater than the actual baseline cutoffs used.
Referring to the summary results in Table~\ref{tab:emp:summary}, only Cordoba showed a significant negative treatment effect at the cutoff, while the treatment effect of the other three departments were ambiguous. 
Therefore, we find that the parallel-trend assumption may be too aggressive and lead to unnecessary policy changes. 

Next, we consider a weaker smoothness restriction using the data-driven estimate of $\lambda_{wgg^\prime}$, corresponding to the $M=1$ column in Figure~\ref{fig:realapp:compare}.
These estimated values range from  $9.2 \times 10^{-6}$ to $1.4\times 10^{-1}$ with a median of $5.8\times 10^{-3}$ for different possible values of $w,g$ and $g'$.
Overall, we observe that the learned policies are less aggressive compared to the results obtained under the parallel-trend assumption.
Although most departments still have  reduced  cutoffs, the magnitude of the changes is  smaller.
Notably, the policy changes are now more conservative for those departments that would have seen an increase in the cutoff under the ``parallel-trend'' assumption.
Specifically, Cordoba shows the largest change, aligning with the significant negative treatment effect estimate at the cutoff for Cordoba (see Table~\ref{tab:emp:summary}).

We conduct a sensitivity analysis by varying the multiplicative factor $M$. 
As shown in Figure~\ref{fig:realapp:compare}, the learned policies are relatively robust to the choice of the smoothing parameter, although, as expected, a greater value of $\lambda_{wgg^\prime}$ results in a more conservative learned policy that is closer to the baseline policy.

Finally, we incorporate the cost of treatment into the utility function.
Given that the outcome enrollment rate is binary, a general utility function for outcome $y$ under treatment $w$ can be expressed as $yu(1,w)+(1-y)u(0,w)=yu(w) + c(w)$. 
Here, $u(w):=u(1,w) - u(0,w)$ represents the utility change in the treatment condition, and $c(w):=u(0,w)$ can be interpreted as the baseline cost of implementing treatment $w$.
We set the utility gain for enrollment to $1$ under with and without access---i.e. $u(1)=u(0)=1$---
and set the cost of not providing financial aid $c(0)$ to zero and the cost of providing it nto $C$.
Thus, the utility function is of the form $u(y,w)=y-C\times w$.

We examine how changing the cost $C$ affects the learned policy. 
We consider a range of values for the cost, $C\in[0,1]$, to explore the trade-off between cost and utility.
Figure~\ref{fig:realapp:cost} shows the results.
Changing the cost of offering a loan yields optimal robust cutoffs that affect the number of eligible students.
The overall trend indicates that the learned policy will still lower the cutoffs relative to the baseline for most departments when the cost is low to moderate.
When the cost is high relative to the utility gain of enrollment---i.e. when $C \geq 0.6$---the cutoffs start to meaningfully increase, leading to more students being excluded from the program relative to the baseline.
Note that in the Distrito Capital,  where the baseline cutoff is the highest, the learned cutoff will eventually reach a point where it remains unchanged, due to the restriction that the learned cutoffs cannot exceed the maximum baseline cutoff.



\begin{figure}[!t]
    \centering
    \includegraphics[width=1\linewidth]{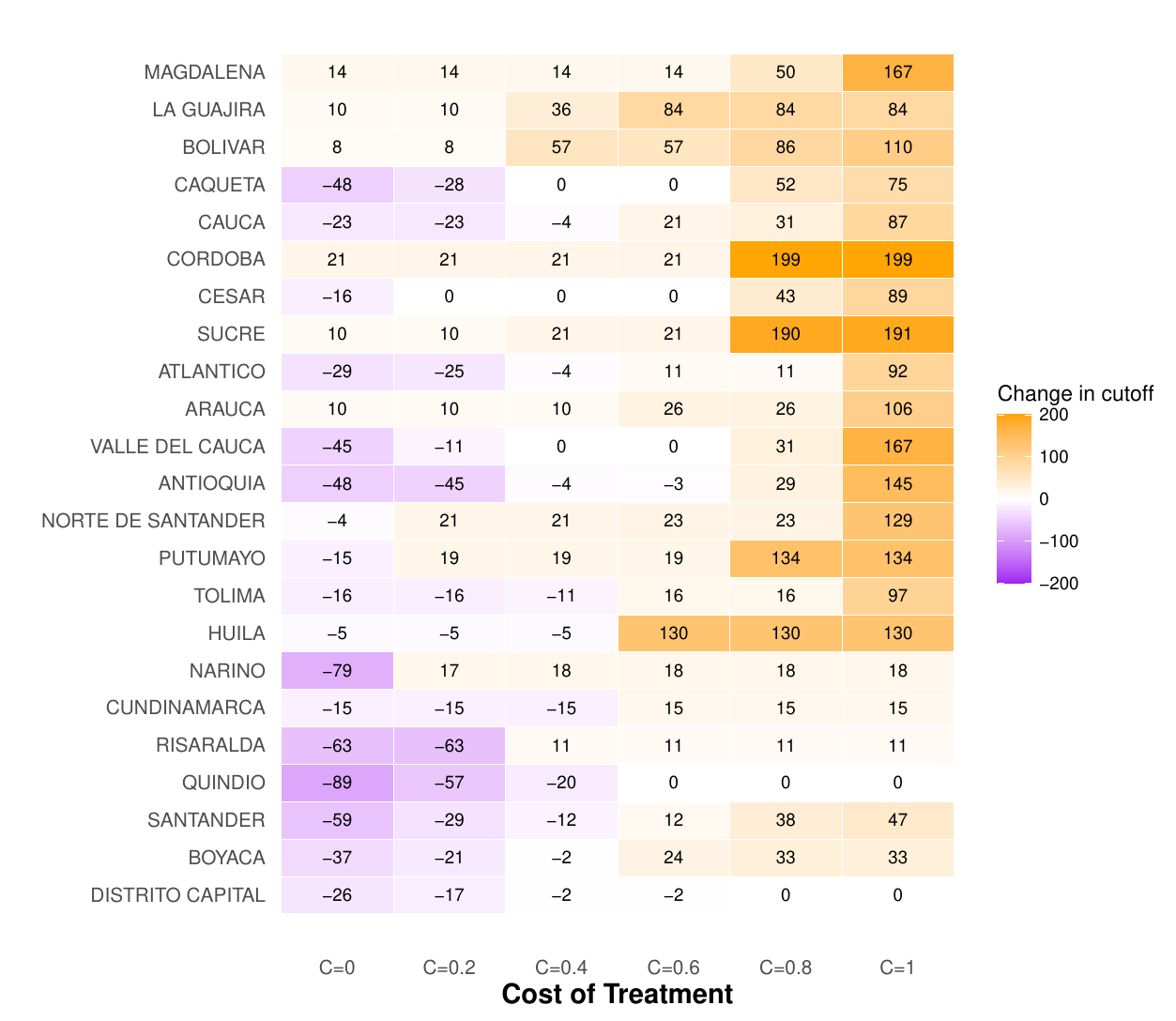}
    \caption{Cutoff changes  relative to the baseline $\hat{c}_g-\tilde{c}_g$ for each department (y-axis) with varying cost of treatment $C$ (x-axis) in the utility function $u(y,w)=y-C\times w$. We consider $C=\{0,0.2,0.4,0.6,0.8,1\}$.}
    \label{fig:realapp:cost}
\end{figure}

\section{Concluding remarks}\label{sec:discussion}

Despite the popularity of regression discontinuity (RD) 
designs for program evaluation with observational data, little attention has been given to the question of how to learn new and better treatment assignment cutoffs.
In this paper, we propose a methodology for safe policy learning that guarantees that a new, learned policy performs no worse than the existing status quo.
We partially identify the overall utility function and then find an optimal policy in the worst case via robust optimization.
We stress, however, that extrapolation is required to learn new policies.
Our approach leverages the presence of multiple cutoffs, and is based on a weak, non-parametric smoothness assumption about the level of cross-group heterogeneity.

There are several directions for future research.
First, the partial identification assumption we employ involves smoothness parameters.
While we have suggested ways that the data can inform these hyperparameters, developing rigorous selection criteria for this choice will be an important aspect of future work.

Second, we can generalize Assumption~\ref{ass:general:lipdiff} to the case where it holds only after conditioning on the other available covariates in RD design.
Consequently, our proposed doubly robust estimator can be modified to include not only the single running variable, but also other observed covariates.

Third, we can incorporate other constraints such as fairness, risk factors, simplicity, or other functional form constraints into the policy class. For example, candidate treatment cutoffs may be restricted by a lower threshold $c_{\text{lower}}$, i.e., $ \Pi=\left\{ \pi(x,g) = \mathds{1}(x\geq c_g) \mid c_g \geq c_{\text{lower}} \right\}$, or federal policymakers want to learn a unified cutoff for all subpopulations, i.e. $ \Pi=\left\{ \pi(x,g) = \mathds{1}(x\geq c) \mid c \in [\tilde{c}_1,\tilde{c}_Q] \right\}$.
In these cases, our robust optimization approach can be directly applied to learn the worst-case optimal policy in the class, although the safety guarantee relative to the status quo may be compromised because the baseline policy may not be included in the policy class.
We could also incorporate hard budget constraints of the form $\E[\pi(X, G)] \leq \delta$, although there may be additional estimation issues regarding uncertainty about the constraint \citep{sun2021empirical}.

Fourth, while our current methodology primarily addresses scenarios with continuous running variables, future work should extend the proposed framework to handle discrete running variables, which introduce partial identification challenges similar to those posed by extrapolation. As noted by \cite{kolesar2018inference} and \cite{imbens2019optimized}, explicit smoothness assumptions on the outcome or difference functions might help address these challenges.

Fifth, we can change the target population of interest and expand the notion of safety.
In particular, it is of interest to establish a statistical safety guarantee for a specific group $g$  or other well-defined subpopulation so that a learned policy does not perform worse in this subpopulation than the status quo policy \citep{jia2023}.
This differs from the current guarantees that hold over the entire population on average, and so can lead to worse outcomes for certain subpopulations.

Finally, one may consider the use of safe policy learning in other study designs.
For example, the tie-breaker design of \cite{li2022general} combines treatment assignment based on both randomization and thresholding. Extending our methodological framework to this and other study designs is left to future work.

\newpage

\bibliographystyle{chicago}
\bibliography{library.bib}

\clearpage
\appendix
\setcounter{page}{1}

\renewcommand\thefigure{\thesection.\arabic{figure}}    
\setcounter{figure}{0}

\renewcommand\thefigure{\thesection.\arabic{figure}}
\renewcommand\thetable{\thesection.\arabic{table}}
\renewcommand\thetheorem{\thesection.\arabic{theorem}}
\renewcommand\thecorollary{\thesection.\arabic{corollary}}
\renewcommand\thelemma{\thesection.\arabic{lemma}}
\renewcommand\theproposition{\thesection.\arabic{proposition}}
\renewcommand\theequation{\thesection.\arabic{equation}}
\renewcommand\theassumption{\thesection.\arabic{assumption}}
\renewcommand\theremark{\thesection.\arabic{remark}}

\setcounter{assumption}{0}
\setcounter{theorem}{0}
\setcounter{lemma}{0}
\setcounter{proposition}{0}
\setcounter{table}{0}
\setcounter{equation}{0}
\setcounter{remark}{0}

\begin{center}
 \huge Supplementary Appendix for ``Safe Policy Learning under Regression Discontinuity Designs with Multiple Cutoffs''
\end{center}

\etocdepthtag.toc{mtappendix}
\etocsettagdepth{mtchapter}{none}
\etocsettagdepth{mtappendix}{subsection}

\section{Alternative Decomposition}\label{appendix:add:Iunid}

We propose an alternative way to decompose the counterfactual outcome for group $g$ at $x$, $m(w,x,g)$, by considering all groups whose cutoffs are smaller than $x$ for the treatment condition $w=1$ and the remaining groups with greater cutoffs for the control condition $w=0$:
\begin{equation}
\begin{aligned}\label{equ:general:multi:uniden:MA}
    \mathcal{I}_{\text{uniden}} 
    = 
    &\mathbbm{E}\left[\sum_{g=1}^{Q} \mathds{1}(G=g) \pi(X,g)\left(1-\tilde\pi(X,g)\right) \sum_{g^\prime=1}^{g-1} \mathds{1}\left(X\in[\tilde{c}_{g^\prime}, \tilde{c}_{g^\prime+1}]\right) \frac{1}{g^\prime} \sum_{g^{\prime\prime}=1}^{g^\prime}\left\{\tilde m(X,g^{\prime\prime})+d(1,X,g,g^{\prime\prime})\right\} \right] \\
     &  + \mathbbm{E}\left[\sum_{g=1}^{Q} \mathds{1}(G=g) \left(1-\pi(X,g)\right)\tilde\pi(X,g) \sum_{g^\prime=g+1}^{Q} \mathds{1}(X\in[\tilde c_{g^\prime-1}, \tilde c_{g^\prime}])\frac{1}{Q-g^\prime+1} \right. \\ & \hspace{3in} \left.\times\sum_{g^{\prime\prime}=g^\prime}^{Q}\left\{\tilde m(X,g^{\prime\prime})+d(0,X,g,g^{\prime\prime}) \right\}  \right].
    \end{aligned}
\end{equation}
Under the special case of only two groups, the above construction coincides with the one given in Equation~\eqref{equ:general:multi:uniden}.

\section{Safe Policy Learning under Single-cutoff Designs}\label{appendix:singlecutoff}

Building on the proposed robust optimization framework in the main text, we demonstrate how to learn a new and improved cutoff under single-cutoff RD designs.

\subsection{Problem setup and notation}
\label{sec:notation}

Suppose that for each unit $i=1,\ldots,n$, we observe a univariate running variable $X_i \in \mathcal{X} \subset \R$, a binary treatment assignment $W_i \in \{0,1\}$, and an outcome variable $Y_i \in \R$.
We assume that $X$ has a continuous positive density $f_X$ on the support $\mathcal{X}$. 
Under the sharp RD design with a single cutoff, the treatment assignment $W_i$ is completely determined by the value of $X_i$ being on an either side of a fixed, known cutoff $\tilde{c}$, i.e., $W_i=\mathds{1}(X_i\geq \tilde{c})$ with $\mathds{1}\left(\cdot\right)$ denoting the indicator function. 


In this section, we consider a class of linear threshold policies: 
\begin{equation}
\Pisg \ = \ \left\{\pi(x)=\mathds{1}(x\geq c ) \mid c, x \in\mathcal{X}\right\}, \label{eq:single_class}
\end{equation}
where each cutoff value $c$ in the support of the running variable corresponds to a candidate policy in the policy class. 
Then, the optimal policy $\pi^*$ within $\Pisg$ is defined as,
$\pi^{*} \ = \ \underset{\pi \in \Pisg}{\operatorname{argmax}}\  V(\pi)$.
We also define the baseline policy applied in the observed data as follows: $$\tilde\pi(X_i):=\mathds{1}(X_i\geq \tilde c),$$ 
Importantly, the baseline $\tilde\pi$ is contained in the linear threshold policy class $\Pisg$.
As seen below, this is needed in order for the resulting optimal policy to have a safety guarantee, i.e., its performance being at least as good as that of the baseline policy.

\subsection{Policy learning through robust optimization}

A major limitation of the RD designs is the unidentifiability of the counterfactual outcomes under the cutoffs different from the existing one. 
Let $m(w,x)=\mathbbm{E}\left[Y(w)\mid X=x\right]$ denote the conditional expectation of the potential outcomes. 
By comparing the deviation of a given policy $\pi$ from the baseline policy $\tilde\pi$, we can decompose the expected utility into point-identifiable and non-identifiable terms, respectively:
\begin{equation}
\begin{aligned}\label{equ:sing:V}
        V(\pi,m)\ = \ & \mathbbm{E}\left[ \left\{\pi(X)\tilde\pi(X)+(1-\pi(X))(1-\tilde\pi(X))\right\} Y\right]+\\
        &\mathbbm{E}\left[  \pi(X)(1-\tilde\pi(X)) m(1,X)\right]+\mathbbm{E}\left[  (1-\pi(X))\tilde\pi(X) m(0,X)\right]. 
\end{aligned}
\end{equation}
The first term of Eqn~\eqref{equ:sing:V} corresponds to the cases where the policy $\pi$ agrees with the baseline policy $\tilde\pi$.
This component can be identified from observable data.
The terms in the second line of Eqn~\eqref{equ:sing:V}, however, are not identifiable without further assumptions since the counterfactual outcome cannot be observed when the policy under consideration $\pi$ disagrees with the baseline policy $\tilde\pi$.

The evaluation of $V(\pi,m)$, therefore, requires some methods to extrapolate $m(w,x)$ beyond the baseline cutoff. 
The lack of identifiability means that multiple models of $m(w,x)$ can fit equally well to the observed data.
To address this problem, we first show that a standard smoothness assumption used for estimation in RD designs implies that $m(w,x)$ belongs to a particular model class $\mathcal{F}$ (see Section~\ref{subsec:smooth}).
We then combine this observation with the fact that when the policy $\pi$ agrees with the baseline policy $\tilde{\pi}$, the conditional expectation function of potential outcome $m(w,x)$ is identifiable, i.e.,
$$\tilde m(x) \ := \ m(\tilde \pi (x), x)= \mathbbm{E}\left[Y \mid X=x\right].$$
This leads to the following restricted model class,
\begin{equation}\label{equ:sing:M}
    \mathcal{M}=\{f \in \mathcal{F} \mid f(\tilde{\pi}(x), x)=\tilde{m}(x) \; \forall x \in \mathcal{X}\},
\end{equation}
which partially identifies a range of potential values for $m(w,x)$ given a policy $\pi$.

Lastly, we use robust optimization to find an optimal policy that maximizes the worst case value of the expected utility $V(\pi,m)$, with $m(w,x)$ taking values in the ambiguity set $\mathcal{M}$:
\begin{equation}
    \begin{aligned}\label{equ:Vinf}
       \pi^{\inf} \in  \underset{\pi \in \Pi_{\mathrm{Single}}}{\operatorname{argmax}} \  \min_{m \in \mathcal{M}}\  V(\pi,m) \Longleftrightarrow \pi^{\inf } \in \underset{\pi \in \Pi}{\operatorname{argmax}} \min _{m \in \mathcal{M}}\{V(\pi, m)-V(\tilde{\pi})\}.
    \end{aligned}
\end{equation}

As \cite{ben2021safe} show, the overall expected utility of this learned policy $\pi^{\inf}$ is never less than that of the baseline policy.
This safety property holds because we only optimize over the unknown components, while the worst-case value and the true value coincide for the baseline policy, i.e., $ \min_{m \in \mathcal{M}}\  V(\tilde{\pi},m)=V(\tilde{\pi})$.
Thus, so long as the baseline policy is in the policy class, the resulting policy never performs worse than the baseline policy.
We now turn to the details of the proposed methodology and its properties under the single-cutoff RD designs.

\subsection{Partial identification via smoothness}
\label{subsec:smooth}

For partial identification, we rely on a smoothness assumption that is similar to --- but weaker than --- the assumptions commonly used in estimating the CATE at the cutoff under the RD designs.
Specifically, we assume that the conditional expectation function of potential outcomes is approximately linear.
\cite{armstrong2018optimal} consider the same model class and propose a method for minimax linear estimation of the CATE at the cutoff. 
\begin{assumption}[Approximately linear model \citep{sacks1978linear}] \label{ass:bddsndderiv}
The conditional expectation function of potential outcomes $m(w,x)$ lies in the following model class,\\
$$\mathcal{F}=\left\{f: \left|f(w, x)-f(w, \tilde c)-f_x^{(1)}(w, \tilde c)(x-\tilde c)\right|\leq S_w(x-\tilde c)^2 \ \forall x\in \mathcal{X}\right\},$$
where $f_x^{(1)}$ is the first derivative with respect to $x$, and $S_w < \infty$ is a positive constant for $w\in \{0,1\}$.
\end{assumption}

Assumption~\ref{ass:bddsndderiv}  states that the approximation error due to the first-order Taylor expansion of $m(w,x)$ around the existing cutoff $\tilde c$ is bounded by $S_w(x-\tilde c)^2$, uniformly over $x\in\mathcal{X}$.
This is closely related to the restriction on the second derivative of the conditional expectation function that has been commonly used in the RD design literature.
For instance, a locally bounded second derivative assumption --- $m_x^{(2)}(w,x)$ exists and is bounded in a neighborhood of $\tilde c$ --- is often used to justify the application of local linear regression \citep[e.g.][]{cheng1997automatic,imbens2012optimal}.
In addition, \citet{imbens2019optimized} rely on a more stringent assumption of a bounded global second derivative to obtain the optimal minimax linear estimator.
Others consider the same and similar smoothness restrictions \citep[e.g.][]{noack2019bias,armstrong2018optimal}.
While Assumption~\ref{ass:bddsndderiv} is similar to the bounded second-derivative assumption, it is technically a weaker condition since it does not impose any smoothness away from the cutoff.

Under Assumption~\ref{ass:bddsndderiv}, we extrapolate $m(w,x)$ beyond the cutoff $\tilde{c}$ and partially identify the last two components of Eqn~\eqref{equ:sing:V}.
We first compute the restricted model class $\mathcal{M}$ for $m(w,x)$:
\begin{equation}\label{equ:sing:modelclass}
    \mathcal{M}=\left\{ f: \{0,1\}\times \mathcal{X} \rightarrow \mathbb{R} \mid  B_\ell(w,x) \leq  f(w,x) \leq B_{u}(w,x) \right\}.
\end{equation}
Here, the potential values of $m(w,x)$ are characterized by the following set of point-wise upper and lower bounds extrapolated from the cutoff $\tilde c$ from both the left and right sides: 
 \begin{equation}
     \begin{aligned}\label{equ:sing:boundfn}
     B_{\ell}(w,x)& =\underset{\begin{subarray}{l}
   x^\prime \in \mathcal{X}_w  \\
    x^\prime \rightarrow \tilde{c}
  \end{subarray}}{\operatorname{lim}} \;  \tilde m(x^\prime) + \tilde m^{(1)}(x^\prime)(x-x^\prime)-S_w(x-x^\prime)^2 \\
B_{u}(w,x)& =\underset{\begin{subarray}{l}
   x^\prime \in \mathcal{X}_w  \\
    x^\prime \rightarrow \tilde c
  \end{subarray}}{\operatorname{lim}} \;\tilde m(x^\prime) + \tilde m^{(1)}(x^\prime)(x-x^\prime)+S_w(x-x^\prime)^2,
     \end{aligned}
 \end{equation}
where $\mathcal{X}_w=\{x \in \mathcal{X} \mid \tilde{\pi}(x)=w\}$ denotes the set of running variable values with the baseline policy giving the treatment assignment $w$.
Note that the limits are taken as the value of the running variable $x'$ approaches the baseline cutoff $\tilde{c}$ from within $\mathcal{X}_w$.

Under this setup, the inner optimization problem of Eqn~\eqref{equ:Vinf} has a closed form solution,
$\min_{m \in \mathcal{M}}\  V(\pi,m) = V(\pi, B_{\ell})$,
which finds the worst-case value at the lower bound $B_{\ell}$. 
The maximin problem can then be written as a single maximization problem,
\begin{equation}
    \begin{aligned}\label{equ:sing:VB}
       \pi^{\inf} \in \underset{\pi \in \Pi_{\mathrm{Single}}}{\operatorname{argmax}} \  V(\pi,B_{\ell}).
    \end{aligned}
\end{equation}

\subsection{Learning policies from data}\label{sec:sing:empirPL}

The estimation of the optimal policy defined in Eqn~\eqref{equ:sing:VB} requires inferring $V(\pi,B_{\ell})$ from observed data for any specific policy $\pi$.
We can estimate this expected utility via its empirical analog,
\begin{equation}
\begin{aligned}\label{equ:sing:Voracle}
     \widehat V(\pi,B_{\ell})\ = \ \frac{1}{n}\sum_{i=1}^{n} & \left\{\pi(X_i)\tilde\pi(X_i)+(1-\pi(X_i))(1-\tilde\pi(X_i))\right\} Y_i\\
     &+\pi(X_i)(1-\tilde\pi(X_i)) B_{\ell}(1,X_i)+ (1-\pi(X_i))\tilde\pi(X_i) B_{\ell}(0,X_i).
\end{aligned}
\end{equation}

Unfortunately, it is not possible to directly optimize Eqn~\eqref{equ:sing:Voracle} because it involves the lower bound $B_{\ell}(w,x)$, which contains unknown nuisance components $\tilde m(\cdot)$ and its first derivative $\tilde m^{(1)}(\cdot)$.
Therefore, we propose using the estimates of these nuisance functions, $\hat{\tilde{m}}(\cdot)$ and $\hat{\tilde{m}}^{(1)}(\cdot)$, in place of the true values in Eqn~\eqref{equ:sing:boundfn}, yielding the empirical lower bound, $\widehat B_{\ell}(w,x)$. 
In practice, these nuisance estimates can be obtained using the R package \texttt{nprobust}.
In addition, we must choose the bound on the second derivative $S_w$ in the bound since it is unknown. We discuss this estimation problem in the next section.


Finally, we can obtain the empirical safe policy for the single-cutoff RD by solving,
\begin{equation}
\begin{aligned}\label{equ:sing:emp:estor}
\hat{\pi} \in  \underset{\pi \in \Pi_{\mathrm{Single}}}{\operatorname{argmax}}\  \widehat V(\pi, \widehat B_{ \ell}).
    \end{aligned}
\end{equation}

\subsection{Choosing smoothness parameters $S_w$}\label{sec:sing:smoothpara}

The application of our methodology requires choosing the value of smoothness parameter $S_w$, where a greater value of $S_w$ yields a more conservative policy.
In principle, this choice should be guided by subject knowledge, but in practice, having a good default data-driven method is useful.
In the RD design literature, it is common to specify the smoothness parameter by fitting a global polynomial regression on either side of the cutoff.
For example, \cite{imbens2019optimized} propose to estimate $m(w,x)$ using a global quadratic regression, and then multiply maximal curvature of the fitted model by a constant, e.g., between 2 and 4.

We instead propose a nonparametric method for the specification of the smoothness parameter. 
We first fit a local quadratic polynomial regression \citep{calonico2019nprobust} separately for the observations on either side of the cutoff, and obtain the estimated second derivative of the conditional mean potential outcome function at a grid of values within the empirical support of the running variable.
We then take the largest local curvature as $S_w$. 
In practice, we also recommend a sensitivity analysis based on a range of plausible smoothness parameter values.

\section{A Doubly Robust Estimator of Heterogeneous Cross-group Differences}\label{appendix:add:DRlearner}
Due to the discontinuity at the cutoffs, we need to estimate the conditional differences between observed treatment groups or between observed control groups using data above or below the threshold, respectively. Below we discuss how to estimate differences across treatment groups using a portion of the data; estimates for the control group can be obtained similarly using the another portion of the data.
\begin{Algorithm}[Estimating observed conditional differences across treatment groups]
For group $g$ and its comparison group $g^\prime$, let $\left(D_1^n, D_2^n\right)$ denote two independent samples of $n$ observations of $Z_i=\left(Y_{i}, X_{i}, W_{i}=1,G_{i}\in\{g,g^\prime\}\right)$. \\

\noindent\textit{Step 1. Nuisance training}:\\
(a) Construct estimates $\hat{e}_g$ of the group propensity score $e_g$ using $D_1^n$.\\
(b) Construct estimates $\left(\hat{\tilde m}(\cdot,g), \hat{\tilde m}(\cdot,g^\prime)\right)$ of the group-specific regression functions $\left({\tilde m}(\cdot,g), {\tilde m}(\cdot,g^\prime)\right)$ using $D_1^n$.\\

\noindent\textit{Step 2. Pseudo-outcome regression: Construct the pseudo-outcome}
$$
\widehat{\varphi}(Z;g,g^\prime)= \hat{\tilde m}(X,g)-    \hat{\tilde m}(X,g^\prime)+\frac{ \mathds{1}(G=g)}{ \hat e_g(X)}\left(Y-  \hat{\tilde m}(X,g)\right)-\frac{ \mathds{1}(G=g^\prime)}{\hat e_{g^\prime}(X)}\left(Y-  \hat{\tilde m}(X,g^\prime)\right)
$$
and regress it on covariates $X\geq \operatorname{max}\{\tilde{c}_g,\tilde{c}_{g^\prime}\}$
in the test sample $D_2^n$, yielding
$$
\hat{\tilde{d}}^{DR}(x,g,g^\prime)=\widehat{\mathbb{E}}_n\{\widehat{\varphi}(Z;g,g^\prime) \mid X=x\}
$$
for $x\geq \operatorname{max}\{\tilde{c}_g,\tilde{c}_{g^\prime}\}$, i.e., the treatment groups.
Here, $\widehat{\mathbb{E}}_n$ can represent any generic nonparametric regression methods.\\
 
\noindent\textit{Step 3. Cross-fitting (optional)}: 
Repeat \textit{Step 1-2}, swapping the roles of $D_1^n$ and $D_2^n$ so that $D_2^n$ is used for nuisance training and $D_1^n$ as the test sample. Use the average of the resulting two estimators as a final estimate of $\tilde{d}$. $K$-fold variants are also possible.

\noindent Similarly, to estimate the conditional differences in observed outcome between control groups, we should use another portion of the data $Z_i=\left(Y_{i}, X_{i}, W_{i}=0,G_{i}\in\{g,g^\prime\}\right)$ and fit the pseudo-outcome regression in Step 2 using $X\leq \operatorname{min}\{\tilde{c}_g,\tilde{c}_{g^\prime}\}$.
\end{Algorithm}

\section{Simulation Studies}\label{sec:simu}

In this section, we assess the empirical performance of our safe policy learning approach proposed in the main text through simulation studies.
We compare the utility loss of our learned policy relative to the baseline policy and the oracle optimal policy under a special multi-cutoff RD case with only two groups, i.e, the group index $g\in\mathcal{Q}=\{1,2\}$.

\subsection{The setup}\label{sec:simu:setup}

We consider a realistic simulation setting based on the ACCES loan program data analyzed in Section~\ref{sec:app}.
Specifically, the data generating process is given by,
\begin{equation*}
    \begin{aligned}
        &X_i \sim \text{Unif}(-1000,-1)\\
        &G_i = 2 \cdot\mathds{1}(0.01\cdot X_i+5+\epsilon_i>0)+ \mathds{1}(0.01\cdot X_i+5+\epsilon_i\leq0),\quad \epsilon_i\sim \mathcal{N}(0,\sigma_\epsilon^2)\\
        &W_i = \sum_{g=1}^{2} \mathds{1}(G_i =g)\mathds{1}(X_i \geq \tilde c_g)\\
    \end{aligned}
\end{equation*}
with the sample size $n\in\{1000,2000,4000,8000,16000\}$.
Following \citet{cattaneo2020extrapolating} who analyzed a subset of the ACCES data, we focus on two sub-populations with cutoff values $\tilde c_2=-571$ and $\tilde c_1=-850$.

Under this setting, the group indicator $G_i$ is determined by the running variable $X$ and a Normal random variable $\epsilon$, with mean $0$ and variance $\sigma_\epsilon^2$.
The value of $\sigma_\epsilon^2$ controls the degree of overlap between the two groups with a higher value leading to a greater overlap.
We set $\sigma_\epsilon^2=10^2$ so that the conditional probability of belonging to group $G=2$ is roughly between 0.3 and 0.7 for all samples, which gives a strong overlap between the two groups.

The outcome variable is generated according to the following model:
\begin{equation}\label{equ:sim:Ym}
       Y=\gamma_0^\top g(X) (1-W)+\gamma_1^\top g(X)W - d(W,X)\cdot \mathds{1}(G =1)+ \epsilon_Y,
\end{equation}
where the error term $\epsilon_Y$ follows a normal distribution with mean 0 and standard deviation of $0.3$. We consider two data generating scenarios.
For both Scenarios, we choose $g(X)$ to be a third-order polynomial, and select 
$(\gamma_0,\gamma_1)$ by first fitting a third-order polynomial model using the data from our empirical application, and then further calibrating the fitted coefficients such that the baseline policy $\tilde\pi$ equals the oracle policy $\pi^*$ under Scenario A, and that the baseline policy $\tilde\pi$ differs from the oracle policy $\pi^*$ under Scenario B.
The result are the following two choices:
\begin{equation}\label{sim:multi:scenario}
    \begin{aligned}
        (A)\quad 
          & d(w,x):=d(w,x,2,1)= 0.2 + e^{0.01x}+ 0.3(1-w)\\ 
          & g(X)=\left(1,(X-164.43),(X-164.43)^2,(X-164.43)^3\right)^\intercal\\
          &\gamma_0=( -1.99 ,-1.00e(-02) ,-1.20e(-05) ,-4.59e(-09))^\intercal\\
        &\gamma_1=(0.96, 5.31e(-04) ,1.10e(-06) , 1.15e(-09))^\intercal\\
        (B) \quad &   d(w,x):=d(w,x,2,1)= 0.2 + e^{0.01x}-0.1(1-w)\\ 
         & g(X)=\left(1,X,X^2,X^3\right)^\intercal\\
          &\gamma_0=( -1.94 ,-1.00e(-02) ,-1.20e(-05) ,-4.59e(-09))^\intercal\\
        &\gamma_1=(0.96, 5.31e(-04) ,1.10e(-06) , 1.15e(-09))^\intercal.\\
    \end{aligned}
\end{equation}
We note that in both scenarios, the true difference in conditional potential outcome functions between the two groups $d(w,x)$ is Lipschitz continuous on $[-1000,-1]$, and it is even smoother than the approximately constant cutoff assumption (Assumption~\ref{ass:general:lipdiff}).

The policy class of interest is parameterized as 
$$\Piol=\left\{ \pi(x,g) = \mathds{1}(x\geq c_g) |\  c_g\in [-850, -571] \right\}.$$ 
Our method learns an optimal policy over the class $\Piol$ under Assumption~\ref{ass:general:lipdiff} based on Equation~\eqref{equ:multi:estor}.
We estimate all the nuisance components in Equation~\eqref{equ:general:multi:empDR} via nonparametric methods using the cross-fitting strategy.
Specifically, we use a local polynomial regression as implemented in the R package \texttt{nprobust} for $\tilde m(x,g)$ and a generalized linear model with logistic link function and a smoothing spline for $e_g(\cdot)$.
To construct the empirical point-wise upper and lower bounds, $ \widehat{B}_{\ell}(w,x,g,g^\prime)$ and $ \widehat{B}_{u}(w,x,g,g^\prime)$, we use our proposed two-stage doubly robust estimator in Appendix~\ref{appendix:add:DRlearner} to estimate
the unknown nuisance component ${\tilde{d}}(x,g,g^\prime)$.
Finally, we choose the smoothness parameter $\lambda_{w12}$ through the procedure discussed in Section~\ref{sec:multi:smoothpara}.

\subsection{Results}

Under Scenarios~A~and~B, we vary both the sample size and the choice of the value of the smoothness parameter $\lambda_{w12}$,
and compare our learned policy with the baseline policy as well as the oracle optimal policy.
To vary $\lambda_{w12}$, we begin with the data-driven approach described in Section~\ref{sec:multi:smoothpara}, and then multiply it by larger factors. 
We first present the results for Scenario~A where the baseline policy is optimal in policy class $\Piol$.
Figure~\ref{fig:ScenA2} shows the utility loss (or regret) of our learned policy relative to the baseline, for different values of sample size (both left and right panels) and smoothness parameter  (right panel).
The regret is always positive because the baseline policy is optimal.
Overall, we find that the regret of the learned policies improves with the sample size $n$, and decreases and approaches zero as $n$ gets large.
In addition, as shown in right panel of Figure~\ref{fig:ScenA2}, a larger value of the smoothness parameter $\lambda_{w12}$ leads to more conservative inference, i.e. the learned policy is closer to the baseline, and therefore, in this case, the regret is closer to zero.

\begin{figure}[t]
    \centering
    \includegraphics[width=1\linewidth]{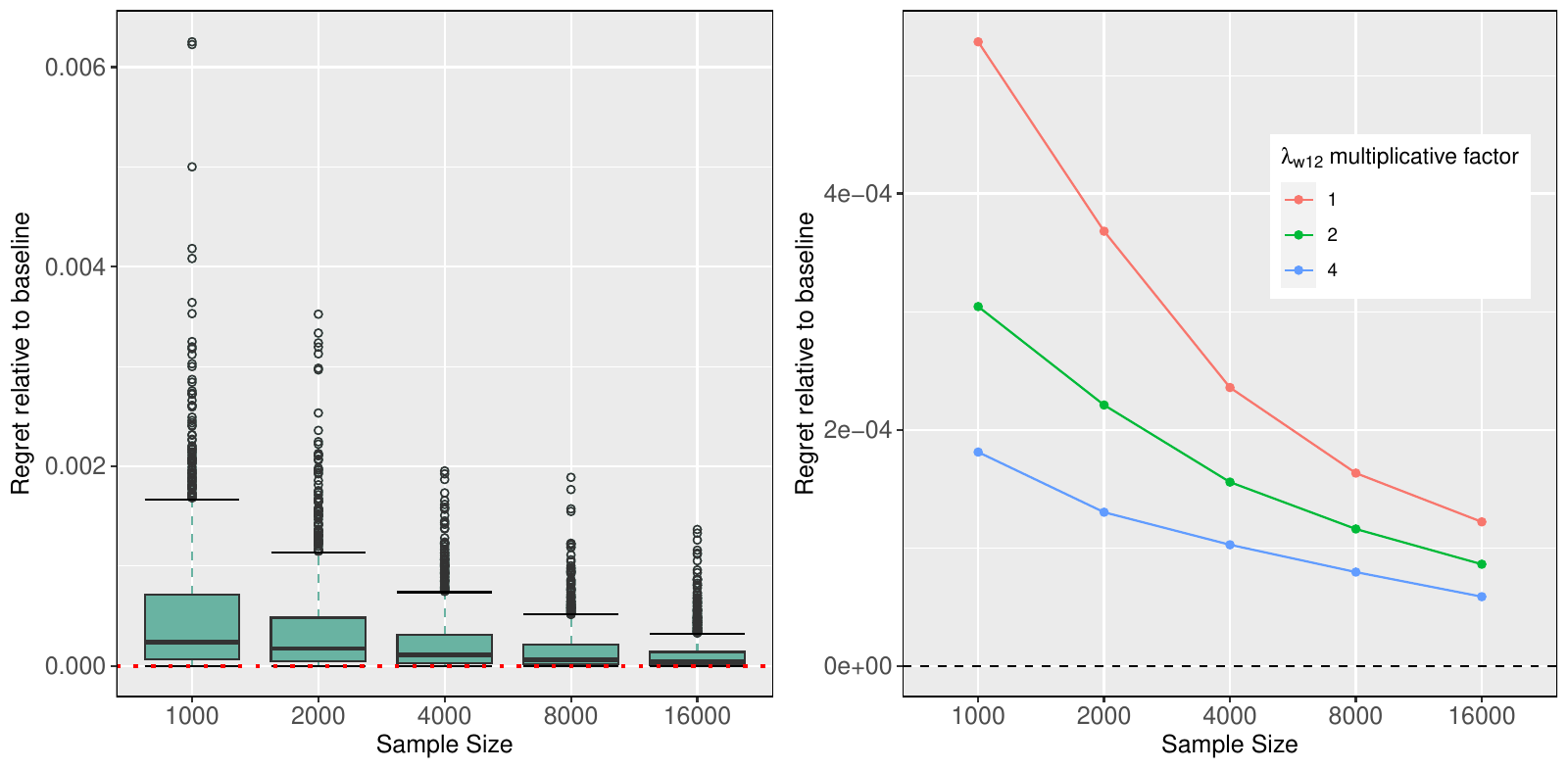}
    \caption{Regret of learned safe policy relative to the baseline (oracle) policy under Scenario~A. The left panel presents boxplots across 1000 simulations for various sample size $n$ and the data-driven estimate of $\lambda_{w12}$. The right panel presents the mean regret over 1000 simulations for various sample size $n$ and different multiplicative factor $M$ on the empirical $\lambda_{w12}$ constant.} 
    \label{fig:ScenA2}
\end{figure}

\begin{figure}[t]
    \centering
    \includegraphics[width=1\linewidth]{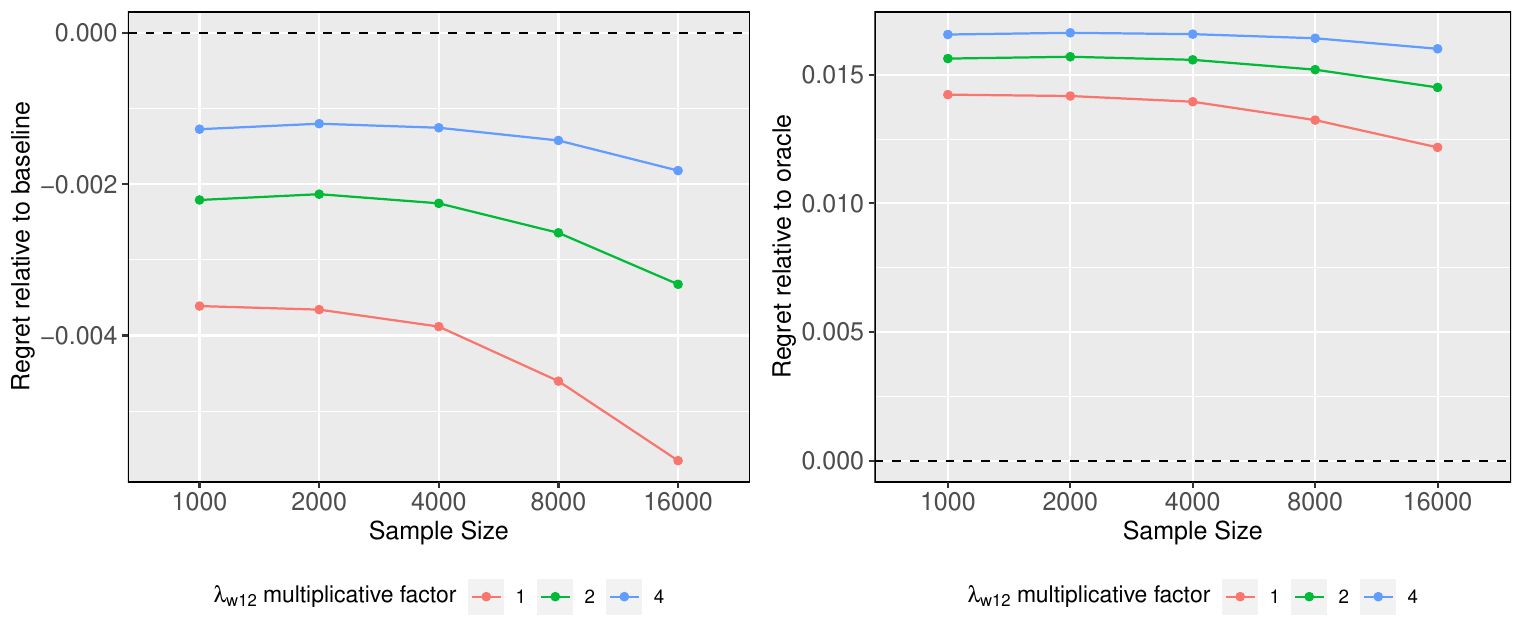}
    \caption{Regret relative to the baseline policy (left panel) and to the oracle policy (right panel) under Scenario B. Each point shows the mean regret over 1000 simulations. The multiplicative
factor $M$ on the empirical $\lambda_{w12}$ constant and the sample size $n$ are varied. }
    \label{fig:ScenB}
\end{figure}

Next, we present the results under Scenario~B. 
Since the baseline policy is no longer optimal, we compute the utility loss/regret of our learned polices relative to the baseline and optimal polices separately.
Figure~\ref{fig:ScenB} shows that on average, our empirical safe policy improves over the baseline policy, regardless of the value of smoothness parameter and sample size.
This improvement is greater for a smaller value of smoothness parameter, due to the more informative and tighter extrapolation.
Furthermore, when comparing to the oracle policy, a smaller value of smoothness parameter yields a greater improvement in regret, which becomes greater as the sample size increases.
The regret, however, does not decrease to zero even when the sample size is large.
This is because the safe policy can still be sub-optimal due to the lack of point-identification. As we argued in Theorem~\ref{thm:multi:optimalgap:general}, we find that this suboptimality gap becomes greater as the chosen value of $\lambda_{w12}$ becomes larger.

\section{Proofs and Derivations}
To help guide the proof in the main paper, we first introduce some useful definitions and lemmas used throughout this section. Recall that under the multi-cutoff RD design, the restricted policy class we consider is defined as $ \Piol=\left\{ \pi(x,g) = \mathds{1}(x\geq c_g) | \ c_g \in [\tilde c_1,\tilde c_Q]\right\}$. To simplify the notation, we drop the subscript of $\Piol$ and use $\Pi$, whenever there is no ambiguity.

\subsection{Definitions and lemmas for terms $\mathcal{I}_{iden}$ and term $\Xi_2$}\label{sec:pr:general:def:1}
\begin{definition}[Rademacher complexity]
Let $\mathcal{F}$ be a family of functions mapping $X \mapsto \mathbb{R}$. The population Rademacher complexity of $\mathcal{F}$ is defined as
$$\mathcal{R}_{n}(\mathcal{F}) \equiv \mathbb{E}_{X, \varepsilon}\left[\sup _{f \in \mathcal{F}}\left|\frac{1}{n} \sum_{i=1}^{n} \varepsilon_{i} f\left(X_{i}\right)\right|\right]$$
where $\varepsilon_{i}$'s are i.i.d. Rademacher random variables, i.e., $\operatorname{Pr}\left(\varepsilon_{i}=1\right)=\operatorname{Pr}\left(\varepsilon_{i}=-1\right)=1 / 2$, and the expectation is taken over both the Rademacher variables $\varepsilon_{i}$ and the i.i.d. covariates $X_{i}$. 
\end{definition}

\begin{definition}\label{def:J}
 Define the following population quantity,
\begin{equation*}
\begin{aligned}
      J(\pi)=&\mathbbm{E}\left[ \pi(X,G)\tilde\pi(X,G)Y + \left(1-\pi(X,G)\right)\left(1-\tilde\pi(X,G)\right)Y\right]\\
&+ \mathbbm{E}\left[\sum_{g=1}^{Q} \mathds{1}\left(G=g\right)\cdot \pi(X,g)\left(1-\tilde\pi(X,g)\right) \sum_{g^\prime=1}^{g-1} \mathds{1}\left(X\in[\tilde c_{g^\prime}, \tilde c_{g^\prime+1}]\right)  d(1,X,g,g^{\prime}) \right]\\
    &+ \mathbbm{E}\left[\sum_{g=1}^{Q} \mathds{1}\left(G=g\right)\cdot \left(1-\pi(X,g)\right)\tilde\pi(X,g) \sum_{g^\prime=g+1}^{Q} \mathds{1}\left(X\in[\tilde c_{g^\prime-1}, \tilde c_{g^\prime}]\right)   d(0,X,g,g^{\prime})\right],
\end{aligned}
\end{equation*}

We also define the following empirical quantity,
\begin{equation*}
\begin{aligned}
     \hat J(\pi)=\frac{1}{n}\sum_{i=1}^{n} & \ 
\pi(X_i,G_i)\tilde\pi(X_i,G_i)Y_i+\left\{1-\pi(X_i,G_i)\right\}\left\{1-\tilde\pi(X_i,G_i)\right\} Y_i \\
+& \sum_{g=1}^{Q} \mathds{1}\left(G_i=g\right)\cdot \pi(X_i,g)\left(1-\tilde\pi(X_i,g)\right) \sum_{g^\prime=1}^{g-1} \mathds{1}\left(X_i\in[\tilde c_{g^\prime}, \tilde c_{g^\prime+1}]\right)  d(1,X_i,g,g^{\prime}) \\
    + &  \sum_{g=1}^{Q} \mathds{1}\left(G_i=g\right)\cdot \left(1-\pi(X_i,g)\right)\tilde\pi(X_i,g) \sum_{g^\prime=g+1}^{Q} \mathds{1}\left(X_i\in[\tilde c_{g^\prime-1}, \tilde c_{g^\prime}]\right)   d(0,X_i,g,g^{\prime}),
\end{aligned}
\end{equation*}
supposing the true counterfactual difference function $d(w,x,g,g^{\prime})$ were known.
\end{definition}

\begin{lemma}\label{lemma:general:T1}
Under Assumption~\ref{ass:gen:bo}, and considering the policy class $\Piol$, which has finite VC dimension $Q$, then for any $\delta>0$, we have,
$$
\sup _{\pi \in \Piol}|\hat{J}(\pi)-J(\pi)| \leq \frac{ \Lambda(2+3u\sqrt{Q})}{\sqrt{n}}+\delta
$$
with probability at least $1-\exp \left(-\frac{n \delta^{2}}{8 \Lambda^{2}}\right)$, where $u$ is a universal constant.
\end{lemma}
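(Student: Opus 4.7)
The plan is to treat $\hat{J}(\pi) - J(\pi)$ as an empirical process indexed by $\pi \in \Piol$, and control its supremum using (i) McDiarmid's bounded differences inequality to reduce to the expectation, (ii) standard symmetrization to pass to a Rademacher complexity, and (iii) a VC-dimension bound on $\Piol$ to bound that Rademacher complexity.

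First, I would write $\hat{J}(\pi) = \frac{1}{n}\sum_{i=1}^{n} f_{\pi}(Z_i)$ where $Z_i := (X_i, G_i, Y_i)$ and $f_{\pi}(Z_i)$ is the summand appearing in Definition~\ref{def:J}. Since the indicator products are in $\{0,1\}$, $|Y_i|\leq \Lambda$ by Assumption~\ref{ass:gen:bo}, and $|d(w,x,g,g^\prime)|\leq 2\Lambda$ (as a difference of conditional means of bounded outcomes), the summand satisfies $|f_{\pi}(Z_i)| \leq c\Lambda$ for a small absolute constant $c$; in particular changing a single $Z_i$ alters $\sup_{\pi} |\hat{J}(\pi) - J(\pi)|$ by at most $2c\Lambda/n$. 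McDiarmid's inequality then yields
\[
\Pr\!\left(\sup_{\pi\in\Piol}|\hat{J}(\pi) - J(\pi)| \geq \mathbb{E}\!\left[\sup_{\pi\in\Piol}|\hat{J}(\pi) - J(\pi)|\right] + \delta\right) \leq \exp\!\left(-\frac{n\delta^2}{8\Lambda^2}\right),
\]
which matches the tail probability in the statement (after absorbing constants into $c$).

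Second, I would apply the standard symmetrization inequality to obtain
\[
\mathbb{E}\!\left[\sup_{\pi\in\Piol}|\hat{J}(\pi) - J(\pi)|\right] \leq 2\mathcal{R}_n(\mathcal{F}),
\]
where $\mathcal{F} = \{f_\pi : \pi \in \Piol\}$. To bound $\mathcal{R}_n(\mathcal{F})$, I would note that each $\pi \in \Piol$ is determined by $Q$ univariate thresholds, so the class of indicator functions $\{(x,g)\mapsto \mathds{1}(x\geq c_g)\}$ has VC dimension at most $Q$ (each group contributes a VC-1 threshold class, and the overall class is a union over groups indexed by the pre-determined $G$). Using Dudley's entropy integral (or Haussler's bound) together with the contraction/boundedness of $f_\pi$ as a Lipschitz function of the indicators, I get $\mathcal{R}_n(\mathcal{F}) \leq u^\prime \Lambda \sqrt{Q/n}$ for a universal constant $u^\prime$. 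Plugging this into the McDiarmid bound and collecting constants yields the stated $\Lambda(2 + 3u\sqrt{Q})/\sqrt{n} + \delta$.

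The step I expect to require the most care is the Rademacher bound on $\mathcal{F}$: $f_\pi$ is not itself a single indicator but rather a sum of several pieces in which $\pi(X,g)$ and $1-\pi(X,g)$ multiply observed quantities $Y$ or $d(\cdot)$. I would handle this by decomposing $\mathcal{F}$ into a finite number (depending only on $Q$, not on $n$) of subclasses, each obtained from the threshold class by multiplication with a fixed bounded, $\pi$-independent random variable. Since multiplication by a $[-\Lambda,\Lambda]$-valued random factor only rescales Rademacher complexity by a constant multiple of $\Lambda$, and a finite union of VC-$Q$ classes remains VC-$O(Q)$, the final bound is still $O(\Lambda\sqrt{Q/n})$, which is what feeds into the universal constant $u$ in the statement.
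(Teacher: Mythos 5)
Your proposal is correct and follows essentially the same route as the paper: the paper's citation of Theorem~4.10 of \cite{wainwright2019high} is precisely your McDiarmid-plus-symmetrization step (with the uniform bound $2\Lambda$ on the summands giving the $\exp(-n\delta^2/(8\Lambda^2))$ tail), and the paper likewise bounds $\mathcal{R}_n(\mathcal{F})$ by splitting off the $\pi$-independent part and pulling the bounded multipliers out of the $\pi$-dependent Rademacher average before applying $\mathcal{R}_n(\Piol)\leq u\sqrt{Q/n}$. The only cosmetic difference is that you phrase the multiplier step as a decomposition into finitely many rescaled VC subclasses, whereas the paper bounds the multiplier by $3\Lambda$ directly inside a single Rademacher average.
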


\begin{proof}[Proof of Lemma \ref{lemma:general:T1}]
  Define the function class with functions $f(x,y,g)$ in 
\begin{equation*}
\begin{aligned}
\mathcal{F}=  & \Biggl\{ \pi(X_i,G_i)\tilde\pi(X_i,G_i)Y_i+\left\{1-\pi(X_i,G_i)\right\}\left\{1-\tilde\pi(X_i,G_i)\right\} Y_i \\
     & \left. + \pi(X_i,G_i)\left(1-\tilde\pi(X_i,G_i)\right) \sum_{g^\prime=1}^{G_i-1} \mathds{1}\left(X_i\in[\tilde c_{g^\prime}, \tilde c_{g^\prime+1}]\right)  d(1,X_i,G_i,g^{\prime}) \right. \\
     &  +  \left(1-\pi(X_i,G_i)\right)\tilde\pi(X_i,G_i) \sum_{g^\prime=G_i+1}^{Q} \mathds{1}\left(X_i\in[\tilde c_{g^\prime-1}, \tilde c_{g^\prime}]\right)   d(0,X_i,G_i,g^{\prime}) \mid \pi \in \Pi \Biggr\}.
\end{aligned}
\end{equation*}
Notice that
$$
\sup _{\pi \in \Pi}|\hat{J}(\pi)-J(\pi)|=\sup _{f \in \mathcal{F}}\left|\frac{1}{n} \sum_{i=1}^{n} f\left(X_{i}, Y_{i},G_i\right)-\mathbb{E}[f(X, Y,G)]\right|
$$
The class $\mathcal{F}$ is uniformly bounded by $2\Lambda$ because the outcomes and the difference functions are bounded, so by Theorem~$4.10$ in \cite{wainwright2019high}
$$
\sup_{f \in \mathcal{F}} \left|\frac{1}{n} \sum^{n}_{i=1} f\left(X_{i}, Y_{i},G_i\right)-\mathbb{E}[f(X, Y,G)]\right| \leq 2 \mathcal{R}_{n}(\mathcal{F})+\delta,
$$
with probability at least $1-\exp \left(-\frac{n \delta^{2}}{8 \Lambda^{2}}\right)$.  
Finally, the Rademacher complexity for $\mathcal{F}$ is bounded by
\begin{align*}
& \mathcal{R}_{n}(\mathcal{F}) \\
\leq \ & \mathbb{E}_{X, Y, G, \epsilon}\left[\left| \frac{1}{n} \sum_{i=1}^{n}\left\{\left(1-\tilde{\pi}\left(X_{i},G_i\right)\right) Y_{i}+\tilde{\pi}\left(X_{i},G_i\right) \sum_{g^\prime=G_i+1}^{Q} \mathds{1}\left(X_i\in[\tilde c_{g^\prime-1}, \tilde c_{g^\prime}]\right)   d(0,X_i,G_i,g^{\prime})\right\}\epsilon_{i}\right|\right] \\
& + \sup _{\pi \in \Pi} \mathbb{E}_{X, Y, G, \varepsilon}\left[\left| \frac{1}{n} \sum_{i=1}^{n}\left\{\tilde{\pi}\left(X_{i}\right)\left(Y_{i}-\sum_{g^\prime=G_i+1}^{Q} \mathds{1}\left(X_i\in[\tilde c_{g^\prime-1}, \tilde c_{g^\prime}]\right)   d(0,X_i,G_i,g^{\prime})\right)\right. \right.\right.\\
& \hspace{80pt} \left.\left.\left.+\left(1-\tilde{\pi}\left(X_{i}\right)\right)\left(\sum_{g^\prime=1}^{G_i-1} \mathds{1}\left(X_i\in[\tilde c_{g^\prime}, \tilde c_{g^\prime+1}]\right)  d(1,X_i,G_i,g^{\prime})-Y_{i}\right)\right\} \pi(X_{i},G_i) \epsilon_{i}\right|\right] \\
\ \leq \ & \frac{2\Lambda}{n} \mathbb{E}_{\varepsilon}\left[\left|\sum_{i=1}^{n} \varepsilon_{i}\right|\right]
+\sup _{\pi \in \Pi} 3\Lambda \ \mathbb{E}_{X, Y, G, \varepsilon}\left[\left|\frac{1}{n} \sum_{i=1}^{n} \pi(X_{i},G_i) \varepsilon_{i}\right|\right] \\
\ \leq \ & \frac{2\Lambda}{\sqrt{n}}
+3 \Lambda \mathcal{R}_{n}(\Pi) \\
\ \leq \ & \frac{2\Lambda}{\sqrt{n}}
+\frac{3\Lambda u\sqrt{\nu}}{\sqrt{n}},
\end{align*}
where the constant term in the second inequality is due to $|Y_i |\leq \Lambda$ and $|d(\cdot,\cdot,\cdot,\cdot)|\leq 2\Lambda$,
and the last inequality follows from the fact that for a function class $\mathcal{G}$ with finite VC dimension $\nu<\infty$, the Rademacher complexity is bounded by, $\mathcal{R}_{n}(\mathcal{G}) \leq u \sqrt{\nu / n}$, for some universal constant $u$ \citep[][$\S 5.3$]{wainwright2019high}. 
Now, the specific policy class $\Piol$ has a VC dimension $Q<\infty$, so let $\nu=Q$, we obtain the result.
\end{proof}


\begin{lemma}[Bounded estimation error]\label{lemma:multi:noise}
Under Assumption~\ref{ass:bdrate}, for any $\pi\in \Pi$,
\begin{equation*}
     \left|\min_{d \in \widehat{\mathcal{M}}_{d}} \hat{J}(\pi,d)-\min_{d \in {\mathcal{M}_d}}\hat{J}(\pi,d)\right|= O_p \left(\rho_n^{-1}\right).
\end{equation*}
\end{lemma}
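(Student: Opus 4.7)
The plan is to exploit the linearity of $\hat J(\pi,d)$ in $d$ together with the pointwise box structure of the model classes $\mathcal{M}_d$ and $\widehat{\mathcal{M}}_d$, so that both inner minimizations collapse to plugging in the lower bounds $B_\ell$ and $\hat B_\ell$, respectively. The difference of the two minima then becomes an empirical average of pointwise bound errors, which Assumption~\ref{ass:bdrate} controls.

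First, observe from Definition~\ref{def:J} that $\hat J(\pi,d) = \widehat{\mathcal{I}}_{\text{iden}} + \widehat\Xi_2(d)$, where $\widehat{\mathcal{I}}_{\text{iden}}$ does not depend on $d$ and $\widehat\Xi_2(d)$ is a sum of terms of the form
\[
\frac{1}{n}\sum_{i=1}^n \mathds{1}(G_i = g)\,\pi(X_i,g)\bigl(1-\tilde\pi(X_i,g)\bigr)\,\mathds{1}\bigl(X_i \in [\tilde c_{g'},\tilde c_{g'+1}]\bigr)\,d(1,X_i,g,g')
\]
(and an analogous term for the control side). Since the coefficient multiplying $d$ in each summand is a product of indicators and therefore non-negative, and since $\mathcal{M}_d$ and $\widehat{\mathcal{M}}_d$ are defined by pointwise box constraints, the minimum over $d$ is attained by taking $d = B_\ell$ (respectively $d = \hat B_\ell$) pointwise. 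Hence, defining the common ``activation weight'' $\alpha_i(w,g,g')$ as the appropriate non-negative product of indicators, we get
\[
\min_{d\in\mathcal{M}_d}\hat J(\pi,d) - \min_{d\in\widehat{\mathcal{M}}_d}\hat J(\pi,d)
\;=\; \frac{1}{n}\sum_{i=1}^n\sum_{w,g,g'} \alpha_i(w,g,g')\bigl[B_\ell(w,X_i,g,g') - \hat B_\ell(w,X_i,g,g')\bigr].
\]

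Second, since for each $i$ at most a bounded number (in fact at most one per treatment side) of the indicators $\alpha_i(w,g,g')$ is active, and each such indicator is at most $1$, the absolute value of the above expression is bounded by
\[
\left|\min_{d\in\mathcal{M}_d}\hat J(\pi,d) - \min_{d\in\widehat{\mathcal{M}}_d}\hat J(\pi,d)\right|
\;\le\; \frac{C_Q}{n}\sum_{i=1}^n \max_{w\in\{0,1\},\;g,g'\in\mathcal{Q}} \bigl|B_\ell(w,X_i,g,g') - \hat B_\ell(w,X_i,g,g')\bigr|,
\]
for a constant $C_Q$ depending only on $Q$, uniformly in $\pi\in\Pi$.

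Finally, Assumption~\ref{ass:bdrate} states that the expectation of the right-hand sum is $O(\rho_n^{-1})$. Applying Markov's inequality then yields $O_p(\rho_n^{-1})$, which is the claimed bound. The only mild subtlety is keeping careful track of the ``pointwise optimum'' argument so that it applies identically to both $\min_{d\in\mathcal{M}_d}$ and $\min_{d\in\widehat{\mathcal{M}}_d}$; once the linearity-in-$d$ observation is made, the argument is essentially algebraic and the stochastic content is entirely delegated to Assumption~\ref{ass:bdrate}.
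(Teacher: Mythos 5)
Your proposal is correct and follows essentially the same route as the paper's proof: both reduce each inner minimization to plugging in the pointwise lower bound, bound the resulting difference by the empirical average of $\max_{w,g,g'}|\hat B_\ell - B_\ell|$, and conclude via Assumption~\ref{ass:bdrate} and Markov's inequality. The only cosmetic difference is your constant $C_Q$; the paper obtains constant $1$ because for each $i$ the indicator products are mutually exclusive (the group indicator and the disjoint interval indicators select at most one active term), but this does not affect the $O_p(\rho_n^{-1})$ conclusion.
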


\begin{proof}[Proof of Lemma \ref{lemma:multi:noise}]
  \begin{equation*}
  \begin{aligned}
       & \left|\min_{d \in \widehat{\mathcal{M}}_{d}} \hat{J}(\pi,d)-\min_{d \in {\mathcal{M}_d}}\hat{J}(\pi,d)\right| \\
        =& \ \left| \frac{1}{n}\sum_{i=1}^{n}  \pi(X_i,G_i)\left(1-\tilde\pi(X_i,G_i)\right) \right.\\
        & \qquad \times \left. \sum_{g^\prime=1}^{G_i-1} \mathds{1}\left(X_i\in[\tilde c_{g^\prime}, \tilde c_{g^\prime+1}]\right)  \left[\min_{d(1,\cdot,\cdot,g^\prime) \in \widehat{\mathcal{M}}_{d}} d(1,X_i,G_i,g^{\prime})-\min_{d(1,\cdot,\cdot,g^\prime) \in {\mathcal{M}}_{d}} d(1,X_i,G_i,g^{\prime})\right]\right. \\
        & +
    \left(1-\pi(X_i,G_i)\right)\tilde\pi(X_i,G_i)\\
    & \qquad \times \left. \sum_{g^\prime=G_i+1}^{Q} \mathds{1}\left(X_i\in[\tilde c_{g^\prime-1}, \tilde c_{g^\prime}]\right)   \left[\min_{d(0,\cdot,\cdot,g^\prime) \in \widehat{\mathcal{M}}_{d}} d(0,X_i,G_i,g^{\prime})-\min_{d(0,\cdot,\cdot,g^\prime) \in {\mathcal{M}}_{d}} d(0,X_i,G_i,g^{\prime})\right]  \right|\\
    =& \ \left| \frac{1}{n}\sum_{i=1}^{n}  \pi(X_i,G_i)\left(1-\tilde\pi(X_i,G_i)\right) \right. \times \left. \sum_{g^\prime=1}^{G_i-1} \mathds{1}\left(X_i\in[\tilde c_{g^\prime}, \tilde c_{g^\prime+1}]\right)  \left[\hat{B}_{\ell}(1,X_i,G_i,g^\prime)-{B}_{\ell}(1,X_i,G_i,g^\prime)\right]\right. \\
        & +
    \left(1-\pi(X_i,G_i)\right)\tilde\pi(X_i,G_i) \times \left. \sum_{g^\prime=G_i+1}^{Q} \mathds{1}\left(X_i\in[\tilde c_{g^\prime-1}, \tilde c_{g^\prime}]\right)   \left[\hat{B}_{\ell}(0,X_i,G_i,g^\prime)-{B}_{\ell}(0,X_i,G_i,g^\prime)\right]  \right|\\
    \leq & \  \frac{1}{n}\sum_{i=1}^{n}\  \max _{w \in \{0,1\},\ g,g^\prime\in \mathcal{Q}} \left | \hat{B}_{\ell}(w,X_i,g,g^\prime)-{B}_{\ell}(w,X_i,g,g^\prime)\right|.
  \end{aligned}
\end{equation*}
Therefore, we have
\begin{equation*}
    \begin{aligned}
    & \mathbb{E}\left|\min_{d \in \widehat{\mathcal{M}}_{d}} \hat{J}(\pi,d)-\min_{d \in {\mathcal{M}_d}}\hat{J}(\pi,d)\right| \\
       \leq &\  \mathbb{E}\left[ \frac{1}{n}\sum_{i=1}^{n}\  \max _{w \in \{0,1\},\ g,g^\prime\in \mathcal{Q}} \left | \hat{B}_{\ell}(w,X_i,g,g^\prime)-{B}_{\ell}(w,X_i,g,g^\prime)\right|\right] = O\left(\rho_n^{-1}\right).
    \end{aligned}
\end{equation*}
Applying Markov's Inequality, we finish the proof.
\end{proof}

\subsection{Definitions and lemmas for $\Xi_{DR}$}\label{sec:pr:general:def:2}

We now turn our attention to the doubly-robust construction in Equation~\eqref{equ:general:multi:db}.
Our derivation is based upon the proof strategy in \cite{zhou2018offline}. 
Again, for simplicity, we drop the subscript of $\Piol$ and use $\Pi$ throughout this section.

\begin{definition}\label{def:K}
Corresponding to Equation~\eqref{equ:general:multi:db}, we redefine the term $\Xi_{\text{DR}}$ as the following population quantity, explicitly under a given policy $\pi$,
{\footnotesize
\begin{align*}
    K(\pi)
    = & 
    \mathbbm{E}\Bigg[\sum_{g=1}^{Q} \pi(X_i,g)\left(1-\tilde\pi(X_i,g)\right) \cdot
    \\
    & \underbrace{\Bigg\{ \mathds{1}\left(G_i=g\right)\cdot \sum_{g^\prime=1}^{g-1} \mathds{1}\left(X_i\in[\tilde c_{g^\prime},\tilde c_{g^\prime+1}]\right)\tilde m(X_i,g^{\prime}) + 
    \sum_{g^\prime=1}^{g-1}\mathds{1}\left(G_i=g^{\prime}\right) \mathds{1}\left(X_i\in[\tilde c_{g^\prime},\tilde c_{g^\prime+1}]\right) \frac{e_g(X_i)}{e_{g^{\prime}}(X_i)}(Y_i-\tilde m(X_i,g^{\prime})) \Bigg\}}_{:= {\Gamma}^{(1)}_{ig}} \Bigg]
    \\
    & + \mathbbm{E}\Bigg[\sum_{g=1}^{Q} \left(1-\pi(X_i,g)\right)\tilde\pi(X_i,g) \cdot
    \\
    & \underbrace{\Bigg\{ \mathds{1}\left(G_i=g\right)\sum_{g^\prime=g+1}^{Q} \mathds{1}\left(X_i\in[\tilde c_{g^\prime-1}, \tilde c_{g^\prime}]\right)\tilde m(X_i,g^{\prime}) + 
    \sum_{g^\prime=g+1}^{Q} \mathds{1}\left(G_i=g^{\prime}\right) \mathds{1}\left(X_i\in[\tilde c_{g^\prime-1}, \tilde c_{g^\prime}]\right)\frac{e_g(X_i)}{e_{g^{\prime}}(X_i)}(Y_i-\tilde m(X_i,g^{\prime})) \Bigg\}}_{:= {\Gamma}^{(0)}_{ig}} \Bigg],
\end{align*}
}
where $ {\Gamma}^{(1)}_{ig} $ and $ {\Gamma}^{(0)}_{ig} $ is defined as the true double-robust score for group $g$ when we extrapolate the current cutoff to the left and to the right, respectively.

To simply the notation, we further define
$$
\begin{aligned}
 {K}(\pi) &= \ \sum_{g=1}^{Q} \underbrace{\mathbbm{E}\Big[
 \pi(X_i,g)\left(1-\tilde\pi(X_i,g)\right) 
   {\Gamma}^{(1)}_{ig} \Big]}_{:= {K}_1(\pi,g) } +  \sum_{g=1}^{Q}\underbrace{\mathbbm{E}\Big[
 \left(1-\pi(X_i,g)\right)\tilde\pi(X_i,g) 
 {\Gamma}^{(0)}_{ig} \Big]}_{:= {K}_0(\pi,g) } .
\end{aligned}
$$

We also redefine the empirical quantity  $\widehat\Xi_{\text{DR}}$ in Equation~\eqref{equ:general:multi:empDR}, 
$$
\begin{aligned}
 \hat{K}(\pi) &= \ \sum_{g=1}^{Q}\underbrace{\frac{1}{n}\sum_{i=1}^{n}
 \pi(X_i,g)\left(1-\tilde\pi(X_i,g)\right) 
  \hat {\Gamma}^{(1)}_{ig} }_{:= \hat{K}_1(\pi,g) } +  \sum_{g=1}^{Q}\underbrace{\frac{1}{n}\sum_{i=1}^{n}
 \left(1-\pi(X_i,g)\right)\tilde\pi(X_i,g) 
  \hat {\Gamma}^{(0)}_{ig} }_{:= \hat{K}_0(\pi,g) } ,
\end{aligned}
$$
where $\hat {\Gamma}^{(1)}_{ig} $ and $\hat {\Gamma}^{(0)}_{ig} $ are the empirical double robust scores using the cross-fitted nuisance estimates constructed from the $K$ folds of the data:

\begin{align*}
    \hat{\Gamma}^{(1)}_{ig }
   \  = \ & \mathds{1}\left(G_i=g\right) \sum_{g^\prime=1}^{g-1}  \mathds{1}\left(X_i\in[\tilde c_{g^\prime},\tilde c_{g^\prime+1}]\right)\tilde m^{-k[i]}(X_i,g^{\prime}) \\
   &  \hspace{.5in} +    \sum_{g^\prime=1}^{g-1}\mathds{1}\left(G_i=g^{\prime}\right) \mathds{1}\left(X_i\in[\tilde c_{g^\prime},\tilde c_{g^\prime+1}]\right)  \frac{e^{-k[i]}_g(X_i)}{e^{-k[i]}_{g^{\prime}}(X_i)}(Y_i-\tilde m^{-k[i]}(X_i,g^{\prime}))\\
    \hat{\Gamma}^{(0)}_{ig } \ = \ &  \mathds{1}\left(G_i=g\right)\sum_{g^\prime=g+1}^{Q}  \mathds{1}\left(X_i\in[\tilde c_{g^\prime-1}, \tilde c_{g^\prime}]\right)\tilde m^{-k[i]}(X_i,g^{\prime})\\
    & \hspace{.5in} + \sum_{g^\prime=g+1}^{Q} \mathds{1}\left(G_i=g^{\prime}\right) \mathds{1}\left(X_i\in[\tilde c_{g^\prime-1}, \tilde c_{g^\prime}]\right)\frac{e^{-k[i]}_g(X_i)}{e^{-k[i]}_{g^{\prime}}(X_i)}(Y_i-\tilde m^{-k[i]}(X_i,g^{\prime}))
\end{align*}
Finally, we define the oracle (infeasible) estimator with the estimated $\hat {\Gamma}^{(1)}_{ig} $ and $\hat {\Gamma}^{(0)}_{ig} $ in $\hat{K}(\pi)$  substituted by the true value $ {\Gamma}^{(1)}_{ig} $ and $ {\Gamma}^{(0)}_{ig} $,
$$
\begin{aligned}
 \tilde{K}(\pi) &= \ \sum_{g=1}^{Q} \underbrace{\frac{1}{n}\sum_{i=1}^{n}
 \pi(X_i,g)\left(1-\tilde\pi(X_i,g)\right) 
 {\Gamma}^{(1)}_{ig} }_{:=  \tilde{K}_1(\pi,g) } +  \sum_{g=1}^{Q}\underbrace{\frac{1}{n}\sum_{i=1}^{n}
 \left(1-\pi(X_i,g)\right)\tilde\pi(X_i,g) 
  {\Gamma}^{(0)}_{ig} }_{:=  \tilde{K}_0(\pi,g) } ,
\end{aligned}
$$
It can be easily shown that $\mathbbm{E}\left[\tilde{K}(\pi)\right]={K}(\pi)$.



\end{definition}

To proceed with the proof, we adopt several useful definitions from \cite{zhou2018offline}.
\begin{definition}
Given the feature domain $\mathcal{X}$, a policy class $\Pi$, a set of $n$ points $\left\{x_{1}, \ldots, x_{n}\right\} \subset$ $\mathcal{X}$, define:
\begin{enumerate}
    \item Hamming distance between any two policies $\pi_a$ and $\pi_b$ in $\Pi: H\left(\pi_a, \pi_b\right)=\frac{1}{n} \sum_{j=1}^{n} \mathds{1}\left(\pi_a\left(x_{j}\right) \neq \pi_b\left(x_{j}\right)\right)$.
    \item $\epsilon$-Hamming covering number of the set $\left\{x_{1}, \ldots, x_{n}\right\}$ :
$N_{H}\left(\epsilon, \Pi,\left\{x_{1}, \ldots, x_{n}\right\}\right)$ is the smallest number $K$ of policies $\left\{\pi_1, \ldots, \pi_{K}\right\}$ in $\Pi$, such that $\forall \pi \in$ $\Pi, \exists \pi_{i}, H\left(\pi, \pi_{i}\right) \leq \epsilon .$
   \item $\epsilon$-Hamming covering number of $\Pi: N_{H}(\epsilon, \Pi)=\sup \left\{N_{H}\left(\epsilon, \Pi,\left\{x_{1}, \ldots, x_{m}\right\}\right) \mid m \geq 1, x_{1}, \ldots, x_{m} \in\right.$ $\mathcal{X}\} .$
   \item Entropy integral: $\kappa(\Pi)=\int_{0}^{1} \sqrt{\log N_{H}\left(\epsilon^{2}, \Pi\right)} d \epsilon$.
\end{enumerate}
\end{definition}

\begin{definition}\label{def:delta}
 The influence difference function $\tilde{\Delta}(\cdot , \cdot): \Pi \times \Pi \rightarrow \mathbf{R}$ and the policy value difference function $\Delta(\cdot , \cdot): \Pi \times \Pi \rightarrow \mathbf{R}$ are defined respectively as follows: $$
 \begin{aligned}
     \tilde{\Delta}\left(\pi_a, \pi_b\right)&:=\tilde{K}(\pi_a)-\tilde{K}(\pi_b)\\
     & = \sum_{g=1}^{Q} \underbrace{\tilde{K_1}(\pi_a,g)-\tilde{K_1}(\pi_b,g)}_{:=\tilde{\Delta}_1\left(\pi_a(\cdot,g), \pi_b(\cdot,g)\right)} +\sum_{g=1}^{Q}\underbrace{\tilde{K_0}(\pi_a,g)-\tilde{K_0}(\pi_b,g)}_{:=\tilde{\Delta}_0\left(\pi_a(\cdot,g), \pi_b(\cdot,g)\right)}\\
     \Delta\left(\pi_a,\pi_b\right)&:=K(\pi_a)-K(\pi_b)\\
     &=\sum_{g=1}^{Q} \underbrace{{K_1}(\pi_a,g)-{K_1}(\pi_b,g)}_{:={\Delta}_1\left(\pi_a(\cdot,g), \pi_b(\cdot,g)\right)}+\sum_{g=1}^{Q}\underbrace{{K_0}(\pi_a,g)-{K_0}(\pi_b,g)}_{:={\Delta}_0\left(\pi_a(\cdot,g), \pi_b(\cdot,g)\right)}
 \end{aligned}$$
\end{definition}

\begin{lemma}\label{lemma:oracle}
Under Assumptions \ref{ass:gen:bo} and \ref{ass:overlap:general}, the influence difference function concentrates uniformly around its mean: for any $\delta>0$, with probability at least $1- \delta$,
$$
\sup _{\pi_a, \pi_b \in \Pi}\left|\tilde{\Delta}\left(\pi_a, \pi_b\right)-\Delta\left(\pi_a, \pi_b\right)\right| \leq
Q\left(136 \sqrt{2}+435.2+\sqrt{2 \log \frac{2Q}{\delta}}\right)
         \cdot 
         \sqrt{ \frac{ \max_{g\in\mathcal{Q}}V^\ast_{g}}{n}}+o\left(\frac{1}{\sqrt{n}}\right),
$$
where \[
V^\ast_{g}=  \sup _{\pi_a(\cdot,g), \pi_b(\cdot,g) \in \Pi_g}\mathbbm{E}\left[  (\pi_a(X_i,g)- \pi_b(X_i,g))^2\cdot (\left(1-\tilde\pi(X_i,g)\right)
( {\Gamma}^{(1)}_{ig} )^2 + \tilde\pi(X_i,g) 
(  {\Gamma}^{(0)}_{ig}  )^2)\right]
\]
measures the worst-case variance of evaluating the difference between two policies in $\Pi$.
\end{lemma}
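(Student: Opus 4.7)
The plan is to reduce uniform concentration over the full policy class $\Pi = \Piol$ to concentration on $2Q$ separate group-direction subproblems, each indexed by a scalar threshold pair, and then combine them with a union bound. Because $\Piol$ has group-specific cutoffs that vary independently, writing $\pi_a = (c_{a,1},\ldots,c_{a,Q})$ and $\pi_b = (c_{b,1},\ldots,c_{b,Q})$, Definition~\ref{def:delta} gives
\[
\tilde{\Delta}(\pi_a,\pi_b) - \Delta(\pi_a,\pi_b) = \sum_{g=1}^{Q}\sum_{w \in \{0,1\}}\bigl\{ \tilde{\Delta}_w(\pi_a(\cdot,g),\pi_b(\cdot,g)) - \Delta_w(\pi_a(\cdot,g),\pi_b(\cdot,g)) \bigr\},
\]
where each summand depends only on $(c_{a,g},c_{b,g})$. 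The supremum over $(\pi_a,\pi_b)\in \Pi\times\Pi$ therefore splits group by group, so it suffices to control each of the $2Q$ summands separately and sum.

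For fixed $g$ and $w$, the centered process is an empirical mean of $\pi(X_i,g)(1-\tilde{\pi}(X_i,g))\Gamma_{ig}^{(1)}$-type (or the symmetric control-side) summands minus its population counterpart, indexed by the pair $(c_{a,g},c_{b,g})$. Assumptions~\ref{ass:gen:bo} and~\ref{ass:overlap:general} make the doubly-robust score $\Gamma_{ig}^{(w)}$ uniformly bounded (the ratio $e_g/e_{g'}$ is bounded by $1/\eta$), and the class of pairs of single-dimensional threshold indicators has finite VC dimension (so its entropy integral $\kappa(\cdot)$ is finite). I would then invoke the variance-localized chaining argument of \cite{zhou2018offline} (their sample-variance-aware uniform concentration lemma, obtained from symmetrization followed by a Dudley-style entropy-integral bound and a Talagrand-type Bernstein step) to obtain, for each fixed $(g,w)$ with probability at least $1-\delta/(2Q)$,
\[
\sup_{\pi_a,\pi_b}\bigl|\tilde{\Delta}_w(\pi_a(\cdot,g),\pi_b(\cdot,g)) - \Delta_w(\pi_a(\cdot,g),\pi_b(\cdot,g))\bigr| \;\leq\; \bigl(136\sqrt{2} + 435.2 + \sqrt{2\log(2Q/\delta)}\bigr)\sqrt{V_g^{*}/n} + o(n^{-1/2}).
\]
The two universal constants $136\sqrt{2}$ and $435.2$ are the entropy-integral and Bernstein-remainder constants that appear in that argument; the final $\sqrt{2\log(2Q/\delta)}$ term comes from the sub-Gaussian tail together with the union bound spread below.

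Summing the $2Q$ group-direction contributions, applying the union bound to combine the failure probabilities into a single $\delta$, and upper-bounding each $V_g^{*}$ by $\max_{g}V_g^{*}$ produces the claimed inequality, with the overall $Q$ factor emerging from the summation. The main technical obstacle is faithfully transporting the chaining argument of \cite{zhou2018offline} to our setting: one must verify that restricting the doubly-robust summand to a single extrapolation window $[\tilde{c}_{g'-1},\tilde{c}_{g'}]$ preserves the envelope and second-moment structure required to invoke their lemma, and that the second moment of the summand is precisely captured by $V_g^{*}$ as defined. Once these structural checks are in place (both following directly from Assumptions~\ref{ass:gen:bo} and~\ref{ass:overlap:general}), the remaining steps are bookkeeping: tracking constants, combining union-bound factors, and absorbing lower-order fluctuations between sample and population variance into the $o(n^{-1/2})$ remainder.
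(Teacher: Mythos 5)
Your overall strategy is the same as the paper's: reduce the uniform bound to per-group applications of the variance-aware uniform concentration lemma (Lemma~2) of \cite{zhou2018offline}, using the boundedness of the doubly-robust scores from Assumptions~\ref{ass:gen:bo} and~\ref{ass:overlap:general}, the finite VC dimension of the scalar-threshold class (so $\kappa(\Pi_g)\le 2.5$ and $54.4\sqrt{2}\cdot 2.5=136\sqrt{2}$), and a union bound that yields the $\sqrt{2\log(2Q/\delta)}$ term. The one place your route does not deliver the inequality as stated is the decomposition into $2Q$ group-direction subproblems. If you bound each of the $2Q$ summands separately by $\bigl(136\sqrt{2}+435.2+\sqrt{2\log(2Q/\delta)}\bigr)\sqrt{V^\ast_g/n}+o(n^{-1/2})$ and then sum, the leading constant is $2Q(\cdots)$, not the $Q(\cdots)$ in the lemma: the "overall $Q$ factor emerging from the summation" does not come out, since each of the two directional variances can individually be as large as $V^\ast_g$.

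The paper avoids this by keeping the two directions together within each group. For group $g$ it writes $\tilde{K}_1(\pi,g)+\tilde{K}_0(\pi,g)$ differences as a single empirical process whose summand is the inner product of the two-dimensional vector $\bigl(\pi_a(X_i,g)-\pi_b(X_i,g),\,-(\pi_a(X_i,g)-\pi_b(X_i,g))\bigr)$ with $\bigl((1-\tilde\pi(X_i,g))\Gamma^{(1)}_{ig},\,\tilde\pi(X_i,g)\Gamma^{(0)}_{ig}\bigr)$, and applies Lemma~2 of \cite{zhou2018offline} specialized to the two-dimensional case once per group, i.e.\ only $Q$ applications. The squared inner product equals $(\pi_a(X_i,g)-\pi_b(X_i,g))^2\{(1-\tilde\pi(X_i,g))(\Gamma^{(1)}_{ig})^2+\tilde\pi(X_i,g)(\Gamma^{(0)}_{ig})^2\}$ because $\tilde\pi(X_i,g)(1-\tilde\pi(X_i,g))=0$ pointwise, so the cross term vanishes and the relevant second moment is exactly $V^\ast_g$ as defined. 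If you replace your $(g,w)$-wise split with this joint two-dimensional formulation, the rest of your argument (envelope and second-moment checks, entropy integral, union bound over the $Q$ groups) goes through unchanged and recovers the lemma with the stated constant.
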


\begin{proof}
Define the policy class for each subgroup as $\Pi_g:=\left\{ \pi(\cdot,g) |\  \pi(\cdot,g) \in \Piol\right\}$. Following the previous Definitions~\ref{def:K}~and~\ref{def:delta}, we can decompose it into 
\[
\begin{aligned}
    \tilde{\Delta}\left(\pi_a, \pi_b\right)
    &=\sum_{g=1}^{Q}
\tilde{K_1}(\pi_a,g)+\tilde{K_0}(\pi_a,g)-\tilde{K_1}(\pi_b,g) -\tilde{K_0}(\pi_b,g)
\end{aligned}
\]
and 
\[{\Delta}\left(\pi_a, \pi_b\right)=
\sum_{g=1}^{Q} {K_1}(\pi_a,g)+{K_0}(\pi_a,g)-{K_1}(\pi_b,g) -{K_0}(\pi_b,g).
\]
 We notice that
$$
\begin{aligned}
    \sup _{\pi_a, \pi_b \in \Pi}\left|\tilde{\Delta}\left(\pi_a, \pi_b\right)-\Delta\left(\pi_a, \pi_b\right)\right| \leq & 
    \sum_{g=1}^{Q} \sup _{\pi_a(\cdot,g), \pi_b(\cdot,g) \in \Pi_g}\left|\tilde{\Delta}\left(\pi_a(\cdot,g), \pi_b(\cdot,g)\right)-\Delta\left(\pi_a(\cdot,g), \pi_b(\cdot,g)\right)\right| \\
\end{aligned}
$$
So it suffices to bound 
\begin{equation}\label{equ:delta1}
   \sup _{\pi_a(\cdot,g), \pi_b(\cdot,g) \in \Pi_g}\left|\tilde{\Delta}\left(\pi_a(\cdot,g), \pi_b(\cdot,g)\right)-\Delta\left(\pi_a(\cdot,g), \pi_b(\cdot,g)\right)\right| 
\end{equation}
for each group $g$.

By Assumptions~\ref{ass:gen:bo}~and~\ref{ass:overlap:general}, $\left\{({\Gamma}^1_{i,g},{\Gamma}^0_{i,g})\right\}_{i=1}^{n}$ are i.i.d. random vectors with bounded support. Note that
\[
\tilde{\Delta}\left(\pi_a(\cdot,g), \pi_b(\cdot,g)\right)= \frac{1}{n}  \sum_{i=1}^{n}
\left\langle \begin{bmatrix}
            \pi_a(X_i,g)- \pi_b(X_i,g) \\
        1- \pi_a(X_i,g)-( 1- \pi_b(X_i,g))
         \end{bmatrix} , 
          \begin{bmatrix}
            \left(1-\tilde\pi(X_i,g)\right) 
 {\Gamma}^{(1)}_{ig}  \\
  \tilde\pi(X_i,g) 
  {\Gamma}^{(0)}_{ig}   \end{bmatrix}  \right\rangle
 \]
Thus, we can apply Lemma~2 of \cite{zhou2018offline} by specializing it to the 2-dimensional case to Equations~\eqref{equ:delta1} for each group $g$ and use the union bound to obtain: 
$\forall\delta>0$, with probability at least $1-2 Q\delta$,
    \begin{align*}
         &  \sum_{g=1}^{Q} \sup _{\pi_a(\cdot,g), \pi_b(\cdot,g) \in \Pi_g}\left|\tilde{\Delta}\left(\pi_a(\cdot,g), \pi_b(\cdot,g)\right)-\Delta\left(\pi_a(\cdot,g), \pi_b(\cdot,g)\right)\right| \\
         \leq & 
         \sum_{g=1}^{Q}\left(54.4 \sqrt{2} \kappa(\Pi_g)+435.2+\sqrt{2 \log \frac{1}{\delta}}\right)\cdot \\
        & \quad \sqrt{\frac{1}{n}\sup_{\pi_a(\cdot,g), \pi_b(\cdot,g) \in \Pi_g} \mathbbm{E}\left[\left\langle \begin{bmatrix}
            \pi_a(X_i,g)- \pi_b(X_i,g) \\
        1- \pi_a(X_i,g)-( 1- \pi_b(X_i,g))
         \end{bmatrix} , 
          \begin{bmatrix}
            \left(1-\tilde\pi(X_i,g)\right) 
 {\Gamma}^{(1)}_{ig}  \\
  \tilde\pi(X_i,g) 
  {\Gamma}^{(0)}_{ig}   \end{bmatrix}  \right\rangle^{2}\right]}
         +o\left(\frac{1}{\sqrt{n}}\right)\\
         \leq & 
         \sum_{g=1}^{Q}\left(54.4 \sqrt{2} \kappa(\Pi_g)+435.2+\sqrt{2 \log \frac{1}{\delta}}\right)\cdot \\
        & \qquad \sqrt{
        \frac{1}{n}
        \sup_{\pi_a(\cdot,g), \pi_b(\cdot,g) \in \Pi_g} 
        \mathbbm{E}\left[\left\langle \begin{bmatrix}
            \pi_a(X_i,g)- \pi_b(X_i,g) \\
        1- \pi_a(X_i,g)-( 1- \pi_b(X_i,g))
         \end{bmatrix} , 
          \begin{bmatrix}
            \left(1-\tilde\pi(X_i,g)\right) 
 {\Gamma}^{(1)}_{ig}  \\
  \tilde\pi(X_i,g) 
  {\Gamma}^{(0)}_{ig}   \end{bmatrix}  \right\rangle^{2}\right]}
         +o\left(\frac{1}{\sqrt{n}}\right)\\
        = & 
         \sum_{g=1}^{Q}\left(54.4 \sqrt{2} \kappa(\Pi_g)+435.2+
         \sqrt{2 \log \frac{1}{\delta}}\right)\cdot 
        \\
         & \qquad  \sqrt{ \frac{1}{n} 
         \sup_{\pi_a(\cdot,g), \pi_b(\cdot,g) \in \Pi_g}\mathbbm{E}\left[  (\pi_a(X_i,g)- \pi_b(X_i,g))^2\cdot (\left(1-\tilde\pi(X_i,g)\right)
( {\Gamma}^{(1)}_{ig} )^2 + \tilde\pi(X_i,g) 
(  {\Gamma}^{(0)}_{ig}  )^2)\right]
}+o\left(\frac{1}{\sqrt{n}}\right)\\
          \leq & Q\left(136 \sqrt{2}+435.2+\sqrt{2 \log \frac{1}{\delta}}\right)
         \cdot 
         \sqrt{ \frac{ \max_{g\in\mathcal{Q}}V^\ast_{g}}{n}}+o\left(\frac{1}{\sqrt{n}}\right)
    \end{align*}
where the second to last equality holds because
$$
\begin{aligned}
& \Big\langle \begin{bmatrix}
            \pi_a(X_i,g)- \pi_b(X_i,g) \\
        1- \pi_a(X_i,g)-( 1- \pi_b(X_i,g))
         \end{bmatrix} , 
          \begin{bmatrix}
            \left(1-\tilde\pi(X_i,g)\right) 
 {\Gamma}^{(1)}_{ig}  \\
  \tilde\pi(X_i,g) 
  {\Gamma}^{(0)}_{ig}   \end{bmatrix}  \Big\rangle^{2} \\
  =& (\pi_a(X_i,g)- \pi_b(X_i,g))^2(  \left(1-\tilde\pi(X_i,g)\right) 
 {\Gamma}^{(1)}_{ig}  - \tilde\pi(X_i,g) 
  {\Gamma}^{(0)}_{ig} )^2\\
  = & (\pi_a(X_i,g)- \pi_b(X_i,g))^2 (\left(1-\tilde\pi(X_i,g)\right)
( {\Gamma}^{(1)}_{ig} )^2 + \tilde\pi(X_i,g) 
(  {\Gamma}^{(0)}_{ig}  )^2)
 \end{aligned}
$$
and the last inequality is due to the definition of
\[
V^\ast_{g}=  \sup _{\pi_a(\cdot,g), \pi_b(\cdot,g) \in \Pi_g}\mathbbm{E}\left[  (\pi_a(X_i,g)- \pi_b(X_i,g))^2\cdot (\left(1-\tilde\pi(X_i,g)\right)
( {\Gamma}^{(1)}_{ig} )^2 + \tilde\pi(X_i,g) 
(  {\Gamma}^{(0)}_{ig}  )^2)\right]
\]
and the fact that, in our case, $\Pi_g$ is a VC class with finite VC dimension $VC(\Pi_g)=1$. By Theorem~1 of \cite{haussler1995sphere}, the covering number can be bounded by VC dimension as follows: $N_{H}(\epsilon, \Pi) \leq e(VC(\Pi)+1)\left(\frac{2 e}{\epsilon}\right)^{VC(\Pi)}$. 
By taking the natural log of both sides and computing the entropy integral, one can show that $\kappa(\Pi_g) \leq 2.5 \sqrt{V C(\Pi_g)}=2.5$. 
Then, plugging in the upper bound on $\kappa(\Pi_g)$ completes the proof.
\end{proof}

\begin{definition}
 Define $\hat{\Delta}(\cdot , \cdot): \Pi \times \Pi \rightarrow \mathbf{R}$ with 
 $$
  \begin{aligned}
    \hat{\Delta}\left(\pi_a, \pi_b\right) &=\hat K(\pi_a)-\hat K(\pi_b)\\
     & = \sum_{g=1}^{Q} \underbrace{\hat{K_1}(\pi_a,g)-\hat{K_1}(\pi_b,g)}_{:=\hat{\Delta}_1\left(\pi_a(\cdot,g), \pi_b(\cdot,g)\right)} +\sum_{g=1}^{Q}\underbrace{\hat{K_0}(\pi_a,g)-\hat{K_0}(\pi_b,g)}_{:=\hat{\Delta}_0\left(\pi_a(\cdot,g), \pi_b(\cdot,g)\right)}\\
 \end{aligned}$$
\end{definition}

\begin{lemma}\label{lemma:feasible}
Assume the number of cross-fitting folds $K>2$ and $K$ is finite. Under Assumptions~\ref{ass:gen:bo},~\ref{ass:overlap:general}~and~\ref{ass:rate:general},
$$\sup _{\pi_a, \pi_b \in \Pi}\left|\hat{\Delta}\left(\pi_a, \pi_b\right)-\tilde{\Delta}\left(\pi_a, \pi_b\right)\right|=o_{p}\left(\frac{1}{\sqrt{n}}\right)$$.
\end{lemma}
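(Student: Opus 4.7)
My plan is to reduce the uniform control of $\hat{\Delta} - \tilde{\Delta}$ to controlling each of the $2Q$ per-group score residuals, then exploit cross-fitting plus the doubly-robust product-rate condition to obtain the $o_p(n^{-1/2})$ rate. First I would observe that, because the weights $\pi(X_i,g)(1-\tilde\pi(X_i,g))$ and $(1-\pi(X_i,g))\tilde\pi(X_i,g)$ are the same in $\hat K$ and $\tilde K$, only the scores differ:
\begin{equation*}
\hat{\Delta}(\pi_a,\pi_b)-\tilde{\Delta}(\pi_a,\pi_b)
= \sum_{g=1}^{Q} \frac{1}{n}\sum_{i=1}^{n}
\bigl(\pi_a(X_i,g)-\pi_b(X_i,g)\bigr)
\Bigl\{-(1-\tilde\pi(X_i,g))\,\xi^{(1)}_{ig}+\tilde\pi(X_i,g)\,\xi^{(0)}_{ig}\Bigr\},
\end{equation*}
where $\xi^{(w)}_{ig}:=\hat{\Gamma}^{(w)}_{ig}-\Gamma^{(w)}_{ig}$ and $\pi_a-\pi_b\in\{-1,0,1\}$. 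Since $Q$ is fixed, it suffices to show that for each $(w,g)$ the empirical process
$\sup_{\pi_a,\pi_b\in\Pi}\bigl|n^{-1}\sum_i (\pi_a-\pi_b)(X_i,g)\,\tilde\pi^{(w)}(X_i,g)\,\xi^{(w)}_{ig}\bigr|$
is $o_p(n^{-1/2})$, where $\tilde\pi^{(1)}=1-\tilde\pi$ and $\tilde\pi^{(0)}=\tilde\pi$.

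Second, I would use cross-fitting to decouple nuisance estimation from the evaluation fold. Conditioning on the out-of-fold data $\mathcal{D}_{-k}$, the functions $\hat{\tilde m}^{-k}$ and $\hat e_g^{-k}$ are deterministic on fold $k$, so the standard doubly-robust algebra decomposes each score residual $\xi^{(w)}_{ig}$ as $A_{ig}+B_{ig}+C_{ig}$. Here $A_{ig}$ is linear in $(Y_i-\tilde m(X_i,g'))$ with a conditionally mean-zero structure (given $X_i,G_i,\mathcal{D}_{-k}$), $B_{ig}$ is a first-order bias term linear in the nuisance errors whose conditional mean given $X_i$ vanishes by the oracle Neyman-orthogonality of the AIPW score, and $C_{ig}$ is the cross-product
\begin{equation*}
\bigl(\hat{\tilde m}^{-k}(X_i,g')-\tilde m(X_i,g')\bigr)\cdot
\Bigl(\tfrac{\hat e^{-k}_g(X_i)}{\hat e^{-k}_{g'}(X_i)}-\tfrac{e_g(X_i)}{e_{g'}(X_i)}\Bigr)
\end{equation*}
(up to sign and the appropriate indicators). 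By Assumption~\ref{ass:rate:general}, the $L^2(\mathcal{P})$ norms $\|\hat{\tilde m}^{-k}-\tilde m\|_2$ and $\|\hat e_g^{-k}/\hat e_{g'}^{-k}-e_g/e_{g'}\|_2$ each tend to zero in probability, and their product is $o_p(n^{-1/2})$.

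Third, I would invoke a VC-type maximal inequality (the same one used to prove Lemma~\ref{lemma:oracle}, i.e.\ Lemma~2 of \cite{zhou2018offline}) on the per-group indicator class $\{\pi_a(\cdot,g)-\pi_b(\cdot,g)\}$, which has VC dimension $1$. Conditional on $\mathcal{D}_{-k}$, this yields
\begin{equation*}
\sup_{\pi_a,\pi_b}\Bigl|\tfrac{1}{|I_k|}\sum_{i\in I_k} (\pi_a-\pi_b)(X_i,g)\,\tilde\pi^{(w)}(X_i,g)(A_{ig}+B_{ig})\Bigr|
= O_p\!\Bigl(\sqrt{\tfrac{\mathbb{E}[(A_{ig}+B_{ig})^2\mid\mathcal{D}_{-k}]}{|I_k|}}\Bigr),
\end{equation*}
and by Assumptions~\ref{ass:gen:bo},~\ref{ass:overlap:general},~\ref{ass:rate:general} the conditional second moment of $A_{ig}+B_{ig}$ tends to zero in probability, so the right side is $o_p(n^{-1/2})$. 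The contribution of $C_{ig}$ is uniformly bounded by $n^{-1}\sum_i|C_{ig}|$, which is $o_p(n^{-1/2})$ by Cauchy--Schwarz and the product-rate condition~\eqref{equ:rate:general:equ2}. Summing over the finitely many folds ($K>2$ but fixed) and the $2Q$ score components preserves the $o_p(n^{-1/2})$ rate.

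The main obstacle is obtaining the strict little-$o_p(n^{-1/2})$ rate rather than $O_p(n^{-1/2})$: this requires (i) the Neyman-orthogonality cancellation to kill the first-order bias so that $A_{ig}+B_{ig}$ has vanishing variance, (ii) the product rate in Assumption~\ref{ass:rate:general} to bound the second-order term at faster than $n^{-1/2}$, and (iii) a VC-class uniform inequality conditional on the out-of-fold data so that the sup over $(\pi_a,\pi_b)$ only costs a constant (not a factor of $\sqrt{n}$). All three ingredients combine exactly as in the standard AIPW policy-learning arguments of \cite{zhou2018offline} and \cite{athey2021policy}, the only difference being that our "propensity" is the conditional group-membership $e_g(x)$ rather than a treatment probability.
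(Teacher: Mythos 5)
Your proposal is correct and follows essentially the same route as the paper's proof: the three-term decomposition of the score residual (your $A,B,C$ are the paper's $S_2,S_1,S_3$), the conditional mean-zero structure exploited via cross-fitting, the conditional application of the VC-type maximal inequality of \cite{zhou2018offline} with a variance proxy that vanishes under Assumption~\ref{ass:rate:general}, and Cauchy--Schwarz plus the product-rate condition for the cross term. No gaps.
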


\begin{proof}
  Taking any two policies $\pi_a,\pi_b \in \Pi$, it suffices to prove the following result:
\begin{align}
    \sup _{\pi_a(\cdot,g), \pi_b(\cdot,g) \in \Pi_g}\left|\hat{\Delta}_1\left(\pi_a(\cdot,g), \pi_b(\cdot,g)\right)-\tilde\Delta_1\left(\pi_a(\cdot,g), \pi_b(\cdot,g)\right)\right| & =o_{p}\left(\frac{1}{\sqrt{n}}\right) \label{equ:deltahat1}\\
 \sup _{\pi_a(\cdot,g), \pi_b(\cdot,g) \in \Pi_g}\left|\hat{\Delta}_0\left(\pi_a(\cdot,g), \pi_b(\cdot,g)\right)-\tilde\Delta_0\left(\pi_a(\cdot,g), \pi_b(\cdot,g)\right)\right|& =o_{p}\left(\frac{1}{\sqrt{n}}\right) \label{equ:deltahat0}
\end{align}
Then, we only need to prove Equation~\eqref{equ:deltahat1}.  Equation~\eqref{equ:deltahat0} can be proven using the same argument.
We have the following decomposition:
\begin{align*}
           & \hat{\Delta}_1\left(\pi_a(\cdot,g), \pi_b(\cdot,g)\right)-\tilde\Delta_1\left(\pi_a(\cdot,g), \pi_b(\cdot,g)\right) \\
           = &  \frac{1}{n} \sum_{i=1}^{n} \left(\pi_a(X_i,g)-\pi_b(X_i,g)\right)\{1-\tildepi(X_i,g)\} \sum_{g^\prime=1}^{g-1}  \mathds{1}\left(X_i\in[\tilde c_{g^\prime},\tilde c_{g^\prime+1}]\right) \left(\hat{\tilde m}^{-k[i]}(X_i,g^\prime)-{\tilde m}(X_i,g^\prime)\right)\cdot \\
           & \hspace{3.5in} \left(\mathds{1}\left(G_i=g\right)-\mathds{1}\left(G_i=g^{\prime}\right)\frac{{e}_g(X_i)}{{e}_{g^{\prime}}(X_i)}\right)
           \\
           & +  \frac{1}{n} \sum_{i=1}^{n} \left(\pi_a(X_i,g)-\pi_b(X_i,g)\right)\{1-\tildepi(X_i,g)\} \sum_{g^\prime=1}^{g-1}  \mathds{1}\left(X_i\in[\tilde c_{g^\prime},\tilde c_{g^\prime+1}]\right) \mathds{1}\left(G_i=g^{\prime}\right)\\
           & \hspace{3.5in}\left(\frac{\hat{e}^{-k[i]}_g(X_i)}{\hat{e}^{-k[i]}_{g^{\prime}}(X_i)}-\frac{{e}_g(X_i)}{{e}_{g^{\prime}}(X_i)}\right) \left(Y_i-{\tilde m}(X_i,g^\prime) \right)\\
           & +  \frac{1}{n} \sum_{i=1}^{n} \left(\pi_a(X_i,g)-\pi_b(X_i,g)\right)\{1-\tildepi(X_i,g)\}\times \sum_{g^\prime=1}^{g-1}  \mathds{1}\left(X_i\in[\tilde c_{g^\prime},\tilde c_{g^\prime+1}]\right) \mathds{1}\left(G_i=g^{\prime}\right)\\
           & \hspace{3.5in} \left(\frac{\hat{e}^{-k[i]}_g(X_i)}{\hat{e}^{-k[i]}_{g^{\prime}}(X_i)}-\frac{{e}_g(X_i)}{{e}_{g^{\prime}}(X_i)}\right)\left({\tilde m}(X_i,g^\prime)-\hat{\tilde m}^{-k[i]}(X_i,g^\prime)\right)\\
            := & S_1\left(\pi_a, \pi_b\right) + S_2\left(\pi_a, \pi_b\right) + S_3\left(\pi_a, \pi_b\right).
      \end{align*}
We will bound each of these three terms.
Define the following:
\begin{align*}
       S_1^k \left(\pi_a, \pi_b\right) & =  \frac{1}{n} \sum_{i\in \mathcal{I}_k}  \left(\pi_a(X_i,g)-\pi_b(X_i,g)\right)\{1-\tildepi(X_i,g)\}\cdot\\
      & \quad\qquad \sum_{g^\prime=1}^{g-1}  \mathds{1}\left(X_i\in[\tilde c_{g^\prime},\tilde c_{g^\prime+1}]\right) \left(\hat{\tilde m}^{-k[i]}(X_i,g^\prime)-{\tilde m}(X_i,g^\prime)\right)\cdot \left(\mathds{1}\left(G_i=g\right)-\mathds{1}\left(G_i=g^{\prime}\right)\frac{{e}_g(X_i)}{{e}_{g^{\prime}}(X_i)}\right) \\
          S_2^k \left(\pi_a, \pi_b\right) & = \frac{1}{n} \sum_{i\in \mathcal{I}_k} \left(\pi_a(X_i,g)-\pi_b(X_i,g)\right)\{1-\tildepi(X_i,g)\}\cdot\\ 
          & \quad\qquad \sum_{g^\prime=1}^{g-1}  \mathds{1}\left(X_i\in[\tilde c_{g^\prime},\tilde c_{g^\prime+1}]\right) \mathds{1}\left(G_i=g^{\prime}\right)\left(\frac{\hat{e}^{-k[i]}_g(X_i)}{\hat{e}^{-k[i]}_{g^{\prime}}(X_i)}-\frac{{e}_g(X_i)}{{e}_{g^{\prime}}(X_i)}\right) \left(Y_i-{\tilde m}(X_i,g^\prime) \right),
    \end{align*}
where $ \mathcal{I}_k =\left\{i\mid k[i]=k\right\}$ denotes the $k$-th fold data and $ \mathcal{I}_{-k} =\left\{i\mid k[i]\neq k\right\}$ represents the data of the remaining $K-1$ folds. Note that $S_1\left(\pi_a, \pi_b\right)=\sum_{k=1}^{K}S_1^k \left(\pi_a, \pi_b\right)$ and $S_2\left(\pi_a, \pi_b\right)=\sum_{k=1}^{K}S_2^k \left(\pi_a, \pi_b\right)$.

Because $\hat{\tilde{ m}}^{-k[i]}(\cdot,\cdot)$ is trained using the remaining $K-1$ folds of the data, when we condition on the data of the remaining $K-1$ folds, $S_1^k$ is a sum of i.i.d. random variables and has mean zero:
    \begin{align*}
       &  \mathbb{E}\left[ \sum_{i\in \mathcal{I}_k} \left(\pi_a(X_i,g)-\pi_b(X_i,g)\right)\{1-\tildepi(X_i,g)\} \sum_{g^\prime=1}^{g-1}  \mathds{1}\left(X_i\in[\tilde     c_{g^\prime},\tilde c_{g^\prime+1}]\right) \cdot\right.\\
       &\hspace{10em}
       \left(\hat{\tilde m}^{-k[i]}(X_i,g^\prime)-{\tilde m}(X_i,g^\prime)\right)
        \left.\left(\mathds{1}\left(G_i=g\right)-\mathds{1}\left(G_i=g^{\prime}\right)\frac{{e}_g(X_i)}{{e}_{g^{\prime}}(X_i)}\right)\right]\\
       = \ & 
         \mathbb{E}\left[ \sum_{i\in \mathcal{I}_k}\mathbb{E}\left[ \left(\pi_a(X_i,g)-\pi_b(X_i,g)\right)\{1-\tildepi(X_i,g)\} \sum_{g^\prime=1}^{g-1}  \mathds{1}\left(X_i\in[\tilde c_{g^\prime},\tilde c_{g^\prime+1}]\right) \cdot \right.\right.\\
         &\hspace{10em}\left.
       \left(\hat{\tilde m}^{-k[i]}(X_i,g^\prime)-{\tilde m}(X_i,g^\prime)\right)
        \left.\left(\mathds{1}\left(G_i=g\right)-\mathds{1}\left(G_i=g^{\prime}\right)\frac{{e}_g(X_i)}{{e}_{g^{\prime}}(X_i)}\right)\mid  \mathcal{I}_{-k}, X_i \right]\right] \\
        = \ &\mathbb{E}\left[\sum_{i\in \mathcal{I}_k} \left(\pi_a(X_i,g)-\pi_b(X_i,g)\right)\{1-\tildepi(X_i,g)\} \sum_{g^\prime=1}^{g-1}  \mathds{1}\left(X_i\in[\tilde c_{g^\prime},\tilde c_{g^\prime+1}]\right) 
        \left(\hat{\tilde m}^{-k[i]}(X_i,g^\prime)-{\tilde m}(X_i,g^\prime)\right)\cdot \right.\\
         &\hspace{15em}\left.
       \mathbb{E}\left[\sum_{i\in \mathcal{I}_k}\left(\mathds{1}\left(G_i=g\right)-\mathds{1}\left(G_i=g^{\prime}\right)\frac{{e}_g(X_i)}{{e}_{g^{\prime}}(X_i)}\right)\mid  \mathcal{I}_{-k}, X_i\right]\right] \\
         = \ &\mathbb{E}\left[\sum_{i\in \mathcal{I}_k} \left(\pi_a(X_i,g)-\pi_b(X_i,g)\right)\{1-\tildepi(X_i,g)\} \sum_{g^\prime=1}^{g-1}  \mathds{1}\left(X_i\in[\tilde c_{g^\prime},\tilde c_{g^\prime+1}]\right) \left(\hat{\tilde m}^{-k[i]}(X_i,g^\prime)-{\tilde m}(X_i,g^\prime)\right)\cdot \right.\\
         &\left.\hspace{30em}
         \left\{e_g(X_i)-e_{g^\prime}(X_i)\frac{e_g(X_i)}{e_{g^\prime}(X_i)}\right\}\right]\\
         = \ & 0.
    \end{align*}
For simplicity, we assume the training data is divided into $K$ evenly-sized folds, then we can rewrite the bound on the first term as    
\begin{align*}
         & \sup_{\pi_a,\pi_b \in \Pi}|S_1^k\left(\pi_a, \pi_b\right)|=
         \sup_{\pi_a,\pi_b \in \Pi}\left|S_1^k\left(\pi_a, \pi_b\right)-\mathbb{E}\left[S_1^k\left(\pi_a, \pi_b\right)\right]\right|\\
         = & \frac{1}{K} \sup_{\pi_a,\pi_b \in \Pi}\left| \frac{1}{\frac{n}{K}} \sum_{i\in \mathcal{I}_k}
         \left(\pi_a(X_i,g)-\pi_b(X_i,g)\right)
     \cdot\right.
         \{1-\tildepi(X_i,g)\} \\
         & \hspace{1in}\sum_{g^\prime=1}^{g-1}  \mathds{1}\left(X_i\in[\tilde     c_{g^\prime},\tilde c_{g^\prime+1}]\right)   \left(\hat{\tilde m}^{-k[i]}(X_i,g^\prime)-{\tilde m}(X_i,g^\prime)\right)
        \left.\left(\mathds{1}\left(G_i=g\right)-\mathds{1}\left(G_i=g^{\prime}\right)\frac{{e}_g(X_i)}{{e}_{g^{\prime}}(X_i)}\right)\right.\\
         & -\mathbb{E} \Big[\left(\pi_a(X_i,g)-\pi_b(X_i,g)\right)\{1-\tildepi(X_i,g)\} \\
        & \left.\hspace{.8in} \sum_{g^\prime=1}^{g-1}  \mathds{1}\left(X_i\in[\tilde     c_{g^\prime},\tilde c_{g^\prime+1}]\right) \cdot\right.
       \left(\hat{\tilde m}^{-k[i]}(X_i,g^\prime)-{\tilde m}(X_i,g^\prime)\right)
        \left.\left(\mathds{1}\left(G_i=g\right)-\mathds{1}\left(G_i=g^{\prime}\right)\frac{{e}_g(X_i)}{{e}_{g^{\prime}}(X_i)}\right)\Big]\right|.
\end{align*}

Next, Assumption \ref{ass:rate:general} implies that 
\[   \sup_{x,g}\left\{|\hat{\tilde{ m}}^{-k}(x,g)-{\tilde{ m}}(x,g)|\right\}\leq1\] with probability tending to 1. Consequently, combining this with Assumptions \ref{ass:gen:bo} and \ref{ass:overlap:general}, 
\[
\{1-\tildepi(X_i,g)\} \sum_{g^\prime=1}^{g-1}  \mathds{1}\left(X_i\in[\tilde     c_{g^\prime},\tilde c_{g^\prime+1}]\right) \cdot
       \left(\hat{\tilde m}^{-k[i]}(X_i,g^\prime)-{\tilde m}(X_i,g^\prime)\right)
        \left(\mathds{1}\left(G_i=g\right)-\mathds{1}\left(G_i=g^{\prime}\right)\frac{{e}_g(X_i)}{{e}_{g^{\prime}}(X_i)}\right)
        \]
        is bounded with probability tending to 1.  We can again apply Lemma 2 of \cite{zhou2018offline} by noting that this conditional quantity on $\mathcal{I}_{-k}$ follows i.i.d. $\Gamma_{i}$ in their lemma, and specializing it to the one-dimensional case.
Then, we have the following result $\forall \delta>0$, with probability at least $1-2 \delta$,
\begin{align}
         & K\sup_{\pi_a,\pi_b \in \Pi}|S_1^k\left(\pi_a, \pi_b\right)|=
         K\sup_{\pi_a,\pi_b \in \Pi}\left|S_1^k\left(\pi_a, \pi_b\right)-\mathbb{E}\left[S_1^k\left(\pi_a, \pi_b\right)\right]\right|\nonumber\\
        \leq & o\left(\frac{1}{\sqrt{n}}\right)+\left(54.4 \sqrt{2} \kappa(\Pi_g)+435.2+\sqrt{2 \log \frac{1}{\delta}}\right)\times \nonumber \\
        & \left\{\frac{1}{n/K}\sup_{\pi_a,\pi_b \in \Pi} \mathbbm{E}\Biggl[\left(\pi_a(X_i,g)-\pi_b(X_i,g)\right)^2
        (1-\tildepi(X_i,g))\right.\nonumber\\
        & \quad \left.\left.\sum_{g^\prime=1}^{g-1}  \mathds{1}\left(X_i\in[\tilde     c_{g^\prime},\tilde c_{g^\prime+1}]\right)  \left(\hat{\tilde m}^{-k[i]}(X_i,g^\prime)-{\tilde m}(X_i,g^\prime)\right)^2\left(\mathds{1}\left(G_i=g\right)-\mathds{1}\left(G_i=g^{\prime}\right)\frac{{e}_g(X_i)}{{e}_{g^{\prime}}(X_i)}\right)^2\mid\mathcal{I}_{-k} \right]\right\}^{1/2}\nonumber\\ 
         \leq & o\left(\frac{1}{\sqrt{n}}\right)+\left(54.4 \sqrt{2} \kappa(\Pi_g)+435.2+\sqrt{2 \log \frac{1}{\delta}}\right)\times\nonumber\\
         &\left\{\frac{1}{n/K}\mathbbm{E}\left[   (1-\tildepi(X_i,g))\sum_{g^\prime=1}^{g-1}  \mathds{1}\left(X_i\in[\tilde     c_{g^\prime},\tilde c_{g^\prime+1}]\right) \left(\hat{\tilde m}^{-k[i]}(X_i,g^\prime)-{\tilde m}(X_i,g^\prime)\right)^2\right.\right.\nonumber \\
         & \hspace{2in}\left.\left.\left(\mathds{1}\left(G_i=g\right)-\mathds{1}\left(G_i=g^{\prime}\right)\frac{{e}_g(X_i)}{{e}_{g^{\prime}}(X_i)}\right)^2 \mid\mathcal{I}_{-k} \right]\right\}^{1/2}\nonumber\\ 
         \leq & o\left(\frac{1}{\sqrt{n}}\right)+\frac{1}{\eta^2}\left(54.4 \sqrt{2} \kappa(\Pi_g)+435.2+\sqrt{2 \log \frac{1}{\delta}}\right)\nonumber \\
         & \sqrt{\frac{1}{n/K}\mathbbm{E}\left[ (1-\tildepi(X_i,g))\sum_{g^\prime=1}^{g-1}  \mathds{1}\left(X_i\in[\tilde     c_{g^\prime},\tilde c_{g^\prime+1}]\right) \left(\hat{\tilde m}^{-k[i]}(X_i,g^\prime)-{\tilde m}(X_i,g^\prime)\right)^2\mid\mathcal{I}_{-k} \right]}, \label{equ:S1}
    \end{align}
where the last equality follows from the Assumption \ref{ass:overlap:general}. By Assumption \ref{ass:rate:general}, it follows that 
$$\mathbbm{E}\left[ \left(\hat{\tilde m}^{-k[i]}(X_i,g^{\prime})-{\tilde m}(X_i,g^{\prime})\right)^2\right]=\frac{o(1)}{\left(\frac{K-1}{K} n\right)^{t_1}}=o\left(\frac{1}{n^{t_1}}\right)$$ 
for some $0<t_1<1$. Thus, by Markov's inequality, 
$$\mathbbm{E}\left[ \left(\hat{\tilde m}^{-k[i]}(X_i,g^{\prime})-{\tilde m}(X_i,g^{\prime})\right)^2\mid \mathcal{I}_{-k} \right] = O_p\left( \frac{o(1)}{n^{t_1}}\right).$$ 
Combining this with Equation~\eqref{equ:S1}, we have 
$$\sup_{\pi_a,\pi_b \in \Pi}|S_1^k\left(\pi_a, \pi_b\right)|=O_p\left(  \frac{o(1)}{ n^{1+t_1}}\right)+o\left(\frac{1}{\sqrt{n}}\right)=o_p\left(\frac{1}{\sqrt{n}}\right).$$  
Therefore, 
\begin{equation}
        \sup_{\pi_a,\pi_b \in \Pi}|S_1\left(\pi_a, \pi_b\right)|=\sup_{\pi_a,\pi_b \in \Pi}|\sum_{k=1}^{K}S_1^k\left(\pi_a, \pi_b\right)|\leq\sum_{k=1}^{K}\sup_{\pi_a,\pi_b \in \Pi}|S_1^k\left(\pi_a, \pi_b\right)|=o_p\left(\frac{1}{\sqrt{n}}\right).
\end{equation}
By similar argument, we can also show that 
$$\sup_{\pi_a,\pi_b \in \Pi}|S_2^k\left(\pi_a, \pi_b\right)|= o_p\left(\frac{1}{\sqrt{n}}\right)$$ 
and hence, 
$$\sup_{\pi_a,\pi_b \in \Pi}|S_2\left(\pi_a, \pi_b\right)|= o_p\left(\frac{1}{\sqrt{n}}\right).$$

Next, we bound the third component $S_3$ as follows:
    \begin{align*}
        & \sup_{\pi_a,\pi_b \in \Pi}|S_3\left(\pi_a, \pi_b\right)| \\
        &= \frac{1}{n} \sup_{\pi_a,\pi_b \in \Pi}\left|
           \sum_{i=1}^n \left(\pi_a(X_i,g)-\pi_b(X_i,g)\right)\{1-\tildepi(X_i,g)\} \sum_{g^\prime=1}^{g-1}  \mathds{1}\left(X_i\in[\tilde c_{g^\prime},\tilde c_{g^\prime+1}]\right)\times\right.\\
           &\left.\hspace{10em}\mathds{1}\left(G_i=g^{\prime}\right)\left(\frac{\hat{e}^{-k[i]}_g(X_i)}{\hat{e}^{-k[i]}_{g^{\prime}}(X_i)}-\frac{{e}_g(X_i)}{{e}_{g^{\prime}}(X_i)}\right)\left({\tilde m}(X_i,g^\prime)-\hat{\tilde m}^{-k[i]}(X_i,g^\prime)\right)\right|\\
        & \leq \sum_{g^\prime=1}^{g-1} \frac{1}{n} \sum_{i=1}^n \left|\left(\frac{\hat{e}^{-k[i]}_g(X_i)}{\hat{e}^{-k[i]}_{g^{\prime}}(X_i)}-\frac{{e}_g(X_i)}{{e}_{g^{\prime}}(X_i)}\right)\left({\tilde m}(X_i,g^\prime)-\hat{\tilde m}^{-k[i]}(X_i,g^\prime)\right)\right|\\
        & \leq \sum_{g^\prime=1}^{g-1} \sqrt{ \frac{1}{n} \sum_{i=1}^n \left(\frac{\hat{e}^{-k[i]}_g(X_i)}{\hat{e}^{-k[i]}_{g^{\prime}}(X_i)}-\frac{{e}_g(X_i)}{{e}_{g^{\prime}}(X_i)}\right)^2}
        \sqrt{ \frac{1}{n} \sum_{i=1}^n \left({\tilde m}(X_i,g^\prime)-\hat{\tilde m}^{-k[i]}(X_i,g^\prime)\right)^2},
    \end{align*}
where the last inequality follows from the Cauchy-Schwartz inequality. Taking expectation on both sides yields: 
    \begin{align*}
        & \mathbb{E}\left[\sup_{\pi_a,\pi_b \in \Pi}|S_3\left(\pi_a, \pi_b\right)|\right] \\
         \leq & \sum_{g^\prime=1}^{g-1}\mathbb{E}\left[\sqrt{ \frac{1}{n} \sum_{i=1}^n \left(\frac{\hat{e}^{-k[i]}_g(X_i)}{\hat{e}^{-k[i]}_{g^{\prime}}(X_i)}-\frac{{e}_g(X_i)}{{e}_{g^{\prime}}(X_i)}\right)^2}
        \sqrt{ \frac{1}{n} \sum_{i=1}^n \left({\tilde m}(X_i,g^\prime)-\hat{\tilde m}^{-k[i]}(X_i,g^\prime)\right)^2}\right]\\
         \leq & \sum_{g^\prime=1}^{g-1}\sqrt{\mathbb{E}\left[{ \frac{1}{n} \sum_{i=1}^n \left(\frac{\hat{e}^{-k[i]}_g(X_i)}{\hat{e}^{-k[i]}_{g^{\prime}}(X_i)}-\frac{{e}_g(X_i)}{{e}_{g^{\prime}}(X_i)}\right)^2}
        \right]}
        \sqrt{\mathbb{E}\left[{ \frac{1}{n} \sum_{i=1}^n \left({\tilde m}(X_i,g^\prime)-\hat{\tilde m}^{-k[i]}(X_i,g^\prime)\right)^2}\right]}\\
         = & \sum_{g^\prime=1}^{g-1}
        \sqrt{\mathbb{E}\left[{ \left(\frac{\hat{e}^{-k[i]}_g(X_i)}{\hat{e}^{-k[i]}_{g^{\prime}}(X_i)}-\frac{{e}_g(X_i)}{{e}_{g^{\prime}}(X_i)}\right)^2}
        \right]} \sqrt{\mathbb{E}\left[ {  \left({\tilde m}(X_i,g^\prime)-\hat{\tilde m}^{-k[i]}(X_i,g^\prime)\right)^2}\right]}\\
         \leq & (Q-1)\sqrt{ \frac{o(1)}{\left(\frac{K-1}{K} n\right)}}  = o\left(\frac{1}{\sqrt{n}}\right),
    \end{align*}
where the second inequality again follows from Cauchy-Schwartz, and the third inequality follows from Assumption \ref{ass:rate:general} and the fact that $\hat{\tilde m}^{-k[i]}(\cdot,\cdot)$ and $\hat e^{-k[i]}(\cdot)$ are trained on $\frac{K-1}{K}n$ data points.
Consequently, by Markov's inequality, 
$$\sup_{\pi_a,\pi_b \in \Pi}|S_3\left(\pi_a, \pi_b\right)|= o_p\left(\frac{1}{\sqrt{n}}\right).$$  
Summing up all our bounds completes the proof.
\end{proof}
\subsection{Proof of Theorem \ref{thm:multi:safe:general}}\label{app:proof:thm1}

Given two policies, $\pi_1$ and $\pi_2$ in the same policy class, we define the \textit{utility loss} (or \textit{regret}) of $\pi_1$ relative to $\pi_2$ as the difference in the expected utility between them:
$$R(\pi_1,\pi_2) \ = \ V(\pi_2)-V(\pi_1).$$
Below we state the full version of the result in Theorem~\ref{thm:multi:safe:general}.
\begin{theorem*}[Statistical safety guarantee under multi-cutoff regression discontinuity designs]
Under Assumptions~\ref{ass:general:lipdiff},~\ref{ass:overlap:general},~\ref{ass:gen:bo},~\ref{ass:rate:general}~and~\ref{ass:bdrate},
 define $\hat\pi$ as in Equation~\eqref{equ:multi:estor}. Given that $\tilde{\pi} \in \Piol$ and the true difference model $d(w,x,g,g^\prime) \in \mathcal{M}_d$, for any $\delta,h_1,h_2>0$, there exists a $C_0>0$, $0<N(\delta,h_1,h_2)<\infty$ and universal constants $u_1$ and $u_2$ such that, for all $n\geq N(\delta,h_1,h_2)$, with probability at least $1-4\delta$, the regret of $\hat{\pi}$ relative to the baseline $\tilde{\pi}$ is,
     \begin{align*}
& V(\tilde{\pi})-V(\hat{\pi}) \\
\leq & \frac{1}{\sqrt{n}}\left\{{4\Lambda}\left(1+ \frac{3}{2}u_1 \sqrt{Q}+  \sqrt{2\log \frac{1}{\delta}}\right)+ Q\left(u_2+\sqrt{2 \log \frac{2Q}{\delta}}\right)\sqrt{\max_{g\in\mathcal{Q}}V^\ast_{g}}+ {h_1+h_2} \right\} +\frac{C_0}{\rho_n},
\end{align*}
 where we define
 \begin{equation*}
V^\ast_{g} 
= \sup _{\pi_a(\cdot,g), \pi_b(\cdot,g) \in \Piol}\mathbbm{E}\left[  (\pi_a(X_i,g)- \pi_b(X_i,g))^2\cdot (\left(1-\tilde\pi(X_i,g)\right)
( {\Gamma}^{(1)}_{ig} )^2 + \tilde\pi(X_i,g) 
(  {\Gamma}^{(0)}_{ig}  )^2)\right],
\end{equation*}
 which measures the worst-case variance of evaluating the difference between two policies in $\Piol$ for any specific group $g$, where $ {\Gamma}^{(1)}_{ig} $  and  $ {\Gamma}^{(0)}_{ig} $ are defined in Equation~\eqref{equ:thm:DRscore}.

\end{theorem*}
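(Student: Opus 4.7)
My plan is to reduce the problem to showing that the empirical robust objective uniformly approximates its population analog, and then to invoke the defining optimality of $\hat{\pi}$.

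First, I would exploit the observation that at $\tilde{\pi}$ every term in the decomposition \eqref{equ:gerneral:multi:Vdecom} other than $\mathcal{I}_{\text{iden}}$ vanishes, so $V(\tilde{\pi}, d) = \mathbb{E}[Y]$ for every $d$ and hence $V(\tilde{\pi}) = \min_{d \in \mathcal{M}_d} V(\tilde{\pi}, d)$. Since the true difference lies in $\mathcal{M}_d$ by Assumption \ref{ass:general:lipdiff}, we also have $V(\hat{\pi}) \geq \min_{d \in \mathcal{M}_d} V(\hat{\pi}, d)$. Writing $R(\pi) := \min_{d \in \mathcal{M}_d}\{J(\pi,d) + K(\pi)\}$ for the population robust value and $\hat{R}(\pi) := \min_{d \in \hat{\mathcal{M}}_d}\{\hat{J}(\pi,d) + \hat{K}(\pi)\}$ for its empirical counterpart, these two facts combine into
\begin{equation*}
V(\tilde{\pi}) - V(\hat{\pi}) \leq R(\tilde{\pi}) - R(\hat{\pi}) = [R(\tilde{\pi}) - \hat{R}(\tilde{\pi})] + [\hat{R}(\tilde{\pi}) - \hat{R}(\hat{\pi})] + [\hat{R}(\hat{\pi}) - R(\hat{\pi})].
\end{equation*}
By the definition of $\hat{\pi}$ in \eqref{equ:multi:estor}, the middle bracket is non-positive, so the task reduces to controlling $\sup_{\pi \in \Piol} |R(\pi) - \hat{R}(\pi)|$ simultaneously at $\tilde{\pi}$ and $\hat{\pi}$.

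Second, I would split $R(\pi) - \hat{R}(\pi)$ into three pieces and match each to a lemma from Appendices \ref{sec:pr:general:def:1}--\ref{sec:pr:general:def:2}:
\begin{equation*}
R(\pi) - \hat{R}(\pi) = \underbrace{\bigl[\min_{d \in \mathcal{M}_d} J(\pi,d) - \min_{d \in \mathcal{M}_d} \hat{J}(\pi,d)\bigr]}_{\text{(A)}} + \underbrace{\bigl[\min_{d \in \mathcal{M}_d} \hat{J}(\pi,d) - \min_{d \in \hat{\mathcal{M}}_d} \hat{J}(\pi,d)\bigr]}_{\text{(B)}} + \underbrace{[K(\pi) - \hat{K}(\pi)]}_{\text{(C)}}.
\end{equation*}
Using $|\min_d f(d) - \min_d g(d)| \leq \sup_d |f(d)-g(d)|$, term (A) is controlled by Lemma \ref{lemma:general:T1}, whose envelope bound $|d|\leq 2\Lambda$ applies uniformly across $\mathcal{M}_d$ so the conclusion carries through after taking the supremum in $d$; term (B) is exactly what Lemma \ref{lemma:multi:noise} bounds by $O_p(\rho_n^{-1})$; and term (C) is handled by the triangle chain $\hat{K} \to \tilde{K} \to K$ using Lemma \ref{lemma:feasible} (double-robust/cross-fitting remainder, $o_p(n^{-1/2})$) followed by Lemma \ref{lemma:oracle} (oracle concentration at rate $Q\sqrt{\max_g V^\ast_g/n}$ with explicit $\log(2Q/\delta)$ dependence).

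Finally, applying these three bounds at both $\tilde{\pi}$ and $\hat{\pi}$ doubles the (A) contribution, while (B) and (C) are already uniform in $\pi$. Calibrating the free slack $\delta$ in Lemma \ref{lemma:general:T1} so that the failure probability equals $\delta$ converts its additive slack into $2\Lambda\sqrt{8\log(1/\delta)/n}$, which after doubling gives the $\tfrac{4\Lambda}{\sqrt{n}}(1 + \tfrac{3}{2}u_1\sqrt{Q} + \sqrt{2\log(1/\delta)})$ term; Lemma \ref{lemma:oracle} supplies the $\tfrac{Q}{\sqrt{n}}(u_2 + \sqrt{2\log(2Q/\delta)})\sqrt{\max_g V^\ast_g}$ piece (with $u_2$ absorbing $136\sqrt{2}+435.2$); the $o_p(n^{-1/2})$ residuals from Lemmas \ref{lemma:feasible} and \ref{lemma:oracle} are absorbed into $h_1/\sqrt{n}$ and $h_2/\sqrt{n}$ for $n \geq N(\delta, h_1, h_2)$; and Lemma \ref{lemma:multi:noise}'s $O_p(\rho_n^{-1})$ yields the additive $C_0/\rho_n$. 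A union bound over the four high-probability events gives the failure probability $4\delta$. The main obstacle I anticipate is verifying that Lemma \ref{lemma:general:T1} does give a bound uniform in $d \in \mathcal{M}_d$ (the Rademacher envelope computation uses only $|d|\leq 2\Lambda$, so this should go through but warrants care), together with bookkeeping how the per-group variance terms $V^\ast_g$ aggregate into the $Q\sqrt{\max_g V^\ast_g}$ scaling across the $Q$-dimensional product policy class.
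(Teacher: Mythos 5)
Your proposal is correct in outline and relies on the same three ingredients as the paper's proof (Lemma~\ref{lemma:general:T1} for the $\mathcal{I}_{\text{iden}}$/$\Xi_2$ part, Lemma~\ref{lemma:multi:noise} for the $\widehat{\mathcal{M}}_d$-versus-$\mathcal{M}_d$ gap, and Lemmas~\ref{lemma:oracle}--\ref{lemma:feasible} for the doubly robust term), combined by the same union bound over four events. The structural difference is that you fold the robust minimization into the population functional $R(\pi)=\min_{d\in\mathcal{M}_d}\{J(\pi,d)+K(\pi)\}$ from the outset, whereas the paper keeps the \emph{true} $d$ inside $J$ and $V$, writes $V(\tilde\pi)-V(\hat\pi)=[J(\tilde\pi)-\hat J(\tilde\pi)]+[K(\tilde\pi)-\hat K(\tilde\pi)]+[\hat V(\tilde\pi)-\hat V(\hat\pi)]+[\hat J(\hat\pi)-J(\hat\pi)]+[\hat K(\hat\pi)-K(\hat\pi)]$, and confines every $\min_d$ manipulation to the purely empirical middle term $\hat V(\tilde\pi)-\hat V(\hat\pi)$, where only $\min_{d\in\mathcal{M}_d}\hat J$ and $\min_{d\in\widehat{\mathcal{M}}_d}\hat J$ of the \emph{same} sample average are compared. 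Your route is equivalent, but the paper's arrangement has the advantage that the empirical-process concentration is only ever needed at a single fixed difference function.

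That brings me to the one step whose justification, as written, does not hold: you claim term (A), $\min_{d\in\mathcal{M}_d}J(\pi,d)-\min_{d\in\mathcal{M}_d}\hat J(\pi,d)$, is controlled by $\sup_{d\in\mathcal{M}_d}|J(\pi,d)-\hat J(\pi,d)|$ and that Lemma~\ref{lemma:general:T1} "carries through after taking the supremum in $d$" because of the envelope $|d|\le 2\Lambda$. It does not: $\mathcal{M}_d$ is the full nonparametric class of functions between the pointwise bounds, and the Rademacher complexity of $\{d:|d|\le 2\Lambda\}$ is of constant order (choose $d(1,X_i,\cdot,\cdot)=2\Lambda\,\mathrm{sign}(\varepsilon_i)$), so a supremum over $d$ destroys the $n^{-1/2}$ rate. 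The step is nevertheless salvageable without the supremum: because the policy weights multiplying $d$ in both $J(\pi,\cdot)$ and $\hat J(\pi,\cdot)$ are nonnegative, both minima are attained pointwise at the \emph{same} lower-bound function $B_\ell$, so (A) equals $J(\pi,B_\ell)-\hat J(\pi,B_\ell)$, a single empirical process to which the Rademacher argument of Lemma~\ref{lemma:general:T1} applies verbatim with $d$ replaced by $B_\ell$. You should make this attainment argument explicit (or adopt the paper's decomposition, which sidesteps the issue entirely); with that repair the remaining bookkeeping — the factor of two from applying (A) at both $\tilde\pi$ and $\hat\pi$, the $Q\sqrt{\max_g V^\ast_g}$ aggregation via the per-group union bound giving $\log(2Q/\delta)$, and the absorption of the $o_p(n^{-1/2})$ residuals into $h_1,h_2$ — matches the paper.
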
 

\begin{proof}

By Definitions~\ref{def:J}~and~\ref{def:K}, the regret of $\hat\pi$ relative to $\tilde\pi$ can be decomposed into 
 \begin{equation*}
     \begin{aligned}
R(\hat{\pi}, \tilde{\pi}) = & V(\tilde{\pi})-V(\hat{\pi}) \\
=& V(\tilde{\pi})-\hat{V}(\tilde{\pi})+\hat{V}(\tilde{\pi})-\hat{V}(\hat{\pi})+\hat{V}(\hat{\pi})-V(\hat{\pi}) \\
= & J(\tilde{\pi})-\hat{J}(\tilde{\pi})+ K(\tilde{\pi})-\hat{K}(\tilde{\pi}) +\hat{V}(\tilde{\pi})-\hat{V}(\hat{\pi})+\hat{J}(\hat{\pi})-J(\hat{\pi})+\hat{K}(\hat{\pi})-K(\hat{\pi})\\
\leq & \underbrace{2 \sup _{\pi \in \Pi}\left|\hat{J}(\pi)-{J}(\pi)\right|}_{\cir{1}}
+\underbrace{\hat{V}(\tilde{\pi})-\hat{V}(\hat{\pi})}_{\cir{2}}
+ \underbrace{ K(\tilde{\pi})-\hat{K}(\tilde{\pi})+\hat{K}(\hat{\pi})-K(\hat{\pi})}_{\cir{3}},
\end{aligned}
 \end{equation*}
 where the third equality is because $V(\pi)=J(\pi)+K(\pi)$, $\hat V(\pi)=\hat J(\pi)+\hat K(\pi)$.
 
First, due to Lemma \ref{lemma:general:T1}, we can bound the first term $\cir{1}$, i.e., for any $\delta>0$,
\begin{equation}
 2\sup _{\pi \in \Pi}\left|\hat{J}(\pi)-{J}(\pi)\right|
      \leq  \frac{4\Lambda}{\sqrt{n}}\left(1+ \frac{3}{2}u_1 \sqrt{Q}+  \sqrt{2\log \frac{1}{\delta}}\right),
\end{equation}
with probability at least $1-\delta$, where $u_1$ is a universal constant.


Now we look at the second term $\cir{2}$, $\hat{V}(\tilde{\pi})-\hat{V}(\hat{\pi})$. We can further decompose it as
$$
\begin{aligned}
\hat{V}(\tilde{\pi})-\hat{V}(\hat{\pi}) = 
\hat{V}(\tilde{\pi})-\min_{d \in {\mathcal{M}_d}} \hat{V}(\tilde{\pi},d)
+\min_{d \in {\mathcal{M}_d}} \hat{V}(\hat{\pi},d)-\hat{V}(\hat{\pi})
+\min_{d \in {\mathcal{M}_d}} \hat{V}(\tilde{\pi},d)-\min_{d \in {\mathcal{M}_d}} \hat{V}(\hat{\pi},d).
\end{aligned}
$$
Because the true difference function $d(w,x,g,g^\prime) \in \mathcal{M}_d$, we have $\hat{V}(\hat{\pi}) \geq \min_{d \in {\mathcal{M}_d}} \hat{V}(\hat{\pi},d)$; also, we note that no extrapolation is needed for the baseline policy being used, so that the worst case value equals the true value for $\tilde{\pi}$, i.e., $\hat{V}(\tilde{\pi})=\min_{d \in {\mathcal{M}_d}} \hat{V}(\tilde{\pi},d)$. Combining these two, we have  
$$
\hat{V}(\tilde{\pi})-\min_{d \in {\mathcal{M}_d}} \hat{V}(\tilde{\pi},d)+\min_{d \in {\mathcal{M}_d}} \hat{V}(\hat{\pi},d)-\hat{V}(\hat{\pi})\leq 0.$$
Also, we can see that
\begin{align*}
&\min_{d \in {\mathcal{M}_d}} \hat{V}(\tilde{\pi},d)
-\min_{d \in {\mathcal{M}_d}} \hat{V}(\hat{\pi},d)\\
=
&\min_{d \in {\mathcal{M}_d}} \hat{V}(\tilde{\pi},d)
-
\min_{d \in \widehat{\mathcal{M}}_{d}}  \hat{V}(\tilde{\pi},d)
+
\min_{d \in \widehat{\mathcal{M}}_{d}}  \hat{V}(\hat{\pi},d)-\min_{d \in {\mathcal{M}_d}} \hat{V}(\hat{\pi},d)
+
\min_{d \in \widehat{\mathcal{M}}_{d}}  \hat{V}(\tilde{\pi},d)
-
\min_{d \in \widehat{\mathcal{M}}_{d}}   \hat{V}(\hat{\pi},d)\\
\leq &  
 \   \left|\min_{d \in \widehat{\mathcal{M}}_{d}} \hat{V}(\tilde{\pi},d)-\min_{d \in {\mathcal{M}_d}}\hat{V}(\tilde{\pi},d)\right|+ \left|\min_{d \in \widehat{\mathcal{M}}_{d}} \hat{V}(\hat{\pi},d)-\min_{d \in {\mathcal{M}_d}}\hat{V}(\hat{\pi},d)\right|\\
=  &
 \   \left|\min_{d \in \widehat{\mathcal{M}}_{d}} \hat{J}(\tilde{\pi},d)-\min_{d \in {\mathcal{M}_d}}\hat{J}(\tilde{\pi},d)\right|+ \left|\min_{d \in \widehat{\mathcal{M}}_{d}} \hat{J}(\hat{\pi},d)-\min_{d \in {\mathcal{M}_d}}\hat{J}(\hat{\pi},d)\right|=  O_p \left(\rho_n^{-1}\right),
\end{align*}
where the second to last inequality is because $\hat{\pi}$ maximizes $\min_{d \in \widehat{\mathcal{M}}_{d}}  \hat{V}(\pi,d)$ and the last equality is due to Lemma~\ref{lemma:multi:noise}.  Therefore, we have for any $\delta>0$, there exists $C_0,N(\delta)>0$, if we collect at least $n\geq N(\delta)$ samples, then, with probability at least $1-\delta$,
$$\hat{V}(\tilde{\pi})-\hat{V}(\hat{\pi})\leq\frac{C_0}{\rho_n}.$$

Finally, we bound the remaining term $\cir{3}$ as follows:
\begin{equation}
    \begin{aligned}
        K(\tilde{\pi})-\hat{K}(\tilde{\pi})+\hat{K}(\hat{\pi})-K(\hat{\pi}) &= \hat\Delta\left(\hat\pi,\tilde\pi\right)- \Delta\left(\hat\pi,\tilde\pi\right)\\
        &\leq \left|\hat\Delta\left(\hat\pi,\tilde\pi\right)- \Delta\left(\hat\pi,\tilde\pi\right)\right|\\
        & \leq \sup _{\pi_a, \pi_b \in \Pi}\left|\hat{\Delta}\left(\pi_a, \pi_b\right)-{\Delta}\left(\pi_a, \pi_b\right)\right|\\
        & \leq \sup _{\pi_a, \pi_b \in \Pi}\left|\tilde{\Delta}\left(\pi_a, \pi_b\right)-{\Delta}\left(\pi_a, \pi_b\right)\right| + \sup _{\pi_a, \pi_b \in \Pi}\left|\hat{\Delta}\left(\pi_a, \pi_b\right)-\tilde{\Delta}\left(\pi_a, \pi_b\right)\right|.
    \end{aligned}
\end{equation}
Following Lemma \ref{lemma:oracle}, we have for any $\delta,h_1>0$, there exists $0<N(h_1)<\infty$ and universal constant $0<u_2<\infty$, if we collect at least $n\geq N(h_1)$ samples, then, with probability at least $1- \delta$,
$$
\sup _{\pi_a, \pi_b \in \Pi}\left|\tilde{\Delta}\left(\pi_a, \pi_b\right)-\Delta\left(\pi_a, \pi_b\right)\right| \leq
 Q\left(u_2+\sqrt{2 \log \frac{2Q}{\delta}}\right) \sqrt{\frac{\max_{g\in\mathcal{Q}}V^\ast_{g}}{n}}+\frac{h_1}{\sqrt{n}}.
$$
Also, following Lemma \ref{lemma:feasible}, we have for any $\delta,h_2>0$, there exists $0<N(\delta,h_2)<\infty$, if we collect at least $n\geq N(\delta,h_2)$ samples, then, with probability at least $1-\delta$,
$$\sup _{\pi_a, \pi_b \in \Pi}\left|\hat{\Delta}\left(\pi_a, \pi_b\right)-\tilde{\Delta}\left(\pi_a, \pi_b\right)\right|\leq \frac{h_2}{\sqrt{n}}.$$


Then by applying the union bound, for any $\delta,h_1,h_2>0$, there exists $C_0>0$, $0<N(\delta,h_1,h_2)<\infty$ and universal constants $u_1$ and $u_2$ such that, with probability at least $1-4\delta$,
 \begin{equation*}
     \begin{aligned}
R(\hat{\pi}, \tilde{\pi}) \leq
\frac{4\Lambda}{\sqrt{n}}\left(1+ \frac{3}{2}u_1 \sqrt{Q}+  \sqrt{2\log \frac{1}{\delta}}\right)+ 
 Q\left(u_2+\sqrt{2 \log \frac{2Q}{\delta}}\right) \sqrt{\frac{\max_{g\in\mathcal{Q}}V^\ast_{g}}{n}}+ \frac{h_1+h_2}{\sqrt{n}}+\frac{C_0}{\rho_n}
\end{aligned}
 \end{equation*}
provided we collect at least $n\geq N(\delta,h_1,h_2)$ samples.

\end{proof}

\subsection{Proof of Theorem \ref{thm:multi:optimalgap:general}}\label{app:proof:thm2}

Below we state a complete version of the result given in Theorem~\ref{thm:multi:optimalgap:general}.
\begin{theorem*}[Empirical optimality gap under multi-cutoff regression discontinuity designs]
Under Assumptions~\ref{ass:general:lipdiff},~\ref{ass:overlap:general},~\ref{ass:gen:bo},~\ref{ass:rate:general}~and~\ref{ass:bdrate},
 define $\hat\pi$ as in Equation~\eqref{equ:multi:estor}. Given that $\pi^*$ is the optimal policy in $\Piol$ and the true difference function $d(w,x,g,g^\prime) \in \mathcal{M}_d$, for any $\delta,h_1,h_2>0$, there exists $C_0>0$, $0<N(\delta,h_1,h_2)<\infty$ and universal constants $u_1$ and $u_2$ such that, for all $n\geq N(\delta,h_1,h_2)$, with probability at least $1-4\delta$, the regret of $\hat{\pi}$ relative to the optimal policy $\pi^{*}$ is,
\begin{equation*}
    \begin{aligned}
V(\pi^\ast)-V(\hat{\pi}) \leq &\ 
\frac{1}{\sqrt{n}}\left\{{4\Lambda}\left(1+ \frac{3}{2}u_1 \sqrt{Q}+  \sqrt{2\log \frac{1}{\delta}}\right)+ Q\left(u_2+\sqrt{2 \log \frac{2Q}{\delta}}\right)\sqrt{\max_{g\in\mathcal{Q}}V^\ast_{g}}+ {h_1+h_2} \right\}\\
& + \frac{2}{n}\sum_{i=1}^{n} \max _{w \in \{0,1\},g,g^\prime\in\mathcal{Q}}
              \lambda_{wgg^\prime} |X_i-\tilde c_{{g^\ast}}|+\frac{C_0}{\rho_n},
    \end{aligned}
\end{equation*}
 where ${g^\ast}=w(g\vee g^\prime)+(1-w)(g\wedge g^\prime)$ denotes the nearest cutoff index used to extrapolate $d(w,x,g,g^\prime)$.
\end{theorem*}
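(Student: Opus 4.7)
\textbf{Proof proposal for Theorem \ref{thm:multi:optimalgap:general}.} The plan is to reuse the decomposition machinery developed for Theorem \ref{thm:multi:safe:general} but to replace the argument that exploited $\tilde{\pi}$ requiring no extrapolation with an argument that bounds the policy-gap through the width of the partial-identification intervals. Writing $V = J + K$ and $\hat{V} = \hat{J} + \hat{K}$ in the notation of Definitions \ref{def:J} and \ref{def:K}, decompose
\begin{equation*}
V(\pi^{\ast}) - V(\hat{\pi}) \;\leq\; 2\sup_{\pi\in\Pi}|\hat{J}(\pi) - J(\pi)| \;+\; \bigl[\hat{V}(\pi^{\ast}) - \hat{V}(\hat{\pi})\bigr] \;+\; \bigl[K(\pi^{\ast}) - \hat{K}(\pi^{\ast}) + \hat{K}(\hat{\pi}) - K(\hat{\pi})\bigr].
\end{equation*}
The first and third terms are handled by Lemma \ref{lemma:general:T1} and by combining Lemmas \ref{lemma:oracle} and \ref{lemma:feasible}, respectively, producing exactly the same $\Lambda$-, $Q$-, and $V_g^{\ast}$-dependent terms as in Theorem \ref{thm:multi:safe:general}; this part of the argument is essentially a cut-and-paste.

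The novel work lies in the middle term $\hat{V}(\pi^{\ast}) - \hat{V}(\hat{\pi})$. In Theorem \ref{thm:multi:safe:general} we used the fact that $\hat{V}(\tilde{\pi})$ coincides with its worst case because no extrapolation is required at the baseline cutoffs, but this no longer holds at $\pi^{\ast}$. Instead, I will insert the worst-case objective and write
\begin{equation*}
\hat{V}(\pi^{\ast}) - \hat{V}(\hat{\pi}) = \underbrace{\bigl[\hat{V}(\pi^{\ast}) - \min_{d\in\widehat{\mathcal{M}}_d}\hat{V}(\pi^{\ast},d)\bigr]}_{\text{(I)}} + \underbrace{\bigl[\min_{d\in\widehat{\mathcal{M}}_d}\hat{V}(\pi^{\ast},d) - \min_{d\in\widehat{\mathcal{M}}_d}\hat{V}(\hat{\pi},d)\bigr]}_{\text{(II)} \leq 0} + \underbrace{\bigl[\min_{d\in\widehat{\mathcal{M}}_d}\hat{V}(\hat{\pi},d) - \hat{V}(\hat{\pi})\bigr]}_{\text{(III)}}.
\end{equation*}
Term (II) is non-positive by the definition of $\hat{\pi}$ as the empirical worst-case maximizer. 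For term (III), the true $d$ lies in $\mathcal{M}_d$ by Assumption \ref{ass:general:lipdiff}, and swapping $\mathcal{M}_d$ for $\widehat{\mathcal{M}}_d$ costs at most $O_p(\rho_n^{-1})$ by Lemma \ref{lemma:multi:noise}, so (III) contributes the $C_0/\rho_n$ term.

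The main obstacle is bounding (I), which measures the width of the partial-identification interval at $\pi^{\ast}$. Using $\hat{V}(\pi^{\ast}) = \hat{J}(\pi^{\ast}) + \hat{\Xi}_{\text{DR}}(\pi^{\ast}) + \hat{\Xi}_2(\pi^{\ast}, d_{\text{true}})$ and the definition of the empirical lower bound $\hat{B}_\ell(w,x,g,g')$ in Equation \eqref{equ:genmulti:bound:DB}, the worst case differs from the plug-in only in the $\hat{\Xi}_2$ component. A pointwise upper bound on the gap between the true $d$ and the worst-case $d\in\widehat{\mathcal{M}}_d$ is $\hat{B}_u(w,X_i,g,g') - \hat{B}_\ell(w,X_i,g,g') = 2\lambda_{wgg'}|X_i - x'|$ as $x' \to \tilde{c}_{g^{\ast}}$, after absorbing an additional $O_p(\rho_n^{-1})$ by swapping $\widehat{\mathcal{M}}_d$ for $\mathcal{M}_d$ via Lemma \ref{lemma:multi:noise}. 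Taking the maximum over $(w,g,g')$ and averaging over $i$ then yields the additive term
\begin{equation*}
\frac{2}{n}\sum_{i=1}^{n}\max_{w\in\{0,1\},\,g,g'\in\mathcal{Q}}\lambda_{wgg'}|X_i - \tilde{c}_{g^{\ast}}|.
\end{equation*}
Combining all pieces via a union bound over the four high-probability events (as in Theorem \ref{thm:multi:safe:general}) then delivers the stated bound. The delicate step will be to keep the direction of the identification-width inequality correct for both the $w=1$ and $w=0$ halves of $\Xi_2$, since the relevant reference group index $g^{\ast} = w(g\vee g')+(1-w)(g\wedge g')$ flips with the treatment label; handling both uniformly is why the $\max$ over $(w,g,g')$ naturally appears inside the sum.
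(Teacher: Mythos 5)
Your proposal is correct and follows essentially the same route as the paper: the same outer decomposition into the $\hat{J}$, $\hat{V}$, and $\hat{K}$ pieces, the same reuse of Lemmas \ref{lemma:general:T1}, \ref{lemma:oracle}, \ref{lemma:feasible}, and \ref{lemma:multi:noise}, and the same treatment of the middle term via the worst-case objective, with the identification-interval width $B_u - B_\ell = 2\lambda_{wgg'}|X_i - \tilde{c}_{g^\ast}|$ producing the extra additive term. The only cosmetic difference is that you insert $\min_{d\in\widehat{\mathcal{M}}_d}$ first and then swap back to $\mathcal{M}_d$, whereas the paper subtracts $\min_{d\in\mathcal{M}_d}\hat{V}(\pi^\ast,d)$ first and then inserts the empirical class; the two orderings yield the same bound.
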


\begin{proof}
By Definitions~\ref{def:J}~and~\ref{def:K}, the regret of $\hat\pi$ relative to $\pi^*$ is 
  \begin{equation*}
        \begin{aligned}
R\left(\hat{\pi}, \pi^{*}\right) &=V\left(\pi^{*}\right)-V(\hat{\pi}) \\
&=V\left(\pi^{*}\right)-\hat{V}\left(\pi^{*}\right)+\hat{V}\left(\pi^{*}\right)-\hat{V}(\hat{\pi})+\hat{V}(\hat{\pi})-V(\hat{\pi}) \\
& =  J(\pi^*)-\hat{J}(\pi^*)+ K(\pi^*)-\hat{K}(\pi^*) +\hat{V}({\pi^*})-\hat{V}(\hat{\pi})+\hat{J}(\hat{\pi})-J(\hat{\pi})+\hat{K}(\hat{\pi})-K(\hat{\pi})\\
& \leq \underbrace{2 \sup _{\pi \in \Pi}\left|\hat{J}(\pi)-{J}(\pi)\right|}_{\cir{1}}+
\underbrace{\hat{V}({\pi^*})-\hat{V}(\hat{\pi})}_{\cir{2}}+ \underbrace{K(\pi^*)-\hat{K}(\pi^*)+\hat{K}(\hat{\pi})-K(\hat{\pi})}_{\cir{3}}
\end{aligned}
  \end{equation*}

We can similarly bound the terms $\cir{1}$ and $\cir{3}$ following the argument in  the proof of Theorem~\ref{thm:multi:safe:general}. We now turn to  term $\cir{2}$. Note that we have the following decomposition:
\begin{align}
 & \hat{V}\left(\pi^*\right)-\hat{V}(\hat{\pi})\nonumber\\
 =&\min_{d \in {\mathcal{M}_d}} \hat{V}(\pi^\ast,d)-\hat{V}(\hat{\pi})\nonumber\\
 &- \inf_{d \in {\mathcal{M}_d}} \frac{1}{n}\sum_{i=1}^{n} 
 \left[\sum_{g=1}^{Q} \mathds{1}\left(G_i=g\right)\cdot \pi^\ast(X_i,g)\left(1-\tilde\pi(X_i,g)\right) \sum_{g^\prime=1}^{g-1} \mathds{1}\left(X_i\in[\tilde c_{g^\prime}, \tilde c_{g^\prime+1}]\right)  d(1,X_i,g,g^{\prime}) \right.\nonumber\\
     &\qquad + \left. \sum_{g=1}^{Q} \mathds{1}\left(G_i=g\right)\cdot \left(1-\pi^\ast(X_i,g)\right)\tilde\pi(X_i,g) \sum_{g^\prime=g+1}^{Q} \mathds{1}\left(X_i\in[\tilde c_{g^\prime-1}, \tilde c_{g^\prime}]\right)   d(0,X_i,g,g^{\prime})\right]\nonumber\\
 &+\frac{1}{n}\sum_{i=1}^{n} 
 \left[\sum_{g=1}^{Q} \mathds{1}\left(G_i=g\right)\cdot \pi^\ast(X_i,g)\left(1-\tilde\pi(X_i,g)\right) \sum_{g^\prime=1}^{g-1} \mathds{1}\left(X_i\in[\tilde c_{g^\prime}, \tilde c_{g^\prime+1}]\right)  d(1,X_i,g,g^{\prime}) \right.\nonumber\\
     &\qquad + \left. \sum_{g=1}^{Q} \mathds{1}\left(G_i=g\right)\cdot \left(1-\pi^\ast(X_i,g)\right)\tilde\pi(X_i,g) \sum_{g^\prime=g+1}^{Q} \mathds{1}\left(X_i\in[\tilde c_{g^\prime-1}, \tilde c_{g^\prime}]\right)   d(0,X_i,g,g^{\prime})\right],\label{equ:proof:Vdiff}
\end{align}
where we obtain the RHS quantity by replacing the true model $d$ in $ \hat{V}\left(\pi^*\right)$ with the worst case value in the partially identified model class $ {\mathcal{M}_d}$. To deal with the first term in Equation~\eqref{equ:proof:Vdiff}, we show that 
\begin{align*}
    & \min_{d \in {\mathcal{M}_d}} \hat{V}(\pi^\ast,d)-\hat{V}(\hat{\pi})\\
    = & \min_{d \in {\mathcal{M}_d}} \hat{V}(\pi^\ast,d)
    -\min_{d \in {\widehat{\mathcal{M}}_d}} \hat{V}(\pi^\ast,d)
    + \min_{d \in {\widehat{\mathcal{M}}_d}} \hat{V}(\pi^\ast,d)
    - \min_{d \in {\widehat{\mathcal{M}}_d}} \hat{V}(\hat{\pi},d)
    + \min_{d \in {\widehat{\mathcal{M}}_d}} \hat{V}(\hat{\pi},d)
    -\hat{V}(\hat{\pi}) \\
    \leq &  
 \   \left|\min_{d \in {\mathcal{M}_d}}\hat{V}({\pi}^\ast,d)-\min_{d \in \widehat{\mathcal{M}}_{d}} \hat{V}({\pi}^\ast,d)\right|+ 0+ \left|\min_{d \in \widehat{\mathcal{M}}_{d}} \hat{V}(\hat{\pi},d)-\min_{d \in {\mathcal{M}_d}}\hat{V}(\hat{\pi},d)\right|\\
=  &
 \   \left|\min_{d \in {\mathcal{M}_d}}\hat{J}({\pi}^\ast,d)-\min_{d \in \widehat{\mathcal{M}}_{d}} \hat{J}({\pi}^\ast,d)\right|+ \left|\min_{d \in \widehat{\mathcal{M}}_{d}} \hat{J}(\hat{\pi},d)-\min_{d \in {\mathcal{M}_d}}\hat{J}(\hat{\pi},d)\right|\\
    \leq &  O_p(\rho_n^{-1}).
\end{align*}
where the second inequality is because  $\hat{\pi}$ maximizes $\min_{d \in \widehat{\mathcal{M}}_{d}}  \hat{V}(\pi,d)$ and $\hat{V}(\hat{\pi})\geq\min_{d \in {\mathcal{M}_d}}\hat{V}(\hat{\pi},d)$, and the last equality is due to Lemma~\ref{lemma:multi:noise}. Combining the above, we thus have, 
\begin{align*}
& \hat{V}\left(\pi^*\right)-\hat{V}(\hat{\pi}) \\
\leq & \sup_{d \in {\mathcal{M}_d}}\frac{1}{n}\sum_{i=1}^{n} 
 \left[\sum_{g=1}^{Q} \mathds{1}\left(G_i=g\right)\cdot \pi^\ast(X_i,g)\left(1-\tilde\pi(X_i,g)\right) \sum_{g^\prime=1}^{g-1} \mathds{1}\left(X_i\in[\tilde c_{g^\prime}, \tilde c_{g^\prime+1}]\right)  d(1,X_i,g,g^{\prime}) \right.\\
&\qquad \qquad + \left. \sum_{g=1}^{Q} \mathds{1}\left(G_i=g\right)\cdot \left(1-\pi^\ast(X_i,g)\right)\tilde\pi(X_i,g) \sum_{g^\prime=g+1}^{Q} \mathds{1}\left(X_i\in[\tilde c_{g^\prime-1}, \tilde c_{g^\prime}]\right)   d(0,X_i,g,g^{\prime})\right]\\
&- \inf_{d \in {\mathcal{M}_d}} \frac{1}{n}\sum_{i=1}^{n} 
 \left[\sum_{g=1}^{Q} \mathds{1}\left(G_i=g\right)\cdot \pi^\ast(X_i,g)\left(1-\tilde\pi(X_i,g)\right) \sum_{g^\prime=1}^{g-1} \mathds{1}\left(X_i\in[\tilde c_{g^\prime}, \tilde c_{g^\prime+1}]\right)  d(1,X_i,g,g^{\prime}) \right.\\
&\qquad \qquad + \left. \sum_{g=1}^{Q} \mathds{1}\left(G_i=g\right)\cdot \left(1-\pi^\ast(X_i,g)\right)\tilde\pi(X_i,g) \sum_{g^\prime=g+1}^{Q} \mathds{1}\left(X_i\in[\tilde c_{g^\prime-1}, \tilde c_{g^\prime}]\right)   d(0,X_i,g,g^{\prime})\right]\\
& +O_p(\rho_n^{-1})\\
\leq & \ \frac{1}{n}\sum_{i=1}^{n} \max _{w \in \{0,1\},g,g^\prime\in\mathcal{Q}}\left\{ \sup_{d \in {\mathcal{M}_d}}d(w,X_i,g,g^{\prime})- \inf_{d \in {\mathcal{M}_d}}d(w,X_i,g,g^{\prime})\right\}+O_p(\rho_n^{-1})\\
= & \ \frac{1}{n}\sum_{i=1}^{n} \max _{w \in \{0,1\},g,g^\prime\in\mathcal{Q}}\left\{ B_{ u}(w,X_i,g,g^\prime)- B_{\ell}(w,X_i,g,g^\prime)\right\}+O_p(\rho_n^{-1}),
\end{align*}
where the point-wise upper and lower bounds, $ B_{ u}(w,X_i,g,g^\prime)$ and $B_{\ell}(w,X_i,g,g^\prime)$, are defined in Equation~\eqref{equ:genmulti:bound}. Plugging in their explicit forms, we get the bound for $\hat{V}\left(\pi^*\right)-\hat{V}(\hat{\pi})$.

Finally, by applying the union bound, for any $\delta,h_1,h_2>0$, there exists $C_0>0$, $0<N(\delta,h_1,h_2)<\infty$ and universal constants $u_1$ and $u_2$ such that, with probability at least $1-4\delta$,
     \begin{align*}
R({\pi}^\ast, \tilde{\pi}) \leq &
\frac{4\Lambda}{\sqrt{n}}\left(1+ \frac{3}{2}u_1 \sqrt{Q}+  \sqrt{2\log \frac{1}{\delta}}\right)+ 
 Q\left(u_2+\sqrt{2 \log \frac{2Q}{\delta}}\right) \sqrt{\frac{\max_{g\in\mathcal{Q}}V^\ast_{g}}{n}}+ \frac{h_1+h_2}{\sqrt{n}}\\
& + \frac{2}{n}\sum_{i=1}^{n} \max _{w \in \{0,1\},g,g^\prime\in\mathcal{Q}}
              \lambda_{wgg^\prime} |X_i-\tilde c_{{g^\ast}}|+\frac{C_0}{\rho_n}
\end{align*}
provided we collect at least $n\geq N(\delta,h_1,h_2)$ samples, where ${g^\ast}=w(g\vee g^\prime)+(1-w)(g\wedge g^\prime)$ denotes the nearest cutoff point extrapolated to $X_i$.

\end{proof}

\end{document}